\newcommand{\bmb}[1]{\bar{\bm{#1}}}
\newcommand{\bmt}[1]{\tilde{\bm{#1}}}
\renewcommand{\vec}[1]{\bm #1}
\newcommand{\rvec}[1]{\mathbf{#1}}
\newcommand{\mat}[1]{\bm #1}
\newcommand{\matb}[1]{\bmb #1}
\newcommand{\matt}[1]{\bmt #1}
\newcommand{\rmat}[1]{\mathbf{#1}}
\newcommand{\opvec}{\operatorname{vec}}
\newcommand{\bb}[1]{\mathbb{#1}}
\renewcommand{\cal}[1]{\mathcal{#1}}
\newcommand{\bfit}[1]{\textbf{\textit{#1}}}
\newcommand{\Tr}{\operatorname{Tr}}
\newcommand{\ip}[2]{\left<{#1},{#2}\right>}
\newcommand{\R}{\mathbb{R}}
\newcommand{\C}{\mathbb{C}}
\renewcommand{\th}{\text{th}}
\newcommand{\T}{\mathsf{T}}
\renewcommand{\H}{\mathsf{H}}
\renewcommand{\st}{\text{s.t.}}
\newcommand{\stp}{\hfill $\square$}
\newcommand{\captext}[1]{\texorpdfstring{#1}{}}
\definecolor{hl-bg-color}{RGB}{255,255,215}
\definecolor{new-magenta}{RGB}{255,0,255}
\newtheorem{theorem}{{Theorem}}
\newtheorem{proposition}{{Proposition}}
\newtheorem{lemma}{{Lemma}}
\newtheorem{fact}{{Fact}}
\newtheorem{remark}{{Remark}}
\newtheorem{insight}{{Insight}}
\newtheorem{example}{{Example}}
\newtheorem{method}{{Method}}
\newcommand{\tabincell}[2]{\begin{tabular}{@{}#1@{}}#2\end{tabular}}
\newcommand{\quotemark}[1]{“#1”}
\newcommand{\defeq}{\coloneqq}
\DeclareMathOperator*{\argmax}{argmax}
\DeclareMathOperator*{\argmin}{argmin}
\DeclareMathOperator*{\diag}{diag}
\begin{document}

\newpage

\title{Robust Waveform Design for Integrated Sensing and Communication}

\author{Shixiong Wang, 
Wei Dai,
Haowei Wang,
and Geoffrey Ye Li,~\IEEEmembership{Fellow,~IEEE}
\thanks{

S. Wang, W. Dai, and G. Li are with the Department of Electrical and Electronic Engineering, Imperial College London, London SW7 2AZ, United Kingdom (E-mail: s.wang@u.nus.edu; wei.dai1@imperial.ac.uk; geoffrey.li@imperial.ac.uk).

H. Wang is with Rice-Rick Digitalization, Singapore 308900 (E-mail: haowei\_wang@ricerick.com).
}
\thanks{This work is supported by the UK Department for Science, Innovation and Technology under the Future Open Networks Research Challenge project TUDOR (Towards Ubiquitous 3D Open Resilient Network). 
}
}

\maketitle

\begin{abstract}
Integrated sensing and communication (ISAC), which enables hardware, resources (e.g., spectra), and waveforms sharing, is becoming a key feature in future-generation communication systems.
This paper investigates performance characterization and waveform design for ISAC systems when the underlying true communication channels are not accurately known. 
With uncertainty in a nominal communication channel, the nominal Pareto frontier of the sensing and communication performances cannot represent the true performance trade-off of a real-world operating ISAC system.
Therefore, this paper portrays the robust (i.e., conservative) Pareto frontier considering the uncertainty in the communication channel. 
To be specific, the lower bound of the true (but unknown) Pareto frontier is investigated, technically by studying robust waveform design problems that find the best waveforms under the worst-case channels. 
The robust waveform design problems examined in this paper are shown to be non-convex and high-dimensional, which cannot be solved using existing optimization techniques. 
As such, we propose a computationally efficient solution framework to approximately solve them. 
Simulation results show that by solving the robust waveform design problems, the lower bound of the true but unknown Pareto frontier, which characterizes the sensing-communication performance trade-off under communication channel uncertainty, can be obtained.
\end{abstract}

\begin{keywords}
ISAC, Performance Characterization, Robust Waveform Design, Non-Convex Optimization.
\end{keywords}

\section{Introduction}\label{sec:introdction}
\IEEEPARstart{I}{{ntegrated}} sensing and communication (ISAC) is one of the enabling technologies for the sixth-generation (6G) communications. It uses one single hardware system to simultaneously realize the sensing and communication functions. This integration is able to improve spectral efficiency, reduce platform size, and control power consumption. From the signal processing perspective, one of the main features of ISAC is that the same transmitted waveform, called dual-functional waveform, is used for both sensing and communication functions \cite{sturm2011waveform,liu2020joint,zhang2021overview,liu2022survey,zhou2022integrated}. 
This paper is concerned with performance characterization and waveform design for ISAC systems.

\subsection{Performance Characterization and Waveform Design}
Performance characterization for ISAC contains two connotations: 1) to evaluate the performances of an ISAC system, in terms of sensing and communication, at given dual-functional waveforms; 2) to explore the performance boundary by depicting the Pareto frontier of sensing versus communication. Typical performance measures for sensing include the Cramér--Rao bounds for the estimated parameters \cite{liu2022survey,ferguson1996course} and the sensing mutual information between received signals and sensing channels (i.e., sensing rate) \cite{ouyang2022performance,ouyang2023integrated,xie2024sensing}, while those for communication include the distortion minimum mean-squared error \cite{kumari2019adaptive} and the mutual information between the transmitted and the received signals (i.e., channel capacity) \cite{liu2022survey,cover2006elements}. Waveform design for ISAC, as a technical aspect of performance characterization, aims to depict the Pareto frontier (i.e., to achieve the performance boundary) through finding optimal dual-functional waveforms at different trade-off levels because an optimal waveform for sensing is not necessarily optimal for communication \cite{liu2022survey,ouyang2023integrated,xiong2023fundamental}, and vice versa. Supposing that an ISAC system aims to minimize the sensing metric while maximizing the communication metric, e.g., the Cramér--Rao bound for sensing and the achievable sum-rate for communication, the performance Pareto frontier is shown in Fig. \ref{fig:pareto-frontiers}; cf., e.g., \cite[Fig.~3]{xiong2023fundamental}. Every point on the Pareto frontier is associated with at least one optimal dual-functional waveform that balances the performances between sensing and communication.
\begin{figure}[htbp]
	\centering
	\includegraphics[height=3.55cm]{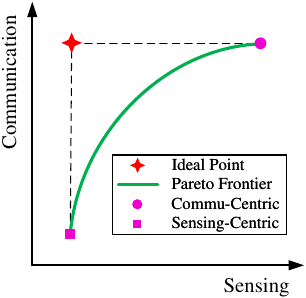}
	\caption{The Pareto frontier represents the performance trade-off between sensing and communication. For a given ISAC system and specified sensing-communication performance measures, the Pareto frontier is fixed, which is obtained through waveform design. For communication-centric waveform design, the communication performance reaches its best level, while for sensing-centric waveform design, the sensing performance arrives at its limit.}
	\label{fig:pareto-frontiers}
\end{figure}

According to the performance priority, as well as the underlying system and signal features, existing ISAC waveform design methods can be categorized into the following three classes \cite{mishra2019toward,zhang2021overview}, \cite[Fig.~1]{zhou2022integrated}.
\begin{itemize}
    \item \textit{Communication-centric waveform design}: In this class, the communication performance is guaranteed to be (nearly) perfect, with a secondary focus on optimizing the sensing performance. Typically, communication base stations and waveforms are used to realize the sensing function because the sensing information can be extracted from communication electromagnetic waves bounced back from environments and targets. While the communication performance can be virtually unaffected, the sensing performance may be limited and difficult to further enhance. Classical representatives of this category encompass the orthogonal frequency division multiplexing (OFDM) waveforms \cite{aditya2022sensing,wei2023waveform}, the orthogonal time frequency space (OTFS) waveforms \cite{gaudio2020effectiveness}, etc.
    
    \item \textit{Sensing-centric waveform design}: In this class, the sensing performance is guaranteed to be (nearly) perfect, with a secondary focus on optimizing the communication performance. Typically, sensing base stations (e.g., radar) and waveforms are used to realize the communication function; to be specific, communication information is modulated onto sensing waveforms. Since these waveforms are (almost) optimal for sensing, the sensing performance can remain approximately uninfluenced, however, the communication performance may be restricted; for example, the communication data rate cannot be sufficiently high. Classical representatives of this category include the chirp waveforms \cite{ouyang2016orthogonal}, the index-modulation-based waveforms \cite{huang2020majorcom}, etc; see also, e.g., \cite{ahmed2023sensing}.
    
    \item \textit{Joint waveform design}: To fully balance the communication and sensing performances, i.e., to eliminate the inherent performance limits on either of them, ISAC systems can be independently designed and realized without relying on existing hardware and waveforms; see, e.g., \cite{liu2018toward,xiong2023fundamental}. For an extensive review, see, e.g., \cite[Sec.~V]{zhang2021overview}, \cite[Sec.~IV]{zhou2022integrated}.
\end{itemize}
Intuitively, joint waveform design portrays the Pareto frontier in Fig. \ref{fig:pareto-frontiers}, while communication-centric and sensing-centric designs delineate the two extreme points on the frontier. In terms of design methodology, the existing methods can also be categorized into two main streams. The first stream directly designs the dual-functional waveforms through optimizing and constraining various performance measures \cite{xiong2023fundamental,guo2023bistatic,wei2023waveform}, while the second stream finds the balanced waveforms through similarity matching with known benchmark waveforms \cite{liu2018toward}.

\subsection{Problem Statement, Research Aims, and Related Works}\label{subsec:problem-statement}
The existing ISAC waveform design methods face the following major drawback: In practice, the \textbf{true} communication channel is unknown, and only the \textbf{nominal} (i.e., estimated or approximated) channel is available. This point can be understood from three aspects. 
\begin{itemize}
    \item First, the communication channel is time-selective (i.e., time-varying) in a frame. 
    \item Second, the channel model used might be inexact. 
    \item Third, any statistical method cannot give exact channel estimation when training samples are limited. 
\end{itemize}
As a consequence, the nominal Pareto frontier obtained under the nominal communication channel cannot characterize the actual sensing-communication performance trade-off; see Fig. \ref{fig:true-frontiers-a}. To be specific, the true Pareto frontier associated with the actual communication channel may lie above, across, or below the nominal Pareto frontier. When the actual channel varies over time, the true Pareto frontier may even not remain fixed. Since the true channel is unknown and even random (e.g., time-varying), characterizing the upper and lower bounds of the actual Pareto frontier(s) is of natural importance and significance; see Fig. \ref{fig:true-frontiers-b}. However, such uncertainty-aware performance characterization has not been discussed in the existing ISAC literature.

\begin{figure}[htbp]
	\centering
	\subfigure[Nominal and true frontiers]{
	 	\includegraphics[height=3.55cm]{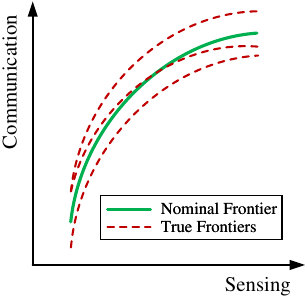}
        \label{fig:true-frontiers-a}
	}
	\subfigure[Bounds of true frontiers]{
	 	\includegraphics[height=3.55cm]{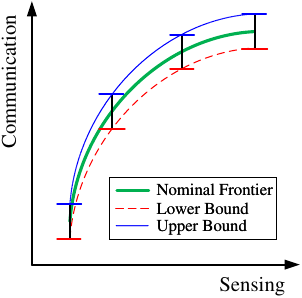}
        \label{fig:true-frontiers-b}
	}
	\caption{Pareto frontiers of sensing-communication performances. The true Pareto frontier is unknown and even time-varying so the nominal frontier cannot characterize the actual performance trade-off; NB: every realization of the random communication channel corresponds to one true Pareto frontier. However, we can identify the lower and upper bounds of the true frontier(s).}
	\label{fig:true-frontiers}
\end{figure}

As an initiative work on ISAC performance characterization under communication channel uncertainty, this paper focuses on exploring the lower bound of the actual Pareto frontier(s), that is, the \textbf{robust} (i.e., \textbf{conservative}) performance boundary. To achieve such conservativeness, robust waveform design methods that utilize the worst-case information of the communication channel are leveraged.

Literature on robust dual-functional waveform design for ISAC is rather lacking, and the highly related records include only \cite{liao2023robust} and \cite{li2023optimal}. In \cite{liao2023robust}, a robust beamforming method for ISAC that handles the uncertainties in the communication channel and the target angles is studied, which, however, limits the ISAC waveforms to linear combinations of the communication symbols. Another ISAC waveform design work considering the uncertainty in target angles is \cite{li2023optimal}, which, however, does not deal with the uncertainty in the communication channel. Other weakly-related works considering imperfect channel information encompass, e.g., robust beamforming for ISAC \cite{liu2017robust,zhao2022joint,ren2023robust,luan2023robust}, which, as in \cite{liao2023robust}, require the ISAC waveforms to be linear transforms of the communication symbols. For general ISAC waveform design, nevertheless, the technical methodology is beyond linear beamforming; see \cite{liu2018toward,guo2023bistatic}. The main issue of existing works on ISAC robust waveform design is that they do not reveal the relation between conservative performance characterization (i.e., lower bound discovery) and robust waveform design; cf. Fig. \ref{fig:true-frontiers}. Technically, this is because, facing the uncertainty in the communication channel, these works focus on guaranteeing the satisfaction of the worst-case communication performance rather than optimizing it; see, e.g., \cite[Eq.~(7)]{liao2023robust}, \cite[Eq.~(17b)]{zhao2022joint}.

\begin{remark}[Specificity of Robust Design]
For any given existing ISAC waveform design method, in principle, we can propose a corresponding robust counterpart that is insensitive to the potential uncertainties in the involved parameters and/or models. That is, the mathematical formulations of robust counterparts are specific to the adopted original (i.e., non-robust) ISAC waveform design frameworks because different waveform design philosophies give different waveform design formulations; cf. \cite{liu2018toward,zhang2021overview,zhou2022integrated}. 
To develop this paper, as an illustration of uncertainty-aware performance characterization and robust design, we exemplify the waveform design framework proposed in \cite{liu2018toward}, where balanced waveforms that are simultaneously close to the optimal waveforms for sensing and communication are studied. The future papers following this work can consider robustifying other existing ISAC waveform design methods. 
\stp
\end{remark}

\subsection{Contributions}
In line with the basic system setups in \cite{liu2018toward}, the contributions of this paper can be highlighted as follows:
\begin{enumerate}
    \item A robust dual-functional waveform design framework that combats the uncertainty of the communication channel is proposed; see Subsection \ref{subsec:robust-design}. The framework is established leveraging the min-max optimization, where the minimization of the objective is to design the optimal waveforms and the maximization of the objective is to find the worst-case information of the communication channel; see Problems \eqref{eq:robust-waveform-design} and \eqref{eq:robust-waveform-design-tradeoff}.

    \item The concept of ISAC performance characterization under channel uncertainty is proposed; see Fig. \ref{fig:true-frontiers}. We show that finding the lower bound of the true Pareto frontier(s), i.e., the conservative performance boundary, is related to solving the robust waveform design problems; see Section \ref{sec:robust-formulation}, especially Facts \ref{fact:upperbound} and \ref{fact:unachievability}, and Remarks \ref{rem:robustness-benefit} and \ref{rem:robustness-benefit-AASR}.
    
    \item To tackle the two-layer min-max robust waveform design problems \eqref{eq:robust-waveform-design} and \eqref{eq:robust-waveform-design-tradeoff}, an approximate solution framework is proposed; see Section \ref{sec:solve-robust-problem}, especially Lemmas \ref{lem:max-form} and \ref{lem:gap}, Insight \ref{insight:approx-solution}, Theorems \ref{thm:max-form} and \ref{thm:max-form-tradeoff}, Methods \ref{method:max-form-solution}, \ref{method:max-form-solution-another}, and \ref{method:robust-joint-design}, and Table \ref{tab:solution-method}.
\end{enumerate}
The first two contributions are new to the ISAC community, whereas the third contribution also enriches the general min-max optimization theory and practice (e.g., robustness theory, game theory, etc.).

\subsection{Notations}
Vectors (column by default) and matrices are denoted by bold symbols and are written in lowercase and uppercase letters, respectively; e.g., $\vec x$ is a vector while $\mat X$ is a matrix. Deterministic quantities are denoted by Italic symbols (e.g., $\vec x$, $\mat X$) while random quantities are denoted by normal symbols (e.g., $\rvec x$, $\rmat X$). Sets are denoted by calligraphic symbols, e.g., $\cal X$. Let $\C^d$, $\R^d$ stand for a $d$-dimensional complex, real space, respectively. Let $\mat X^\T$, $\mat X^\H$, $\Tr[\mat X]$, $\|\mat X\|$, $\operatorname{vec}(\mat X)$, $\operatorname{diag}(\mat X)$, and $\mat X^{-1}$ denote the transpose, the conjugate transpose, the trace, a norm (e.g., the Frobenius norm $\|\mat X\|_F$), the column vectorization, the diagonal entries, and the inverse of the matrix $\mat X$, respectively. When the inverse of $\mat X$ does not exist, we use $\mat X^{-1}$ to denote its Moore--Penrose pseudo-inverse. We use $\cal{CN}(\vec 0, \mat \Sigma)$ to denote the circularly-symmetric complex normal distribution with zero mean, covariance matrix $\mat \Sigma$, and zero pseudo-covariance matrix. Let the running index set induced by an integer $I$ be denoted by $[I] \defeq \{1,2,\ldots, I\}$ and the $N$-dimensional identity matrix by $\mat I_N$.

\section{Problem Formulation}\label{sec:problem-formulation}

\subsection{Basic Setups and Traditional Waveform Design}
Following the basic setups in \cite{liu2018toward}, we consider an ISAC system with an $N$-antenna uniform linear array (ULA). The system simultaneously senses targets and communicates with $K$ single-antenna downlink users using shared dual-functional waveform $\mat X \in \C^{N \times L}$, where $L$ denotes the length of the communication frame (from the perspective of communication) and the number of snapshots (from the perspective of sensing). Letting $\vec x_l \in \C^N$ denote the transmitted $N$-dimensional signal vector at time stamp $l \in [L]$, we have $\mat X \defeq [\vec x_1,\vec x_2,\ldots,\vec x_l,\ldots,\vec x_L]$.

On the communication side, $\mat X$ is the output of pre-coding and conveys the information of communication symbols. For downlink communication with $K$ users, the base-band communication model is
\begin{equation}\label{eq:communication-model}
    \rmat Y = \mat H \mat X + \rmat W,
\end{equation}
where $\rmat Y \in \C^{K \times L}$ denotes the received symbol matrix at $K$ users, $\mat H \defeq [\vec h_1, \vec h_2, \ldots, \vec h_K]^\T \in \C^{K \times N}$ the communication channel matrix, and $\rmat W \defeq [\rvec w_1, \rvec w_2, \ldots, \rvec w_L] \in \C^{K \times L}$ the channel noise; for every $k \in [K]$, $\vec h_k$ denotes the channel model of the $k^\th$ user; for every $l \in [L]$, $\rvec w_l \sim \cal{CN}(\vec 0, N_0 \mat I_K)$ denotes the channel noise at the $l^\th$ data unit and $N_0$ denotes the noise level. Note that given a waveform matrix $\mat X$ and a channel $\mat H$, the randomness of $\rmat Y$ is from the randomness of $\rmat W$. Suppose that the expected constellation-point matrix for the $K$ downlink users is $\mat S \in \C^{K \times L}$. We have the multi-user interference (MUI) signals $\mat H \mat X - \mat S$ and the total MUI energy can be calculated as $\|\mat H \mat X - \mat S\|^2_F$; see \cite{liu2018toward,bazzi2023integrated}. Therefore, from the perspective of communication, waveform design is to find a mapping $\cal F_{\mat H}:\mat S \to \mat X$ such that the total MUI energy $\|\mat H \cdot \cal F_{\mat H} (\mat S) - \mat S\|^2_F$ is minimized, where $\cal F_{\mat H}$ depends on the communication channel information $\mat H$. That is, the \textbf{perfect communication} problem can be stated as 
\begin{equation}\label{eq:perfect-communication}
    \begin{array}{cl}
      \displaystyle \min_{\mat X}  &  \|\mat H \mat X - \mat S\|^2_F \\
       \st  &  \mat X \text{ satisfies some constraints.}
    \end{array}
\end{equation}
The constraint can be, for example, to limit the total transmit power, i.e., $\|\mat X\|^2_F / L \le P_{\T}$, where $P_{\T}$ denotes the maximum allowed total transmit power. Note that the smaller the MUI energy, the higher the communication performance. To be specific, for a communication system, the average achievable sum-rate (AASR; unit: bps/Hz/user) $R_{\mat H, \mat X}$ is defined as \cite{liu2018toward}
\begin{equation}\label{eq:AASR}
R_{\mat H, \mat X} \defeq \frac{1}{K} \sum_{k=1}^K \log _2\left(1+\gamma_{\mat H, \mat X, k}\right)    
\end{equation}
and the average signal-to-interference-plus-noise ratio (SINR) $\gamma_{\mat H, \mat X, k}$ for the $k^\th$ user is defined as
\begin{equation}\label{eq:SINR}
\gamma_{\mat H, \mat X, k} \defeq 
\frac{\sum^L_{l = 1} \left|s_{k, l}\right|^2 / L}{ \sum^L_{l = 1} \left|\vec h^\T_k \vec x_l - s_{k, l}\right|^2 / L + N_0},
\end{equation}
where $s_{k,l}$ is the $(k,l)$-entry of the constellation $\mat S$. As a result, minimizing MUI energy $\|\mat H \mat X - \mat S\|^2_F$ in \eqref{eq:perfect-communication} leads to the improvement of AASR in \eqref{eq:AASR} because for a power-fixed constellation $\mat S$, the numerator in \eqref{eq:SINR} is fixed.

\begin{remark}
The extensions of \eqref{eq:perfect-communication} to the case of multi-antenna users and the case of multi-carrier are technically trivial. For details, see Appendix A in supplementary materials. 
 \stp
\end{remark}

On the sensing side, the key is to design perfect beam patterns for target searching and/or tracking. According to \cite{fuhrmann2008transmit}, designing a beam pattern is equivalent to designing the cross-correlation matrix $\mat R \in \C^{N \times N}$ of the probing signals $\{\vec x_l\}_{l \in [L]}$. Therefore, from the perspective of sensing, waveform design is to find a mapping $\cal F: \mat R \to \mat X$ such that $\mat X \mat X^\H/L = \cal F(\mat R) \cal F^\H(\mat R)/L = \mat R$. For an omnidirectional beam pattern, 
which is suitable for target searching, 
$\mat R \defeq \frac{P_\T}{N} \mat I_N$ should be used. Therefore, the \textbf{perfect sensing} problem can be stated as 
\begin{equation}\label{eq:perfect-sensing}
    \begin{array}{cl}
    \operatorname{find}  &  \mat X \\
       \st  & \mat X \mat X^\H = L \mat R \\
            & \mat X \text{ satisfies some constraints.}
    \end{array}
\end{equation}
The constraint can be, for example, to limit the peak-to-average-power ratio (PAPR) \cite[Eq.~(18)]{stoica2008waveform}.

For an ISAC system, \textbf{dual-functional waveform design} can be stated as finding a mapping $\cal F_{\mat H}: (\mat S, \mat R) \to \mat X$ such that the total MUI energy $\|\mat H \cdot \cal F_{\mat H} (\mat S, \mat R) - \mat S\|^2_F$ is minimized and the cross-correlation matrix $\cal F_{\mat H} (\mat S, \mat R) \cal F^\H_{\mat H} (\mat S, \mat R) / L$ is as close as possible to the desired $\mat R$.
In \cite[Eq. (10)]{liu2018toward}, the following \textbf{sensing-centric} ISAC waveform design problem is defined:
\begin{equation}\label{eq:waveform-design}
    \begin{array}{cl}
       \displaystyle \min_{\mat X}  &  \|\mat H \mat X - \mat S\|^2_F \\
       \st  & \displaystyle \frac{1}{L} \mat X \mat X^\H = \mat R,
    \end{array}
\end{equation}
which aims to minimize the communication MUI energy while guaranteeing the perfect sensing performance. If we desire a balanced performance between communication and sensing, we can consider a \textbf{joint design} problem:
\begin{equation}\label{eq:joint-design}
    \begin{array}{cl}
     \displaystyle  \min_{\mat X}  & \rho \|\mat H \mat X - \mat S\|^2_F + (1 - \rho) \| \mat X \mat X^\H - L \mat R \|^2_F \\
       \st  &  \mat X \text{ satisfies some constraints,}
    \end{array}
\end{equation}
where $\rho \in [0, 1]$ is the trade-off coefficient. Problem \eqref{eq:joint-design} is technically difficult to solve because the objective function is quartic in terms of the waveform matrix $\mat X$. Therefore, we need to formulate simplifications. In \cite{liu2018toward}, the alternative objective function $\rho \|\mat H \mat X - \mat S\|^2_F  + (1 - \rho) \|\mat X - \mat X_s\|^2_F$ is intensively considered, which relaxes \eqref{eq:joint-design} to
\begin{equation}\label{eq:waveform-design-tradeoff}
    \begin{array}{cl}
     \displaystyle  \min_{\mat X}  & \rho \|\mat H \mat X - \mat S\|^2_F + (1 - \rho) \| \mat X - \mat X_s \|^2_F \\
       \st  &  \mat X \text{ satisfies some constraints,}
    \end{array}
\end{equation}
where $\mat X_s$ is a desired sensing waveform, for example, the waveform in \eqref{eq:perfect-sensing} for perfect sensing regardless of communications. Compared with \eqref{eq:waveform-design}, the hard constraint $\mat X \mat X^\H = L \mat R$ for perfect sensing is relaxed to penalizing the mismatch $\mat X \mat X^\H - L \mat R$ in \eqref{eq:joint-design} and is further relaxed to penalizing the discrepancy $\mat X - \mat X_s$ in \eqref{eq:waveform-design-tradeoff} due to the technical tractability in developing the solution method, where $\mat X_s \mat X^\H_s = L \mat R$. Note that the objective of \eqref{eq:waveform-design-tradeoff} is quadratic (not quartic) in $\mat X$.

\subsection{Robust Waveform Design}\label{subsec:robust-design}
\subsubsection{Robust Counterpart of Sensing-Centric Design \captext{\eqref{eq:waveform-design}}}
Considering the modeling errors in the communication channel $\mat H$, the robust counterpart of \eqref{eq:waveform-design} can be formulated as
\begin{equation}\label{eq:robust-waveform-design}
    \begin{array}{cl}
       \displaystyle \min_{\mat X} \max_{\mat H}  &  \|\mat H \mat X - \mat S\|^2_F \\
       \st  & \displaystyle \frac{1}{L} \mat X \mat X^\H = \mat R\\
            & \|\mat H - \bar{\mat H}\| \le \theta,
    \end{array}
\end{equation}
where $\bar{\mat H}$ is a nominal value (e.g., an estimate) of the true channel matrix, $\|\cdot\|$ denotes any suitable matrix norm (e.g., the Frobenius norm $\|\cdot\|_F$), and $\theta \ge 0$ is our trust level of the nominal $\bar{\mat H}$; the smaller the $\theta$, the more we trust $\bar{\mat H}$. Problem \eqref{eq:robust-waveform-design} is non-convex in $\mat X$ and non-concave in $\mat H$. 

\subsubsection{Robust Counterpart of Joint Design \captext{\eqref{eq:waveform-design-tradeoff}}}
Similarly, the robust counterpart of \eqref{eq:waveform-design-tradeoff} is
\begin{equation}\label{eq:robust-waveform-design-tradeoff}
    \begin{array}{cl}
     \displaystyle  \min_{\mat X} \max_{\mat H}  & \rho \|\mat H \mat X - \mat S\|^2_F + (1 - \rho) \| \mat X - \mat X_s \|^2_F \\
       \st  &  \mat X \text{ satisfies some constraints} \\
       & \|\mat H - \bar{\mat H}\| \le \theta,
    \end{array}
\end{equation}
where the constraints of $\mat X$ can specifically be
\begin{equation}\label{eq:total-power}
    \frac{1}{L} \|\mat X\|^2_F = P_\T,
\end{equation}
for the total power constraint (TPC), or
\begin{equation}\label{eq:per-antenna-power}
    \diag(\mat X \mat X^\H) = \frac{L \cdot P_\T}{N} \mat I_N,
\end{equation}
for the per-antenna power constraint (PAPC). Problem \eqref{eq:robust-waveform-design-tradeoff}, particularized by either \eqref{eq:total-power} or \eqref{eq:per-antenna-power}, is non-convex in $\mat X$ and non-concave in $\mat H$.

\subsubsection{Remarks on Robust Counterparts}
From the viewpoint of optimization theory, the min-max formulations \eqref{eq:robust-waveform-design} and \eqref{eq:robust-waveform-design-tradeoff} are new and technically challenging in their own right. In the authors' knowledge, no literature working on similar optimization problems can be found. The only related research is \cite{ahmed2012relaxation}, which somehow has a subtle connection with \eqref{eq:robust-waveform-design}, however, did not solve the same problem.

Suppose that the communication performance of the investigated ISAC system is measured by AASR. To obtain the lower bound of AASR in Fig. \ref{fig:true-frontiers-b}, we need to find the upper bound of the MUI energy, which justifies why we maximize the MUI energy $\|\mat H \mat X - \mat S\|^2_F$ in \eqref{eq:robust-waveform-design} and \eqref{eq:robust-waveform-design-tradeoff} over the admissible channels $\mat H$. For detailed analyses on min-max modeling, see Section \ref{sec:robust-formulation}.

\section{Robust Waveform Design: Property Analyses}\label{sec:robust-formulation}

This section studies the rationale behind the min-max robust formulations \eqref{eq:robust-waveform-design} and \eqref{eq:robust-waveform-design-tradeoff}. Without loss of generality, we exemplify using \eqref{eq:robust-waveform-design}; the analysis on \eqref{eq:robust-waveform-design-tradeoff} is similar, and therefore, omitted. 

\subsection{Interpretation of Min-Max Formulation \captext{\eqref{eq:robust-waveform-design}}}\label{sec:model-interpretation}
The min-max robust formulation \eqref{eq:robust-waveform-design} can be seen as a game between a waveform designer selecting the waveform $\mat X$ and a fictitious adversary selecting the channel state $\mat H$; the adversary tries to deteriorate the communication performance by lifting the MUI energy $\|\mat H \mat X - \mat S\|^2_F$, while the waveform designer tries to improve the communication performance by diminishing the MUI energy. Therefore, min-max robust waveforms perform best under worst-possible channel states.

Let the true communication channel be $\mat H_0$. According to \eqref{eq:waveform-design}, the problem that we really want to solve should be
\begin{equation}\label{eq:waveform-design-true}
  \min_{\mat X:~\mat X \mat X^\H = L \mat R}  \|\mat H_0 \mat X - \mat S\|^2_F
\end{equation}
which, however, cannot be conducted in practice because $\mat H_0$ is unknown. Nevertheless, we can assume that $\mat H_0$ is included in an \textbf{uncertainty set} 
\begin{equation}\label{eq:uncertainty-set}
\cal H \defeq \{\mat H:~\| \mat H - \bar{\mat H}\| \le \theta\}
\end{equation}
for some matrix norm $\|\cdot\|$ where $\bar{\mat H}$ is an estimate of $\mat H_0$ and $\theta$ denotes the trust level; we suppress the dependence of $\cal H$ on $(\bar{\mat H}, \theta, \|\cdot\|)$ to avoid notational clutter. This assumption is practically reasonable because the estimate $\bar{\mat H}$ of the ground truth $\mat H_0$ can be close to $\mat H_0$. In addition, we denote $\cal X$ the collection of all waveforms for perfect sensing, i.e., 
\begin{equation}\label{eq:feasible-region}
    \cal X \defeq \{\mat X:\mat X \mat X^\H = L \mat R\}.
\end{equation}

Facing the unavailability of $\mat H_0$, the pragmatic strategy is to find a practically available upper bound $\mathcal{UB}(\mat X)$ for the true cost function $\|\mat H_0 \mat X - \mat S\|^2_F$ for every $\mat X \in \cal X$, i.e.,
\begin{equation}\label{eq:upper-bound-inequality}
\|\mat H_0 \mat X - \mat S\|^2_F \le \mathcal{UB}(\mat X),~~~\forall \mat X \in \cal X.
\end{equation}
As a result, when the upper bound $\mathcal{UB}(\mat X)$ is minimized by some $\mat X^*$, the \textbf{true cost} of the waveform $\mat X^*$ evaluated at the true channel state $\mat H_0$, i.e.,
\[
\|\mat H_0 \mat X^* - \mat S\|^2_F,
\]
which defines the true communication performance in practical operation, is also reduced. In consideration of the uncertainty set $\cal H$, $\mathcal{UB}(\mat X)$ can be constructed as
\begin{equation}\label{eq:upper-bound}
\mathcal{UB}(\mat X) \defeq \displaystyle \max_{\mat H \in \cal H} \|\mat H \mat X - \mat S\|^2_F,~~~ \forall \mat X \in \cal X.
\end{equation}
The upper bound $\mathcal{UB}(\mat X)$ above is tight in the sense that when the uncertainty set $\cal H$ contains only $\mat H_0$ (i.e., when $\theta = 0$), the equality in \eqref{eq:upper-bound-inequality} holds. In other words, provided that $\mat H_0 \in \cal H$, the more accurate the $\bar{\mat H}$ (i.e., the smaller the $\theta$), the tighter the upper bound \eqref{eq:upper-bound}.

The fact below explains the rationale of the robust waveform design model \eqref{eq:robust-waveform-design}.

\begin{fact}[Achievability of Robust Estimate]\label{fact:upperbound}
By minimizing the practically-available upper bound $\max_{\mat H \in \cal H} \|\mat H \mat X - \mat S\|^2_F$, we can also upper bound the \textbf{truly optimal cost} 
\begin{equation}\label{eq:truly-opt-cost}
\min_{\mat X \in \cal X} \|\mat H_0 \mat X - \mat S\|^2_F
\end{equation}
and reduce the \textbf{true cost} of the \textbf{robust waveform} $\mat X^*$ evaluated at the true channel $\mat H_0$, i.e.,
\begin{equation}\label{eq:true-cost-at-robust}
\|\mat H_0 \mat X^* - \mat S\|^2_F
\end{equation} 
because
\[
\min_{\mat X \in \cal X} \|\mat H_0 \mat X - \mat S\|^2_F \le \|\mat H_0 \mat X^* - \mat S\|^2_F  
\le \|\mat H^* \mat X^* - \mat S\|^2_F,
\]
where 
\[
(\mat X^*, \mat H^*) \in \argmin_{\mat X \in \cal X} \argmax_{\mat H \in \cal H} \|\mat H \mat X - \mat S\|^2_F
\] 
and $\mat H^*$ is the \textbf{worst-case channel state} associated with $\mat X^*$. Note that in real-world operation, the true communication performance of a specified waveform $\mat X$ is defined by the true cost $\|\mat H_0 \mat X - \mat S\|^2_F$ evaluated at $\mat H_0$; cf. \eqref{eq:true-cost-at-robust}.
\stp
\end{fact}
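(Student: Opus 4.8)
The plan is to read the displayed chain as two separate inequalities and to derive each one directly from the defining properties of the minimizer $\mat X^*$, the inner maximizer $\mat H^*$, and the uncertainty set $\cal H$; no optimization machinery beyond the meaning of $\argmin$ and $\argmax$ is required, so I would treat existence of the saddle pair $(\mat X^*, \mat H^*)$ as given by hypothesis. First I would record feasibility: because $(\mat X^*, \mat H^*)$ attains the min-max value $\min_{\mat X \in \cal X} \max_{\mat H \in \cal H} \|\mat H \mat X - \mat S\|^2_F$, in particular $\mat X^* \in \cal X$, i.e., $\mat X^*$ obeys the perfect-sensing constraint $\mat X^* (\mat X^*)^\H = L \mat R$. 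Since $\mat X^*$ is then one admissible point of the minimization in \eqref{eq:truly-opt-cost}, the left inequality
\[
\min_{\mat X \in \cal X} \|\mat H_0 \mat X - \mat S\|^2_F \le \|\mat H_0 \mat X^* - \mat S\|^2_F
\]
follows at once, as the minimum over $\cal X$ cannot exceed the objective value attained at the particular feasible point $\mat X^*$.

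Next I would invoke the defining property of the inner maximizer. By construction $\mat H^* \in \argmax_{\mat H \in \cal H} \|\mat H \mat X^* - \mat S\|^2_F$, so $\|\mat H \mat X^* - \mat S\|^2_F \le \|\mat H^* \mat X^* - \mat S\|^2_F$ holds for every $\mat H \in \cal H$. Instantiating this at $\mat H = \mat H_0$, which is legitimate precisely because of the working assumption $\mat H_0 \in \cal H$ recorded around \eqref{eq:uncertainty-set}, yields the right inequality $\|\mat H_0 \mat X^* - \mat S\|^2_F \le \|\mat H^* \mat X^* - \mat S\|^2_F$. Concatenating the two inequalities produces the asserted sandwich, and the closing remark—that the true operating performance of a waveform $\mat X$ is $\|\mat H_0 \mat X - \mat S\|^2_F$—is merely this definition restated for $\mat X = \mat X^*$.

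Since the statement is essentially a restatement of the definitions, I do not expect any computational difficulty; the only point genuinely requiring care is the hypothesis $\mat H_0 \in \cal H$, on which the right inequality hinges. I would therefore emphasize that the computable quantity $\|\mat H^* \mat X^* - \mat S\|^2_F$ certifies an upper bound on the unobservable true cost only under this containment, which is exactly why the surrounding discussion stresses taking $\bar{\mat H}$ accurate enough (equivalently $\theta$ large enough) that $\mat H_0 \in \cal H$ is guaranteed. This caveat is the substantive content of the Fact, with the inequalities themselves being immediate.
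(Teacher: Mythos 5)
Your proof is correct and takes essentially the same route as the paper: the left inequality follows from feasibility of $\mat X^*$ in $\cal X$, and the right one from the maximality of $\mat H^*$ over $\cal H$ together with the standing assumption $\mat H_0 \in \cal H$ from \eqref{eq:uncertainty-set}, which is exactly the reasoning embedded in the Fact's \quotemark{because} chain. One small wording caveat: accuracy of $\bar{\mat H}$ and largeness of $\theta$ are alternative, not equivalent, ways to secure the containment $\mat H_0 \in \cal H$ --- the paper actually advocates an accurate $\bar{\mat H}$ with \emph{small} $\theta$, since that keeps the upper bound \eqref{eq:upper-bound} tight.
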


In contrast, the nominally optimal waveform(s) cannot provide such a performance guarantee.

\begin{fact}[Unachievability of Nominal Estimate]\label{fact:unachievability}
The \textbf{nominally optimal cost} 
\begin{equation}\label{eq:nominally-opt-cost}
    \min_{\mat X \in \cal X} \|\bar{\mat H} \mat X - \mat S\|^2_F
\end{equation}
\textbf{cannot} upper bound the \textbf{truly optimal cost} 
\[
\min_{\mat X \in \cal X} \|\mat H_0 \mat X - \mat S\|^2_F
\]
and reduce the true cost of the \textbf{nominally optimal waveform} $\bar{\mat X}$ evaluated at the true channel $\mat H_0$, i.e.,
\begin{equation}\label{eq:true-cost-at-nominal}
\|\mat H_0 \bar{\mat X} - \mat S\|^2_F
\end{equation}
 because
\[
\min_{\mat X \in \cal X} \|\mat H_0 \mat X - \mat S\|^2_F \le \|\mat H_0 \bar{\mat X} - \mat S\|^2_F
\stackrel{\text{\scriptsize ?}}{\le} \|\bar{\mat H} \bar{\mat X} - \mat S\|^2_F,
\]
where $\bar{\mat X} \in \argmin_{\mat X \in \cal X} \|\bar{\mat H} \mat X - \mat S\|^2_F$ and the symbol $\stackrel{\text{\scriptsize ?}}{\le}$ means that the relation is not guaranteed. \stp
\end{fact}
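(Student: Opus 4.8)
The plan is to treat the two displayed inequalities separately, since the content of the Fact lies entirely in distinguishing the first relation (which always holds) from the marked one (which does not). First I would dispatch the leftmost inequality $\min_{\mat X \in \cal X}\|\mat H_0 \mat X - \mat S\|^2_F \le \|\mat H_0 \bar{\mat X} - \mat S\|^2_F$ by pure feasibility: because $\bar{\mat X}\in\argmin_{\mat X\in\cal X}\|\bar{\mat H}\mat X-\mat S\|^2_F$ is in particular a member of $\cal X$, the minimum of $\|\mat H_0\mat X-\mat S\|^2_F$ over $\cal X$ cannot exceed its value at the feasible point $\bar{\mat X}$. This mirrors the first inequality in Fact \ref{fact:upperbound} and needs no further argument.

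The substance is to show that the marked relation $\|\mat H_0\bar{\mat X}-\mat S\|^2_F \stackrel{\text{\scriptsize ?}}{\le} \|\bar{\mat H}\bar{\mat X}-\mat S\|^2_F$ carries no guarantee, which I would establish by a single explicit counterexample. I would specialize to the scalar regime $N=K=L=1$, in which $\mat X,\mat H,\mat S$ are $1\times 1$ and the sensing constraint $\frac{1}{L}\mat X\mat X^\H=\mat R$ reads $|\mat X|^2=R$, so $\cal X=\{\mat X:|\mat X|^2=R\}$. Taking $\bar{\mat H}=1$, $\mat R=R>0$, and $\mat S=\sqrt R$ makes the nominal problem attain zero residual at $\bar{\mat X}=\sqrt R$, giving $\|\bar{\mat H}\bar{\mat X}-\mat S\|^2_F=0$. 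Letting the true channel be $\mat H_0=1+\theta$ keeps $\mat H_0\in\cal H$, since $\|\mat H_0-\bar{\mat H}\|=\theta$, while producing $\|\mat H_0\bar{\mat X}-\mat S\|^2_F=\theta^2 R$. For every $\theta>0$ the true residual $\theta^2R$ strictly exceeds the nominal residual $0$, so the marked inequality fails.

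I would then note that this same instance also refutes the upper-bounding claim on the truly optimal cost \eqref{eq:truly-opt-cost}: here $\min_{\mat X\in\cal X}\|\mat H_0\mat X-\mat S\|^2_F=(|\mat H_0|\sqrt R-|\mat S|)^2=\theta^2R$, so the nominally optimal cost $0$ again fails to dominate it, and one construction settles both assertions. The point I would stress is structural rather than computational: in Fact \ref{fact:upperbound} the second inequality is guaranteed by $\mat H^*\in\argmax_{\mat H\in\cal H}\|\mat H\mat X^*-\mat S\|^2_F$ combined with $\mat H_0\in\cal H$, so the worst-case value genuinely upper-bounds the true one; replacing that maximization by evaluation at the single fixed nominal $\bar{\mat H}$ deletes exactly this mechanism, leaving the true residual uncontrolled whenever $\mat H_0\neq\bar{\mat H}$.

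The main obstacle is expository rather than analytical: I must make the counterexample honest by checking that $\cal X$ is nonempty and that $\bar{\mat X}$ is a genuine minimizer over it, that $\mat S$ is chosen compatibly with the power/sensing constraint so the zero-residual nominal solution is attainable, and that the perturbed $\mat H_0$ truly lies in $\cal H$ for the adopted norm and trust level $\theta$. Once $\mat S$ is matched to $\bar{\mat H}$ and $\mat R$ in this way, the failure direction $\|\mat H_0\bar{\mat X}-\mat S\|^2_F>\|\bar{\mat H}\bar{\mat X}-\mat S\|^2_F$ holds for every $\theta>0$; and since the scalar instance is a legitimate special case of the general model that lifts to arbitrary $N,K,L$ by a block-diagonal embedding leaving both residuals unchanged, the Fact follows in full generality.
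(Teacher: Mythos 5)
Your proposal is correct, and it is strictly more constructive than the paper's own treatment of this Fact. The paper offers no counterexample: Fact~\ref{fact:unachievability} carries no appendix proof and is justified purely structurally, by contrast with Fact~\ref{fact:upperbound} --- the first inequality holds by feasibility of $\bar{\mat X} \in \cal X$ (exactly your argument), while the second relation is tagged $\stackrel{\text{\scriptsize ?}}{\le}$ because the mechanism that closes the chain in Fact~\ref{fact:upperbound}, namely evaluation at a worst-case channel $\mat H^* \in \argmax_{\mat H \in \cal H}\|\mat H \mat X^* - \mat S\|^2_F$ with $\mat H_0 \in \cal H$, vanishes once the maximizer is replaced by the single fixed nominal $\bar{\mat H}$; the possibility of failure is left as an assertion. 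You reproduce that structural diagnosis and then close the gap the paper leaves open: the scalar instance $\bar{H}=1$, $S=\sqrt{R}$, $\bar{X}=\sqrt{R}$, $H_0=1+\theta$ yields a nominal residual of $0$ against a true residual of $\theta^2 R>0$, so the tagged inequality demonstrably fails, and the same instance shows the nominally optimal cost $0$ failing to dominate the truly optimal cost $\theta^2 R$. The paper's route buys brevity and keeps attention on the min-max mechanism; yours buys an actual existence proof that \quotemark{not guaranteed} means \quotemark{can fail}. Two small refinements: (i) your closing block-diagonal lifting is unnecessary --- the Fact claims non-guarantee, so a single admissible instance of failure (your scalar one, which respects $K \le N \le L$ and $\mat R \succ 0$) settles it in full generality, and the lifting as sketched would need care to keep $\mat R$ positive definite and both residuals literally unchanged; (ii) if you do want a failure instance at arbitrary $(N,K,L)$, a cleaner construction is $\mat S \defeq \bar{\mat H}\mat X_0$ for some $\mat X_0 \in \cal X$, so that $\bar{\mat X} = \mat X_0$ attains zero nominal residual, together with $\mat H_0 \defeq (1+\delta)\bar{\mat H}$ for $0 < \delta \le \theta/\|\bar{\mat H}\|$, which gives true residual $\delta^2\|\mat S\|^2_F > 0$ and failure of the tagged inequality without any embedding argument.
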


Let $\mat X_0 \in \argmin_{\mat X \in \cal X} \|\mat H_0 \mat X - \mat S\|^2_F$ denote a \textbf{truly optimal waveform}. The motivation and the necessity of the robust waveform design are summarized in the remark below. 

\begin{remark}[Benefit of Robust Waveform Design]\label{rem:robustness-benefit}
In the practice of communications, finding the tight upper bound of the truly optimal cost $\|\mat H_0 \mat X_0 - \mat S\|^2_F$ and the true cost $\|\mat H_0 \mat X - \mat S\|^2_F$ of the specified waveform $\mat X$ is critical. Suppose that an ISAC system can tolerate at most $C$ MUI energies for communications: i.e., for an employed waveform $\mat X^\prime$, we must have $\|\mat H_0 \mat X^\prime - \mat S\|^2_F \le C$. However, if we use the nominally optimal waveform $\bar{\mat X}$ designed at $\bar{\mat H}$, the designed nominally optimal cost $\|\bar{\mat H} \bar{\mat X} - \mat S\|^2_F \le C$ does not necessarily imply the true cost $\|\mat H_0 \bar{\mat X} - \mat S\|^2_F \le C$. This may lead to significant dissatisfaction (i.e., bad user experiences) among communication users because the announced system performance $C$ cannot be guaranteed, and therefore, a serious reliability issue in communications arises. In contrast, if we employ the robust waveform design in \eqref{eq:robust-waveform-design}, the dissatisfaction issue can be fixed because the robust cost $\|\mat H^* \mat X^* - \mat S\|^2_F \le C$ can imply the true cost $\|\mat H_0 \mat X^* - \mat S\|^2_F \le C$; recall Fact \ref{fact:upperbound}.
\stp
\end{remark}

To further clarify the benefit of robust design, in addition to Remark \ref{rem:robustness-benefit}, the remark below exemplifies the average achievable sum-rate defined in \eqref{eq:AASR}. 
\begin{remark}\label{rem:robustness-benefit-AASR}
Given a communication system \eqref{eq:communication-model}, the performance is typically characterized using the AASR $R_{\mat H, \mat X}$, which is specific to the channel state $\mat H$ and the waveform $\mat X$. When the true channel state $\mat H_0$ is exactly known, the communication system design is to find the optimal waveform $\mat X_0$ such that
\[
R_{\mat H_0, \mat X} \le R_{\mat H_0, \mat X_0},~~~\forall \mat X \in \cal X.
\]
As a result, the truly optimal AASR $R_{\mat H_0, \mat X_0}$ is the performance frontier of this communication system; cf. Fig. \ref{fig:pareto-frontiers}. In practice, $\mat H_0$ is unknown, and the nominal channel $\bar{\mat H}$ acts as a surrogate of $\mat H_0$. However, the nominally optimal AASR $R_{\matb H, \matb X}$ cannot serve as a performance indicator of this communication system because
\[
R_{\matb H, \matb X} \stackrel{\text{\scriptsize ?}}{\le} R_{\mat H_0, \matb X} \le R_{\mat H_0, \mat X_0};
\]
cf. Fig. \ref{fig:true-frontiers-a}. In contrast, when the robust waveform $\mat X^*$ is utilized, we have 
\[
R_{\mat H^*, \mat X^*} \le R_{\mat H_0, \mat X^*} \le R_{\mat H_0, \mat X_0},
\]
which means that $R_{\mat H^*, \mat X^*}$ defines the worst-case performance bound of this communication system: that is, in real-world operations, the system's true running performance is no worse than $R_{\mat H^*, \mat X^*}$. Hence, $R_{\mat H^*, \mat X^*}$ characterizes the robust or conservative performance frontier in terms of communication; cf. Fig. \ref{fig:true-frontiers-b}.
\stp
\end{remark}

\subsection{Price of Robustness}\label{subsec:price-robustness}
As we can see from Fact \ref{fact:upperbound}, the robust design specifies an upper bound for the truly optimal cost and the true cost. In practice, however, this upper bound may be overly loose, i.e., 
\[
\min_{\mat X \in \cal X} \|\mat H_0 \mat X - \mat S\|^2_F \le \|\mat H_0 \mat X^* - \mat S\|^2_F  
\ll \|\mat H^* \mat X^* - \mat S\|^2_F.
\]
To be specific, the robust waveform $\mat X^*$ that solves the min-max robust problem may induce a loose estimate of the truly optimal cost and the true cost, and therefore, the optimality of the communication performance might be compromised. This is because the worst-case channel(s) does not necessarily frequently occur in real-world operations. As a result, the robustness-optimality trade-off is raised: to obtain robustness under uncertain conditions (i.e., when $\mat H_0$ is not exactly known), the price to pay is to sacrifice optimality under perfect conditions (i.e., when $\mat H_0$ is perfectly known).

\subsection{Budget of Uncertainty Set}\label{subsec:budget-uncertainty-set}
To reduce the conservativeness of the robust design, a practical trick is to limit the size of the uncertainty set $\cal H$. To be specific, we employ a \textbf{shrunken uncertainty set} $\cal H_\beta \defeq \{\mat H:~\| \mat H - \bar{\mat H}\| \le \beta \theta\}$ where $\beta \in [0,1]$ is called the \textbf{budget of the uncertainty set}. Since
\[
\min_{\mat X \in \cal X} \max_{\mat H \in \cal H_\beta} \|\mat H \mat X - \mat S\|^2_F \le \min_{\mat X \in \cal X} \max_{\mat H \in \cal H} \|\mat H \mat X - \mat S\|^2_F,
\]
a practically tighter upper bound (i.e., the left-hand side of the above display) can be suggested, which however \textbf{cannot} theoretically serve as an upper bound for the truly optimal cost and the true cost. The motivation is that in practice, $\mat H_0$ is still included in the shrunken set $\cal H_\beta$ with high probability. Nevertheless, the price is to meet the worst-case: $\mat H_0$ might be outside of $\cal H_\beta$. The fact below practically lifts Fact \ref{fact:upperbound}.
\begin{fact}[Achievability in Probability]\label{fact:bound-in-prob}
    Suppose that the true channel $\mat H_0$ is included in $\cal H_\beta$ with probability $\eta$ (e.g., $\eta = 95\%$). Then the robust cost $\min_{\mat X \in \cal X} \max_{\mat H \in \cal H_\beta} \|\mat H \mat X - \mat S\|^2_F$ upper bounds the truly optimal cost and the true cost with probability $\eta$; cf. Fact \ref{fact:upperbound}. \stp
\end{fact}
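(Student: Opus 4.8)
The plan is to lift Fact~\ref{fact:upperbound} from a deterministic statement (valid whenever $\mat H_0 \in \cal H$) to a probabilistic one, simply by conditioning on the random event that the true channel lands in the shrunken set. Concretely, I would introduce the event
\[
E \defeq \{\mat H_0 \in \cal H_\beta\}, \qquad \Pr(E) = \eta,
\]
where the probability is taken over the randomness of the true channel $\mat H_0$ and $\eta$ is the coverage probability assumed in the statement. Because the shrunken set $\cal H_\beta = \{\mat H:~\|\mat H - \bar{\mat H}\| \le \beta\theta\}$ is deterministic (centered at the nominal $\bar{\mat H}$ with fixed radius $\beta\theta$), any robust solution
\[
(\mat X^*, \mat H^*) \in \argmin_{\mat X \in \cal X} \argmax_{\mat H \in \cal H_\beta} \|\mat H \mat X - \mat S\|^2_F
\]
is itself deterministic, and the robust cost equals $\max_{\mat H \in \cal H_\beta} \|\mat H \mat X^* - \mat S\|^2_F$.

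Next I would split the chain of Fact~\ref{fact:upperbound} into its two links according to whether each requires the event $E$. The first link,
\[
\min_{\mat X \in \cal X} \|\mat H_0 \mat X - \mat S\|^2_F \le \|\mat H_0 \mat X^* - \mat S\|^2_F,
\]
holds for \emph{every} realization of $\mat H_0$, because $\mat X^* \in \cal X$ is feasible for the truly optimal problem~\eqref{eq:truly-opt-cost}; hence this bound is unconditional. The second link,
\[
\|\mat H_0 \mat X^* - \mat S\|^2_F \le \max_{\mat H \in \cal H_\beta} \|\mat H \mat X^* - \mat S\|^2_F,
\]
states that evaluating the fixed waveform $\mat X^*$ at $\mat H_0$ does not exceed its worst-case value over $\cal H_\beta$, which holds exactly when $\mat H_0 \in \cal H_\beta$, i.e., on the event $E$.

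Finally, on $E$ I would concatenate the two links to obtain that the robust cost $\min_{\mat X \in \cal X} \max_{\mat H \in \cal H_\beta} \|\mat H \mat X - \mat S\|^2_F$ simultaneously upper-bounds both the truly optimal cost~\eqref{eq:truly-opt-cost} and the true cost~\eqref{eq:true-cost-at-robust}; since the "both links hold" event contains $E$, it occurs with probability at least $\Pr(E) = \eta$, which is the claim. I do not anticipate a genuine obstacle, as the argument is a direct probabilistic restatement of Fact~\ref{fact:upperbound}; the only point requiring care is the bookkeeping of the conditioning—recognizing that feasibility of $\mat X^*$ renders the first inequality almost sure, while the containment $\mat H_0 \in \cal H_\beta$ is the \emph{sole} event needed for the second—so that the conclusion is governed by $\Pr(E) = \eta$ rather than by some strictly smaller intersection probability.
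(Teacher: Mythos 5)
Your proposal is correct and matches the paper's intended argument: the paper justifies this fact by direct reference to Fact \ref{fact:upperbound}, i.e., by applying the deterministic chain $\min_{\mat X \in \cal X} \|\mat H_0 \mat X - \mat S\|^2_F \le \|\mat H_0 \mat X^* - \mat S\|^2_F \le \max_{\mat H \in \cal H_\beta} \|\mat H \mat X^* - \mat S\|^2_F$ on the event $\{\mat H_0 \in \cal H_\beta\}$, exactly as you do. Your explicit bookkeeping — that the first inequality is unconditional (feasibility of $\mat X^*$) and only the second requires the containment event — is a slightly more careful rendering of the same argument, and correctly yields the bound with probability at least $\eta$.
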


\section{Robust Waveform Design: Solution Methods}\label{sec:solve-robust-problem}
This section studies the solution methods to the robust counterparts \eqref{eq:robust-waveform-design} and \eqref{eq:robust-waveform-design-tradeoff}. We start with the following preparatory and motivational lemma.

\begin{lemma}\label{lem:max-form}
Let $\cal X$ define a generic feasible region of waveforms. Suppose that $\phi: \cal H \times \cal X \to \R$ is the performance objective function of an ISAC system, e.g., as in \eqref{eq:robust-waveform-design} and \eqref{eq:robust-waveform-design-tradeoff}. For every fixed point $\mat X^* \in \cal X$, we have
\begin{equation}\label{eq:upper-bounds}
\begin{array}{cl}
    \displaystyle \max_{\mat H  \in \cal H} \min_{\mat X \in \cal X}  \phi(\mat H, \mat X) &\le \displaystyle \min_{\mat X \in \cal X} \max_{\mat H  \in \cal H}  \phi(\mat H, \mat X) \\
    &\le \displaystyle \max_{\mat H  \in \cal H} \phi(\mat H, \mat X^*).
\end{array}
\end{equation}
In addition, if there exists an $\mat X^* \in \cal X$ such that 
\begin{equation}\label{eq:upper-bounds-tightness}
\begin{array}{cl}
    \displaystyle \max_{\mat H  \in \cal H} \min_{\mat X \in \cal X}  \phi(\mat H, \mat X) = \displaystyle \max_{\mat H  \in \cal H} \phi(\mat H, \mat X^*),
\end{array}
\end{equation}
then the strong min-max property holds for the min-max game on $(\phi,\cal H,\cal X)$: i.e.,
\begin{equation}\label{eq:min-max-equality}
\displaystyle \max_{\mat H  \in \cal H} \min_{\mat X \in \cal X}  \phi(\mat H, \mat X) = \displaystyle \min_{\mat X \in \cal X} \max_{\mat H  \in \cal H}  \phi(\mat H, \mat X).
\end{equation}
\end{lemma}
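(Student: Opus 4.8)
The plan is to prove all three assertions by pure order-theoretic manipulation of the nested optimizations, with no reliance on convexity, concavity, compactness, or continuity of $\phi$, $\cal H$, or $\cal X$. This generality is the whole point, since the eventual targets \eqref{eq:robust-waveform-design} and \eqref{eq:robust-waveform-design-tradeoff} are non-convex in $\mat X$ and non-concave in $\mat H$, so any argument that invokes a minimax theorem of Sion/von-Neumann type would be inapplicable. I would establish the two inequalities of \eqref{eq:upper-bounds} separately and then derive \eqref{eq:min-max-equality} by combining them under the hypothesis \eqref{eq:upper-bounds-tightness}.

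First I would prove the left inequality in \eqref{eq:upper-bounds}, the classical weak-duality (max-min $\le$ min-max) bound. The key is the pointwise sandwich valid for every pair $(\mat H, \mat X) \in \cal H \times \cal X$:
\[
\min_{\mat X' \in \cal X} \phi(\mat H, \mat X') \le \phi(\mat H, \mat X) \le \max_{\mat H' \in \cal H} \phi(\mat H', \mat X).
\]
Writing $g(\mat H) := \min_{\mat X' \in \cal X} \phi(\mat H, \mat X')$ and $f(\mat X) := \max_{\mat H' \in \cal H} \phi(\mat H', \mat X)$, this reads $g(\mat H) \le f(\mat X)$ for all admissible pairs. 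Fixing $\mat X$ and maximizing over $\mat H$ (legitimate because the right-hand side $f(\mat X)$ is free of $\mat H$) gives $\max_{\mat H} g(\mat H) \le f(\mat X)$ for every $\mat X$; minimizing the resulting inequality over $\mat X$ then yields $\max_{\mat H} g(\mat H) \le \min_{\mat X} f(\mat X)$, which is exactly the left inequality.

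The right inequality in \eqref{eq:upper-bounds} is immediate: the map $\mat X \mapsto \max_{\mat H \in \cal H} \phi(\mat H, \mat X)$ attains its infimum over $\cal X$ at or below its value at any admissible $\mat X^* \in \cal X$, so $\min_{\mat X} \max_{\mat H} \phi \le \max_{\mat H} \phi(\mat H, \mat X^*)$. For the strong min-max property I would then simply chain the results: evaluating this right inequality at the distinguished $\mat X^*$ furnished by the hypothesis and invoking \eqref{eq:upper-bounds-tightness} gives
\[
\min_{\mat X \in \cal X} \max_{\mat H \in \cal H} \phi(\mat H, \mat X) \le \max_{\mat H \in \cal H} \phi(\mat H, \mat X^*) = \max_{\mat H \in \cal H} \min_{\mat X \in \cal X} \phi(\mat H, \mat X),
\]
which is precisely the reverse of the weak-duality inequality proved in the previous step. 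The two bounds sandwich one another, so equality holds and \eqref{eq:min-max-equality} follows.

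Because every step is elementary, there is no deep obstacle here; the only point demanding care is the bookkeeping of quantifier order in the weak-duality step. Specifically, one must justify that the maximization over $\mat H$ can be pushed through precisely because the upper bound $f(\mat X)$ does not depend on $\mat H$, and, symmetrically, that the minimization over $\mat X$ is applied only after that $\mat H$-maximization has been carried out. Getting the direction and sequencing of these two reductions correct is the entire substance of the argument, and it is also where a careless treatment would silently invert an inequality.
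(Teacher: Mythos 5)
Your proposal is correct and follows essentially the same route as the paper's proof: the left inequality is the unconditional weak min-max (max-min $\le$ min-max) property, the right inequality follows from feasibility of $\mat X^* \in \cal X$, and the strong min-max equality \eqref{eq:min-max-equality} follows by sandwiching the two bounds under hypothesis \eqref{eq:upper-bounds-tightness}. The only difference is one of detail: the paper simply cites the weak min-max inequality as a known fact, whereas you reprove it from the pointwise sandwich $\min_{\mat X'} \phi(\mat H,\mat X') \le \phi(\mat H,\mat X) \le \max_{\mat H'} \phi(\mat H',\mat X)$, which is the standard argument and is carried out correctly.
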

\begin{proof}
    See Appendix B in supplementary materials. 
    \stp
\end{proof}

In the context of robustness analysis, the max-min optimization $\max_{\mat H} \min_{\mat X}  \phi(\mat H, \mat X)$ can be easier to solve than the original min-max robust counterpart $\min_{\mat X} \max_{\mat H}  \phi(\mat H, \mat X)$ because for every fixed channel $\mat H$, the solution to the waveform design problem $\min_{\mat X}  \phi(\mat H, \mat X)$ is usually available in literature. Therefore, if the strong mix-max property \eqref{eq:min-max-equality} holds, the robustification problem $\min_{\mat X} \max_{\mat H}  \phi(\mat H, \mat X)$ can be simplified to the problem of finding the worst-case channel through $\max_{\mat H  \in \cal H} \phi(\mat H, \mat X^*_{\mat H})$ where $\mat X^*_{\mat H}$ solves $\min_{\mat X}  \phi(\mat H, \mat X)$ for every $\mat H$.

In practice, the condition \eqref{eq:upper-bounds-tightness} might be harsh to satisfy. A compromise, however, can be made if the gap between $\max_{\mat H} \phi(\mat H, \mat X^*)$ and $\max_{\mat H} \min_{\mat X}  \phi(\mat H, \mat X)$ in \eqref{eq:upper-bounds} is small. The lemma below characterizes this gap in a special case.

\begin{lemma}\label{lem:gap}
Suppose that $\matb X$ solves the nominal problem $\min_{\mat X \in \cal X}  \phi(\matb H, \mat X)$. If the function 
\begin{equation}\label{eq:max-form-abstract}
\mat H \mapsto \min_{\mat X \in \cal X}  \phi(\mat H, \mat X),~~~\forall \mat H \in \cal H
\end{equation}
is $L_1$-Lipschitz continuous in $\mat H$ on $\cal H$ and the function
\begin{equation}\label{eq:upper-bound-abstract}
\mat H \mapsto \phi(\mat H, \matb X),~~~\forall \mat H \in \cal H
\end{equation}
is $L_2$-Lipschitz continuous in $\mat H$ on $\cal H$, both with respect to the norm $\|\cdot\|$ used in defining $\cal H$ in \eqref{eq:uncertainty-set}, then
\[
    \max_{\mat H \in \cal H} \phi(\mat H, \matb X) - \max_{\mat H \in \cal H} \min_{\mat X \in \cal X}  \phi(\mat H, \mat X) \le (L_1 + L_2) \cdot \theta.
\]
\end{lemma}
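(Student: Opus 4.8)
The plan is to exploit the one structural fact hidden in the hypotheses: the nominal minimizer $\matb X$ forces the two functions \eqref{eq:max-form-abstract} and \eqref{eq:upper-bound-abstract} to \emph{coincide} at the center $\matb H$ of the uncertainty set. Abbreviate the max-form value by $v(\mat H) \defeq \min_{\mat X \in \cal X} \phi(\mat H, \mat X)$ and the upper-bound value by $u(\mat H) \defeq \phi(\mat H, \matb X)$. Two elementary observations drive everything. First, $v(\mat H) \le u(\mat H)$ for all $\mat H \in \cal H$, because $\matb X$ is feasible in the inner minimization defining $v$; this already shows the left-hand side of the claim is nonnegative. Second, and crucially, $v(\matb H) = u(\matb H)$, since by assumption $\matb X$ attains $\min_{\mat X \in \cal X} \phi(\matb H, \mat X)$. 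This equality ``anchors'' the two functions together at $\matb H$.

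Next I would collapse the two separate maxima onto a single channel. Let $\mat H^\star \in \argmax_{\mat H \in \cal H} u(\mat H)$. Because $\mat H^\star \in \cal H$ is feasible for the maximization of $v$ as well, $\max_{\mat H \in \cal H} v(\mat H) \ge v(\mat H^\star)$, whence
\[
\max_{\mat H \in \cal H} u(\mat H) - \max_{\mat H \in \cal H} v(\mat H) \le u(\mat H^\star) - v(\mat H^\star),
\]
where the left side is exactly the quantity in the lemma by the definitions of $u$ and $v$. The problem is thereby reduced to bounding the pointwise gap between $u$ and $v$ at the single point $\mat H^\star$.

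The final step inserts $\matb H$ as a reference and invokes the anchoring identity. Writing
\[
u(\mat H^\star) - v(\mat H^\star) = \big[u(\mat H^\star) - u(\matb H)\big] + \big[v(\matb H) - v(\mat H^\star)\big],
\]
where I have used $u(\matb H) = v(\matb H)$ to suppress the nominal terms, I would bound the first bracket by $L_2 \|\mat H^\star - \matb H\|$ via the $L_2$-Lipschitz continuity of $u$ in \eqref{eq:upper-bound-abstract}, and the second by $L_1 \|\matb H - \mat H^\star\|$ via the $L_1$-Lipschitz continuity of $v$ in \eqref{eq:max-form-abstract}. Since $\mat H^\star \in \cal H$ gives $\|\mat H^\star - \matb H\| \le \theta$, summing produces $(L_1 + L_2)\theta$, as claimed.

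There is no real analytic difficulty here --- once the decomposition is in place the estimates are one line each. The only point requiring care, and the genuine content of the argument, is the \emph{direction}: one must expand both maxima around the maximizer of $u$ (not of $v$) so that the anchoring identity $u(\matb H)=v(\matb H)$ cancels the two nominal terms, and one must match each Lipschitz constant to the correct bracket. I would also note that a slightly sharper accounting is available: from $\max_{\mat H \in \cal H} u(\mat H) \le u(\matb H) + L_2\theta$ together with $\max_{\mat H \in \cal H} v(\mat H) \ge v(\matb H) = u(\matb H)$ one obtains the tighter constant $L_2\theta$ without invoking $L_1$; the symmetric $(L_1+L_2)\theta$ form stated in the lemma is nevertheless sufficient for the subsequent development.
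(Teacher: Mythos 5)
Your proof is correct, and its core coincides with the paper's own argument: both rest on the anchoring identity $u(\matb H) = v(\matb H)$ --- where, in your notation, $u(\mat H) \defeq \phi(\mat H, \matb X)$ and $v(\mat H) \defeq \min_{\mat X \in \cal X} \phi(\mat H, \mat X)$ --- which holds because $\matb X$ solves the nominal problem, combined with the two Lipschitz estimates over the radius-$\theta$ set. The only structural difference is where the comparison is made. The paper inserts the zero term $v(\matb H) - u(\matb H)$ directly into the difference of maxima and bounds the two resulting pieces, $\big|\max_{\mat H \in \cal H} u(\mat H) - u(\matb H)\big| \le L_2\theta$ and $\big|\max_{\mat H \in \cal H} v(\mat H) - v(\matb H)\big| \le L_1\theta$, separately; you instead collapse both maxima onto a maximizer $\mat H^\star$ of $u$ and then anchor the pointwise gap at $\matb H$. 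These decompositions are interchangeable, and your remark about the direction (expanding around the maximizer of $u$, not of $v$) is exactly the care the paper sidesteps by working with absolute values.

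What deserves emphasis is your closing observation, which is correct and strictly sharper than the lemma as stated: since $\cal H$ is centered at $\matb H$, one has
\begin{equation*}
\max_{\mat H \in \cal H} u(\mat H) \;\le\; u(\matb H) + L_2 \theta \;=\; v(\matb H) + L_2\theta \;\le\; \max_{\mat H \in \cal H} v(\mat H) + L_2\theta,
\end{equation*}
so the gap is at most $L_2 \cdot \theta$ with no dependence on $L_1$ whatsoever. The paper does not record this refinement. Note, however, that the sharpening is specific to the difference of \emph{maxima} (it exploits $\max_{\mat H \in \cal H} v(\mat H) \ge v(\matb H)$, which costs nothing); the analogous \emph{pointwise} bound $u(\mat H) - v(\mat H) \le (L_1 + L_2)\|\mat H - \matb H\|$, of the kind used in Proposition 3, genuinely needs both constants.
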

\begin{proof}
    See Appendix C in supplementary materials. 
    \stp
\end{proof}

Lemma \ref{lem:gap} suggests that, if the functions in \eqref{eq:max-form-abstract} and \eqref{eq:upper-bound-abstract} are Lipschitz continuous, and the radius $\theta$ of $\cal H$ is small, then the gap between $\max_{\mat H} \phi(\mat H, \matb X)$ and $\max_{\mat H} \min_{\mat X}  \phi(\mat H, \mat X)$ would be naturally small. Lemma \ref{lem:gap} further implies that if we let $\mat X^* \defeq \matb X$ in \eqref{eq:upper-bounds}, then the extent of breaching the equality in \eqref{eq:upper-bounds-tightness} can be inherently slight. The following insight is useful for designing approximate solution methods to original min-max problems.

\begin{insight}[Solve Max-Min Counterpart]\label{insight:approx-solution}
When employing the max-min counterpart $\max_{\mat H  \in \cal H} \min_{\mat X \in \cal X}  \phi(\mat H, \mat X)$ for designing the approximate solution method to the original min-max problem $\min_{\mat X \in \cal X} \max_{\mat H  \in \cal H}  \phi(\mat H, \mat X)$, the key is to choose an appropriate $\mat X^*$ such that the discrepancy between $\max_{\mat H \in \cal H} \phi(\mat H, \mat X^*)$ and $\max_{\mat H \in \cal H} \min_{\mat X \in \cal X}  \phi(\mat H, \mat X)$ can be minimized to its utmost extent. \stp
\end{insight}

In what follows, based on Lemmas \ref{lem:max-form} and \ref{lem:gap} and Insight \ref{insight:approx-solution}, we discuss the solution methods to the robust counterparts \eqref{eq:robust-waveform-design} and \eqref{eq:robust-waveform-design-tradeoff}, respectively in Subsections \ref{subsec:sol-robust-sensing-centric} and \ref{subsec:sol-robust-joint}. In Subsection \ref{subsubsec:remarks}, the proposed methods and their computational complexities are summarized; see Tables \ref{tab:solution-method} and \ref{tab:computational-complexity}.

\subsection{Solution Method to Robust Counterpart  \captext{\eqref{eq:robust-waveform-design}}}\label{subsec:sol-robust-sensing-centric}

This subsection designs the solution method to the robust counterpart \eqref{eq:robust-waveform-design} of the sensing-centric waveform design problem \eqref{eq:waveform-design}.

\subsubsection{Model Reformulation}\label{sec:max-form}
Since Problem {\eqref{eq:robust-waveform-design} is non-convex in $\mat X$ and non-concave in $\mat H$, conventional solution methods (e.g., alternating descent, Lagrangian duality) to min-max problems are not applicable.

Equipped with Lemma \ref{lem:max-form}, the theorem below transforms the robust counterpart \eqref{eq:robust-waveform-design} into a tractable equivalent.
\begin{theorem}\label{thm:max-form}
    Let $\cal X \defeq \{\mat X:\mat X \mat X^\H = L \mat R\}$. Suppose that the beampattern-inducing matrix $\mat R$ is positive definite and $\mat F$ is invertible such that $\mat R = \mat F \mat F^\H$ (e.g., Cholesky decomposition).\footnote{Note that given $\mat R$, there may exist multiple $\mat F$.} 
    If there exists ${\mat X}^* \in \cal X$ such that
    \begin{equation}\label{eq:sensing-centric-thm-condition}
        \max_{\mat H \in \cal H} \min_{\mat X \in \cal X} \|\mat H \mat X - \mat S\|^2_F = \max_{\mat H \in \cal H} \|\mat H {\mat X}^* - \mat S\|^2_F,
    \end{equation}
    then Problem \eqref{eq:robust-waveform-design} is equivalent to
    \begin{equation}\label{eq:max-form}
        \begin{array}{cl}
            \displaystyle \max_{\mat H} & \|\sqrt{L} \cdot \mat H \cdot \mat F \cdot \mat U \mat I_{N \times L} \mat V^\H - \mat S\|^2_F \\
            \st & \|\mat H - \bar{\mat H}\| \le \theta, \\
            & \mat U \mat \Sigma \mat V^\H \overset{\text{SVD}}{=} \mat F^\H \mat H^\H\mat S,
        \end{array}
    \end{equation}
    where $\mat U \mat \Sigma \mat V^\H \overset{\text{SVD}}{=} \mat F^\H \mat H^\H\mat S$ means the singular value decomposition (SVD) of $\mat F^\H \mat H^\H\mat S$ and $\mat I_{N \times L} \defeq [\mat I_N, \mat 0_{N \times (L - N)}]$; the $N \times (L - N)$ zero matrix is denoted by $\mat 0_{N \times (L - N)}$. 
\end{theorem}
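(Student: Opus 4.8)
The plan is to combine the abstract min-max equality supplied by Lemma \ref{lem:max-form} with a closed-form evaluation of the inner minimization, which turns out to be a unitary Procrustes problem. First I would observe that the hypothesis \eqref{eq:sensing-centric-thm-condition} is precisely the tightness condition \eqref{eq:upper-bounds-tightness} of Lemma \ref{lem:max-form} specialized to $\phi(\mat H, \mat X) = \|\mat H \mat X - \mat S\|^2_F$. Invoking that lemma therefore yields the strong min-max equality \eqref{eq:min-max-equality}, so that the value of Problem \eqref{eq:robust-waveform-design}, namely $\min_{\mat X \in \cal X}\max_{\mat H \in \cal H}\|\mat H \mat X - \mat S\|^2_F$, equals the max-min value $\max_{\mat H \in \cal H}\min_{\mat X \in \cal X}\|\mat H \mat X - \mat S\|^2_F$. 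It then remains only to solve the inner problem $\min_{\mat X \in \cal X}\|\mat H \mat X - \mat S\|^2_F$ in closed form for each fixed $\mat H$ and substitute the result into the outer maximization.

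Next I would reparametrize the feasible set. Since $\mat R = \mat F\mat F^\H$ with $\mat F$ invertible, setting $\mat Q \defeq \tfrac{1}{\sqrt{L}}\mat F^{-1}\mat X$ gives a bijection under which $\mat X\mat X^\H = L\mat R$ holds if and only if $\mat Q\mat Q^\H = \mat I_N$, i.e. $\mat Q$ ranges over the semi-unitary matrices in $\C^{N \times L}$. Writing $\mat X = \sqrt{L}\,\mat F\mat Q$ and expanding, the quadratic term becomes $L\,\Tr[\mat H\mat F\mat Q\mat Q^\H\mat F^\H\mat H^\H] = L\,\Tr[\mat H\mat R\mat H^\H]$, which is constant in $\mat Q$. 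Hence minimizing the objective reduces to the linear Procrustes problem $\max_{\mat Q\mat Q^\H = \mat I_N}\Re\Tr[\mat Q^\H(\mat F^\H\mat H^\H\mat S)]$.

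To solve this, I would take the SVD $\mat U\mat\Sigma\mat V^\H = \mat F^\H\mat H^\H\mat S$ and introduce the change of variable $\mat Z \defeq \mat U^\H\mat Q\mat V$, which is again semi-unitary. The objective rewrites as $\Re\Tr[\mat\Sigma\mat Z^\H] = \sum_i \sigma_i\,\Re(Z_{ii})$; since each row of $\mat Z$ has unit norm, $\Re(Z_{ii})\le |Z_{ii}|\le 1$, so the objective is bounded by $\sum_i\sigma_i$, attained at the feasible choice $\mat Z = \mat I_{N\times L}$. This gives the inner optimizer $\mat X^*_{\mat H} = \sqrt{L}\,\mat F\mat U\mat I_{N\times L}\mat V^\H$, and substituting it back into $\|\mat H\mat X - \mat S\|^2_F$ reproduces exactly the objective of \eqref{eq:max-form}; the outer $\max_{\mat H\in\cal H}$ then matches verbatim, establishing the equivalence.

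The hard part will be the Procrustes step: carefully verifying that $\mat Z = \mat I_{N\times L}$ is both feasible and optimal under the \emph{wide} ($N \le L$) semi-unitary constraint, and keeping the real-part and conjugate-transpose bookkeeping correct throughout the Frobenius-norm expansion. A secondary point requiring care is that the SVD of $\mat F^\H\mat H^\H\mat S$ need not be unique when there are repeated or zero singular values; I would note that the optimal value, and hence the substituted objective in \eqref{eq:max-form}, is independent of the particular SVD chosen, so the claimed equivalence remains well defined.
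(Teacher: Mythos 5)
Your proposal is correct and follows essentially the same route as the paper: the hypothesis \eqref{eq:sensing-centric-thm-condition} is recognized as the tightness condition \eqref{eq:upper-bounds-tightness} so that Lemma \ref{lem:max-form} yields the strong min-max equality, and then the closed-form inner minimizer $\mat X^*_{\mat H} = \sqrt{L}\,\mat F \mat U \mat I_{N\times L}\mat V^\H$ is substituted into the max-min problem to produce \eqref{eq:max-form}. The only difference is that the paper simply cites \cite[Eq.~(15)]{liu2018toward} for that inner solution, whereas you re-derive it via the semi-unitary Procrustes argument (reparametrization $\mat Q = \tfrac{1}{\sqrt{L}}\mat F^{-1}\mat X$, constancy of the quadratic term on $\cal X$, SVD change of variables, and the diagonal bound $\Re(Z_{ii})\le 1$) --- a correct, self-contained addition, and your remark that the optimal value is invariant to the non-uniqueness of the SVD matches the caveat the paper itself flags.
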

\begin{proof}
    See Appendix D in supplementary materials. \stp
\end{proof}

Theorem \ref{thm:max-form} implies that the original two-layer min-max problem \eqref{eq:robust-waveform-design} can be conditionally equivalently solved by the single-layer maximization problem \eqref{eq:max-form}. The equivalence stems from the condition \eqref{eq:sensing-centric-thm-condition} and the existence of the closed-form solution $\mat X^*_{\mat H}$ for every given $\mat H$ \cite[Eq.~(15)]{liu2018toward}: 
\begin{equation}\label{eq:opt-X-given-H}
\mat X^*_{\mat H} = \sqrt{L} \mat F \mat U \mat I_{N \times L} \mat V^\H,~~~\forall \mat H \in \cal H.
\end{equation}  
Condition \eqref{eq:sensing-centric-thm-condition}, which is motivated by Lemma \ref{lem:max-form}, enforces the strong min-max property between $\min_{\mat X \in \cal X} \max_{\mat H  \in \cal H}  \|\mat H \mat X - \mat S\|^2_F$ and $\max_{\mat H  \in \cal H} \min_{\mat X \in \cal X}  \|\mat H \mat X - \mat S\|^2_F$. Since directly attacking the original problem \eqref{eq:robust-waveform-design} is technically challenging, this paper solves the reformulated problem \eqref{eq:max-form}, wherein the condition \eqref{eq:sensing-centric-thm-condition} needs to be stringently satisfied to ensure the equivalence between \eqref{eq:robust-waveform-design} and \eqref{eq:max-form}.

Although the condition \eqref{eq:sensing-centric-thm-condition} is particularized from the general case \eqref{eq:upper-bounds-tightness}, it is still technically difficult to verify the existence of $\mat X^*$ in \eqref{eq:sensing-centric-thm-condition}. Therefore, approximate solutions to the robust counterpart \eqref{eq:robust-waveform-design} can be obtained if the strict equality in the condition \eqref{eq:sensing-centric-thm-condition} can be compromised; cf. Insight \ref{insight:approx-solution}. To be specific, we aim to seek a solution ${\mat X}^*$ to \eqref{eq:max-form} such that $\max_{\mat H \in \cal H} \|\mat H {\mat X}^* - \mat S\|^2_F$ can approach or equal its lower bound $\max_{\mat H \in \cal H} \min_{\mat X \in \cal X} \|\mat H \mat X - \mat S\|^2_F$ from above; note that the closer the right-hand side of \eqref{eq:sensing-centric-thm-condition} is to the left-hand side, the smaller the gap between \eqref{eq:robust-waveform-design} and \eqref{eq:max-form}; cf. Lemma \ref{lem:max-form}.

We start with examining the properties of the maximization problem \eqref{eq:max-form}.

\subsubsection{Property Analysis of Maximization Problem \captext{\eqref{eq:max-form}}}\label{sec:property-analysis}

Denote the objective function in Problem \eqref{eq:max-form} by 
\begin{equation}\label{eq:obj-f}
    f(\mat H) \defeq \|\sqrt{L} \cdot \mat H \cdot \mat F \cdot \mat U \mat I_{N \times L} \mat V^\H - \mat S\|^2_F,
\end{equation}
which is a particularization of \eqref{eq:max-form-abstract}. The proposition below establishes the Lipschitz continuity of the function $f$ on $\cal H$.
\begin{proposition}\label{prop:f-continuous}
    The function $f$ is $L_f$-Lipschitz continuous in $\mat H$ on $\cal H$ with respect to the norm $\|\cdot\|$ used in defining $\cal H$, where
    \begin{equation}\label{eq:L_f}
        L_f = 2 B LP_\T \cdot (\|\bar{\mat H}\|_F + B\theta) + 2 B \sqrt{LP_\T} \cdot \|\mat S\|_F
    \end{equation}
    and $B$ is a finite positive real number such that $\|\mat H\|_F \le B\|\mat H\|$ for every $\mat H \in \cal \C^{K \times N}$.
\end{proposition}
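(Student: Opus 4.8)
The plan is to avoid differentiating through the singular value decomposition in \eqref{eq:obj-f} altogether, and instead to recognize $f$ as the optimal value of the inner minimization. By the closed-form minimizer \eqref{eq:opt-X-given-H}, for each $\mat H \in \cal H$ the waveform $\mat X^*_{\mat H} = \sqrt{L}\,\mat F \mat U \mat I_{N\times L}\mat V^\H$ attains $\min_{\mat X \in \cal X}\|\mat H \mat X - \mat S\|^2_F$, so that
\[
f(\mat H) = \|\mat H \mat X^*_{\mat H} - \mat S\|^2_F = \min_{\mat X \in \cal X}\|\mat H \mat X - \mat S\|^2_F .
\]
This value-function representation is the crux: it lets me treat $f$ as a pointwise minimum of the jointly-defined cost $g(\mat H, \mat X) \defeq \|\mat H \mat X - \mat S\|^2_F$ over the bounded set $\cal X$, sidestepping the fact that the SVD factors $\mat U, \mat V$ need not depend smoothly on $\mat H$ when singular values collide.

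Next I would run the standard value-function Lipschitz argument. Fix $\mat H_1, \mat H_2 \in \cal H$ with, say, $f(\mat H_1) \ge f(\mat H_2)$, and let $\mat X_2$ minimize $g(\mat H_2, \cdot)$. Using $\mat X_2$ as a feasible (suboptimal) point for $\mat H_1$ gives
\[
f(\mat H_1) - f(\mat H_2) \le \|\mat H_1 \mat X_2 - \mat S\|^2_F - \|\mat H_2 \mat X_2 - \mat S\|^2_F .
\]
Writing $\mat A \defeq \mat H_1 \mat X_2 - \mat S$ and $\mat B \defeq \mat H_2 \mat X_2 - \mat S$, I would factor $\|\mat A\|^2_F - \|\mat B\|^2_F = (\|\mat A\|_F - \|\mat B\|_F)(\|\mat A\|_F + \|\mat B\|_F)$ and apply the reverse triangle inequality together with submultiplicativity of the Frobenius norm, namely $\bigl|\|\mat A\|_F - \|\mat B\|_F\bigr| \le \|\mat A - \mat B\|_F = \|(\mat H_1 - \mat H_2)\mat X_2\|_F \le \|\mat H_1 - \mat H_2\|_F\,\|\mat X_2\|_F$.

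The remaining step is to install uniform bounds over $\cal X$ and $\cal H$. Every $\mat X \in \cal X$ satisfies $\|\mat X\|^2_F = \Tr[\mat X \mat X^\H] = L\,\Tr[\mat R] = LP_\T$ under the total-power normalization $\Tr[\mat R] = P_\T$, so $\|\mat X_2\|_F = \sqrt{LP_\T}$; and for every $\mat H \in \cal H$, $\|\mat H\|_F \le \|\bar{\mat H}\|_F + \|\mat H - \bar{\mat H}\|_F \le \|\bar{\mat H}\|_F + B\theta$, using $\|\cdot\|_F \le B\|\cdot\|$. These bound the second factor as $\|\mat A\|_F + \|\mat B\|_F \le 2\bigl[(\|\bar{\mat H}\|_F + B\theta)\sqrt{LP_\T} + \|\mat S\|_F\bigr]$, while the first factor contributes $\sqrt{LP_\T}\,\|\mat H_1 - \mat H_2\|_F$. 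Converting the final Frobenius norm back via $\|\mat H_1 - \mat H_2\|_F \le B\|\mat H_1 - \mat H_2\|$ and collecting constants yields exactly $|f(\mat H_1) - f(\mat H_2)| \le L_f\,\|\mat H_1 - \mat H_2\|$ with $L_f$ as in \eqref{eq:L_f}; the case $f(\mat H_2) \ge f(\mat H_1)$ is symmetric.

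I expect the only genuine obstacle to be the first paragraph: recognizing that $f$ coincides with the inner value function, which collapses the intractable SVD dependence into a clean pointwise-minimum structure. Once that identification is made, the rest is a routine chain of the reverse triangle inequality, Frobenius submultiplicativity, and the uniform power and uncertainty-radius bounds. I would only take care that the constant $B$ relating $\|\cdot\|_F$ to $\|\cdot\|$ is applied consistently in its two roles, once inside the bound on $\|\mat H\|_F$ over $\cal H$ (producing the $B\theta$ term) and once on the final displacement $\|\mat H_1 - \mat H_2\|$, since these two applications account for the two occurrences of $B$ in the stated $L_f$.
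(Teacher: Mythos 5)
Your proposal is correct and follows essentially the same route as the paper's own proof: both exploit the value-function representation $f(\mat H)=\min_{\mat X\in\cal X}\|\mat H\mat X-\mat S\|^2_F$ to bypass the SVD dependence, then factor the difference of squares, apply the reverse triangle inequality and Frobenius submultiplicativity, and use the uniform bounds $\|\mat X\|_F=\sqrt{LP_\T}$ on $\cal X$ and $\|\mat H\|_F\le\|\bar{\mat H}\|_F+B\theta$ on $\cal H$ to recover exactly the constant $L_f$ in \eqref{eq:L_f}. The only cosmetic difference is that you use the minimizer at $\mat H_2$ as a feasible point for $\mat H_1$ plus symmetry, whereas the paper bounds the difference of minima by $\max_{\mat X\in\cal X}$ of the pointwise difference—two phrasings of the same argument.
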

\begin{proof}
See Appendix E in supplementary materials. \stp
\end{proof}

It is well believed that a Lipschitz continuous objective function is much easier to be globally maximized \cite[p.~7]{sergeyev2013introduction}. An extreme case is that the Lipschitz constant is zero on $\cal H$ so $f$ is constant on $\cal H$. In this case, we just need to evaluate only one point on $\cal H$ to globally maximize $f$. In addition, the Lipschitz continuity of $f$ is also important to control the extent of breaching the equality in \eqref{eq:sensing-centric-thm-condition}; recall Lemma \ref{lem:gap}. Since $f$ is Lipschitz continuous, it is almost everywhere differentiable. However, $f$ is non-convex and non-concave in $\mat H$ on $\cal H$.

\begin{proposition}\label{prop:f-other-properties}
The function $f$ is neither convex nor concave in $\mat H$ on $\cal H$.
\end{proposition}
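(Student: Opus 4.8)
The plan is to bypass the implicit, $\mat H$-dependent SVD in the definition \eqref{eq:obj-f} by first rewriting $f$ in closed form. Since, by \eqref{eq:opt-X-given-H}, the matrix $\sqrt L\,\mat F\mat U\mat I_{N\times L}\mat V^\H$ is exactly the minimizer of $\|\mat H\mat X-\mat S\|_F^2$ over $\cal X$, we have $f(\mat H)=\min_{\mat X\in\cal X}\|\mat H\mat X-\mat S\|_F^2$. Expanding the Frobenius norm and using the feasibility constraint $\mat X\mat X^\H=L\mat R=L\mat F\mat F^\H$, the quadratic term $\Tr[\mat H^\H\mat H\mat X\mat X^\H]=L\,\Tr[\mat H\mat R\mat H^\H]$ becomes independent of the particular feasible $\mat X$, so minimizing $f$ amounts to maximizing $\Re\,\Tr[\mat X^\H\mat H^\H\mat S]$. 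Substituting $\mat X=\sqrt L\,\mat F\mat Q$ with $\mat Q\mat Q^\H=\mat I_N$ and invoking the von Neumann trace inequality (the same computation underlying Theorem \ref{thm:max-form}) yields the closed form
\[
f(\mat H)=L\,\Tr[\mat H\mat R\mat H^\H]-2\sqrt L\,\big\|\mat F^\H\mat H^\H\mat S\big\|_*+\|\mat S\|_F^2,
\]
where $\|\cdot\|_*$ denotes the nuclear norm (the sum of singular values).

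This closed form exposes the decisive structure: $f$ is the sum of the convex quadratic $\mat H\mapsto L\,\Tr[\mat H\mat R\mat H^\H]$ (its Hessian is governed by $\mat R\succeq\mat 0$) and the concave term $\mat H\mapsto-2\sqrt L\,\|\mat F^\H\mat H^\H\mat S\|_*$ (a nuclear norm, convex as the composition of a linear map in $\mat H$ with a norm, negated). Thus $f$ is a difference-of-convex function, which is generically neither convex nor concave; the only remaining task is to certify that neither component can globally dominate the other on $\cal H$. I would certify this with an explicit minimal instance rather than a Hessian argument, since the nuclear-norm term is nonsmooth. Take $N=K=L=1$, $\mat R=1$ (so $\mat F=1$), $\mat S=1$, and nominal channel $\bar{\mat H}=0$, for which $\cal H$ reduces to the disk $\{h\in\C:|h|\le\theta\}$ and the closed form collapses to $f(h)=(|h|-1)^2$. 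Choosing $\theta=2$ keeps all points used below inside $\cal H$.

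The two certifications are then immediate via Jensen-type three-point tests along the real axis, which lies in $\cal H$. For non-convexity, the collinear triple $\{-1,0,1\}$ gives $f(-1)=f(1)=0$ but $f(0)=1>\tfrac12\!\left(f(-1)+f(1)\right)=0$, so the midpoint value exceeds the average. For non-concavity, the triple $\{0,1,2\}$ gives $f(0)=f(2)=1$ but $f(1)=0<\tfrac12\!\left(f(0)+f(2)\right)=1$, so the midpoint value falls below the average. Hence $f$ is simultaneously non-convex and non-concave, which proves the claim. I expect the only genuine obstacle to be conceptual rather than computational: one must recognize that the $\mat H$-dependence of the SVD factors $\mat U,\mat V$ makes $f$ nonsmooth, so that neither the naive definition \eqref{eq:obj-f} nor a second-order test is directly usable; the resolution is to pass to the closed-form difference-of-convex representation and to disprove both properties with finite-point inequalities, while checking that the chosen channels remain feasible for $\cal H$.
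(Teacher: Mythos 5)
Your proof is correct, and it takes a genuinely more structural route than the paper's. The paper's proof (Appendix \ref{append:f-other-properties}) simply asserts that counterexamples to both convexity and concavity can be constructed from the definitions, and supports this with a scalar instance ($\bar H = 0.9+0.5j$, $S = 1+1j$, $R=1$) in which the two midpoint inequalities are checked \emph{numerically} over random Gaussian realizations of $H_1, H_2$ (``one may verify \ldots for some realizations''). You instead first derive the closed form
\[
f(\mat H) = L \Tr[\mat H \mat R \mat H^\H] - 2\sqrt{L}\,\|\mat F^\H \mat H^\H \mat S\|_* + \|\mat S\|_F^2,
\]
which is a correct computation: the quadratic term is constant on $\cal X$ because $\mat X \mat X^\H = L \mat R$, and the von Neumann/Procrustes maximization is exactly the mechanism behind \eqref{eq:opt-X-given-H}. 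This exhibits $f$ as a difference of convex functions, and your deterministic scalar instance $f(h) = (|h|-1)^2$ on the disk $\{h : |h| \le 2\}$ then defeats both convexity and concavity by explicit three-point arithmetic. Your route buys a fully analytic, hand-checkable verification with no appeal to simulation, plus a reusable structural fact (the DC representation) consistent with the ``globally quadratic, locally non-convex/non-concave'' picture of Proposition \ref{prop:f-upper-bound}; the paper's route buys brevity. One caveat, shared equally by both proofs: a single instance establishes the proposition read as ``$f$ is not convex or concave in general,'' not for every $(\mat S, \mat R, \bar{\mat H}, \theta)$ (for $\theta = 0$ it would be vacuous), so your special choice $\bar{\mat H} = 0$, $\theta = 2$ is no less general than the paper's own example.
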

\begin{proof}
See Appendix F in supplementary materials.
\stp
\end{proof}

Nevertheless, the objective function $f(\mat H)$ has tight upper bound $\overline f(\mat H)$ and lower bound $\underline f(\mat H)$. In addition, $\overline f(\mat H)$ is positive-definite quadratic (thus convex).
\begin{proposition}\label{prop:f-upper-bound}
    The function $f(\mat H)$ is upper bounded by 
    \[
        \overline f(\mat H) \defeq \|\mat H \bar{\mat X} - \mat S\|^2_F
    \]
    and lower bounded by
    \[
        \underline f(\mat H) \defeq \Big[ \sqrt{L\Tr[\mat H^\H \mat H \mat R]} - \|\mat S\|_F\Big]^2
    \]
    for all possible $\mat H \in \C^{K \times N}$, not necessarily on $\cal H$,
    where 
    \[
    \bar{\mat X} \defeq \sqrt{L} \mat F \bar{\mat U} \mat I_{N \times L} \bar{\mat V}^\H, 
    \]
    $\bar{\mat U} \bar{\mat \Sigma} \bar{\mat V}^\H \overset{\text{SVD}}{=} \mat F^\H \bar{\mat H}^\H\mat S$, and $\bar{\mat H}$ is the center of $\cal H$. In addition, the following are true.
    \begin{enumerate}
        \item The upper bound $\overline f(\mat H)$ is positive-definite quadratic (thus convex) in $\opvec(\mat H)$.
        \item The upper bound $\overline f(\mat H)$ and the lower bound $\underline f(\mat H)$ are both tight in the sense that the two bounds can be reached for some $\mat H$. 
        \item The upper bound $\overline f(\mat H)$ is $L_f$-Lipschitz continuous.
        \item The difference between $\overline f(\mat H)$ and $f(\mat H)$ is uniformly bounded by $2L_f \cdot \theta$ on $\cal H$, i.e.,
        \[
            \overline f(\mat H) - f(\mat H) \le 2L_f \cdot \theta.
        \]
        As a consequence, the positive-definite quadratic function $\overline f(\mat H) - 2L_f \cdot \theta$ is also a convex lower bound of $f(\mat H)$ on $\cal H$: i.e., $\underline f(\mat H) \defeq \overline f(\mat H) - 2L_f \cdot \theta$ is also a possible choice.
    \end{enumerate}
\end{proposition}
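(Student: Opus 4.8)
The plan is to exploit the fact that, by the closed-form inner minimizer \eqref{eq:opt-X-given-H}, the objective is exactly the inner value function, $f(\mat H) = \min_{\mat X\in\cal X}\|\mat H\mat X - \mat S\|_F^2$; this single identification yields both bounds with almost no computation. For the upper bound, I would observe that the fixed waveform $\bar{\mat X}$ is feasible, since $\bar{\mat X}\bar{\mat X}^\H = L\mat R$ places $\bar{\mat X}\in\cal X$; replacing the minimization by evaluation at this one feasible point gives $f(\mat H)\le\|\mat H\bar{\mat X}-\mat S\|_F^2 = \overline f(\mat H)$ for every $\mat H\in\C^{K\times N}$, where the membership $\mat H\in\cal H$ is never used. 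For the lower bound, the crucial remark is that the quadratic constraint freezes the energy term: for any $\mat X\in\cal X$, $\|\mat H\mat X\|_F^2 = \Tr[\mat H\mat X\mat X^\H\mat H^\H] = L\Tr[\mat H^\H\mat H\mat R]$ by cyclicity, independently of the particular $\mat X$. The reverse triangle inequality $\|\mat H\mat X-\mat S\|_F\ge\bigl|\,\|\mat H\mat X\|_F-\|\mat S\|_F\,\bigr|$, squared, then gives the uniform bound $\underline f(\mat H)$, which therefore also lower-bounds the minimum $f(\mat H)$.

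For claim (1), I would vectorize using $\opvec(\mat H\bar{\mat X}) = (\bar{\mat X}^\T\otimes\mat I_K)\opvec(\mat H)$, so that $\overline f$ is a quadratic in $\opvec(\mat H)$ whose Hermitian Gram matrix is $(\bar{\mat X}^\T\otimes\mat I_K)^\H(\bar{\mat X}^\T\otimes\mat I_K) = \overline{\bar{\mat X}}\,\bar{\mat X}^\T\otimes\mat I_K = L\,\overline{\mat R}\otimes\mat I_K$; since $\mat R\succ 0$ forces $\overline{\mat R}\succ 0$ and a Kronecker product of positive-definite matrices is positive definite, the real quadratic form in the real and imaginary parts of $\opvec(\mat H)$ is positive definite, proving $\overline f$ is positive-definite quadratic and hence convex. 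For claim (3), I would bound $\overline f(\mat H_1)-\overline f(\mat H_2)$ directly through the identity $\|a\|_F^2-\|b\|_F^2 = \Re\langle a-b,\, a+b\rangle_F$ with $a=\mat H_1\bar{\mat X}-\mat S$ and $b=\mat H_2\bar{\mat X}-\mat S$, then apply Cauchy--Schwarz together with $\|\bar{\mat X}\|_{op}\le\sqrt{LP_\T}$ (because $\bar{\mat X}\bar{\mat X}^\H = L\mat R$ and $\lambda_{\max}(\mat R)\le\Tr[\mat R]=P_\T$) and $\|\mat H_i\|_F\le\|\bar{\mat H}\|_F+B\theta$ on $\cal H$; this reproduces precisely the constant $L_f$ in \eqref{eq:L_f}, as it must, since $\overline f$ has the same functional form as $f$ with the fixed waveform $\bar{\mat X}$ in place of $\mat X^*_{\mat H}$.

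Claim (4) I would settle by a pinching argument: $\overline f$ and $f$ coincide at the centre $\bar{\mat H}$, because $\bar{\mat X}$ is by definition the inner minimizer at $\bar{\mat H}$, so $f(\bar{\mat H})=\|\bar{\mat H}\bar{\mat X}-\mat S\|_F^2=\overline f(\bar{\mat H})$. Writing $\overline f(\mat H)-f(\mat H) = [\overline f(\mat H)-\overline f(\bar{\mat H})] + [f(\bar{\mat H})-f(\mat H)]$ and invoking the $L_f$-Lipschitz continuity of $\overline f$ (claim 3) and of $f$ (Proposition \ref{prop:f-continuous}), each bracket is at most $L_f\theta$ on $\cal H$, giving $\overline f-f\le 2L_f\theta$; the stated convex lower bound $\overline f - 2L_f\theta$ then follows at once and inherits positive-definite quadratic convexity from $\overline f$.

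The step I expect to be the main obstacle is claim (2), the tightness of the two bounds. Upper-bound tightness is immediate at $\mat H=\bar{\mat H}$ by the same identity used for claim (4). Lower-bound tightness is more delicate, as it requires saturating the reverse triangle inequality, i.e.\ making $\mat H\mat X^*_{\mat H}$ a nonnegative scalar multiple of $\mat S$. I would make this precise through the orthogonal-Procrustes/nuclear-norm closed form $f(\mat H)=L\Tr[\mat H^\H\mat H\mat R]-2\sqrt{L}\,\|\mat F^\H\mat H^\H\mat S\|_* + \|\mat S\|_F^2$ (the same SVD computation underlying \eqref{eq:opt-X-given-H}), which shows $f(\mat H)-\underline f(\mat H) = 2\sqrt{L}\bigl(\|\mat F^\H\mat H^\H\|_F\|\mat S\|_F - \|\mat F^\H\mat H^\H\mat S\|_*\bigr)\ge 0$ by the inequality $\|\mat A\mat B\|_*\le\|\mat A\|_F\|\mat B\|_F$; equality holds exactly when $\|\mat F^\H\mat H^\H\mat S\|_* = \|\mat F^\H\mat H^\H\|_F\|\mat S\|_F$, which is attained whenever $\mat F^\H\mat H^\H\mat S$ is rank one and aligned. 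Exhibiting one such $\mat H$ — for instance, choosing $\mat H$ so that $\mat F^\H\mat H^\H$ is a rank-one matrix whose row space matches a dominant left factor of $\mat S$ — certifies attainability, and since the statement asks only for tightness at \emph{some} $\mat H$ without restricting $\mat H$ to $\cal H$, this construction suffices.
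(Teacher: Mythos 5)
Your treatment of the two bounds and of claims (1), (3), (4) is correct and essentially the paper's own argument: the upper bound by feasibility of $\bar{\mat X}$ in $\cal X$ (via $\bar{\mat X}\bar{\mat X}^\H = L\mat R$), the lower bound by the reverse triangle inequality with $\|\mat H\mat X\|_F^2 = L\Tr[\mat H^\H\mat H\mat R]$ frozen on $\cal X$, positive definiteness via the Kronecker Hessian $L\overline{\mat R}\otimes\mat I_K \succ \mat 0$, and the pinching decomposition $\overline f(\mat H)-f(\mat H) = [\overline f(\mat H)-\overline f(\bar{\mat H})]+[f(\bar{\mat H})-f(\mat H)]$ for the $2L_f\theta$ bound, which is verbatim the paper's claim-(4) proof. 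For claim (3) the paper merely declares the Lipschitz property ``obvious''; your explicit derivation via $\|a\|_F^2-\|b\|_F^2=\Re\langle a-b,\,a+b\rangle_F$ and $\|\bar{\mat X}\|_{op}\le\sqrt{LP_\T}$ is a legitimate, in fact cleaner, way to recover the constant $L_f$.

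The genuine gap is your witness for lower-bound tightness in claim (2). Your closed form $f(\mat H)=L\Tr[\mat H^\H\mat H\mat R]-2\sqrt L\,\|\mat F^\H\mat H^\H\mat S\|_*+\|\mat S\|_F^2$ (nuclear norm $\|\cdot\|_*$) and the identity $f(\mat H)-\underline f(\mat H)=2\sqrt L\,\bigl(\|\mat F^\H\mat H^\H\|_F\|\mat S\|_F-\|\mat F^\H\mat H^\H\mat S\|_*\bigr)$ are correct, but the proposed rank-one construction does not close this gap. If $\mat F^\H\mat H^\H=\alpha\vec u\vec v^\H$ with $\vec v$ the dominant left singular vector of $\mat S$ and $\vec u,\vec v$ unit vectors, then $\|\mat F^\H\mat H^\H\mat S\|_*=\alpha\,\sigma_{\max}(\mat S)$ while $\|\mat F^\H\mat H^\H\|_F\|\mat S\|_F=\alpha\|\mat S\|_F$, and $\sigma_{\max}(\mat S)<\|\mat S\|_F$ whenever $\operatorname{rank}(\mat S)>1$ --- the generic case here, since $\mat S$ is a $K\times L$ constellation matrix with $K=4$ users. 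Indeed, for rank-one $\mat F^\H\mat H^\H$, equality would force $\vec v$ to be an eigenvector of $\mat S\mat S^\H$ with eigenvalue $\Tr[\mat S\mat S^\H]$, i.e., $\mat S$ itself rank one. The repair is easy: the lower bound is attained trivially at $\mat H=\mat 0$, where $f(\mat 0)=\|\mat S\|_F^2=\underline f(\mat 0)$; a nonzero witness is $\mat H=\mat S\mat Q\mat F^{-1}$ with $\mat Q\in\C^{L\times N}$, $\mat Q^\H\mat Q=\mat I_N$, and the range of $\mat Q$ containing that of $\mat S^\H$, for which $\mat F^\H\mat H^\H\mat S=\mat Q^\H\mat S^\H\mat S$ has nuclear norm exactly $\|\mat S\|_F^2$. (The paper's own proof of this point is the one-liner ``the reverse triangle inequality is tight,'' so your route is more informative --- but as written it fails to certify attainability.)
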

\begin{proof}
    See Appendix G in supplementary materials. 
    \stp
\end{proof}

Note that $\bar f(\mat H)$ is a particularization of \eqref{eq:upper-bound-abstract}, due to which the Lipschitz continuity is crucial. Since the Lipschitz continuity constant in \eqref{eq:L_f} is rather loose, the upper bound $2L_f \cdot \theta$ of the difference between $\bar f(\mat H)$ and $f(\mat H)$ can be much smaller in practice.

Propositions \ref{prop:f-other-properties} and \ref{prop:f-upper-bound} suggest that the landscape of the objective function $f(\mat H)$ is \quotemark{globally positive-definite quadratic} but \quotemark{locally non-convex and non-concave}. An illustration is shown in Fig. \ref{fig:f-landscape}, in whose caption the implications from Propositions \ref{prop:f-other-properties} and \ref{prop:f-upper-bound} are clarified.
\begin{figure}[!htbp]
    \centering
    \includegraphics[height=3.5cm]{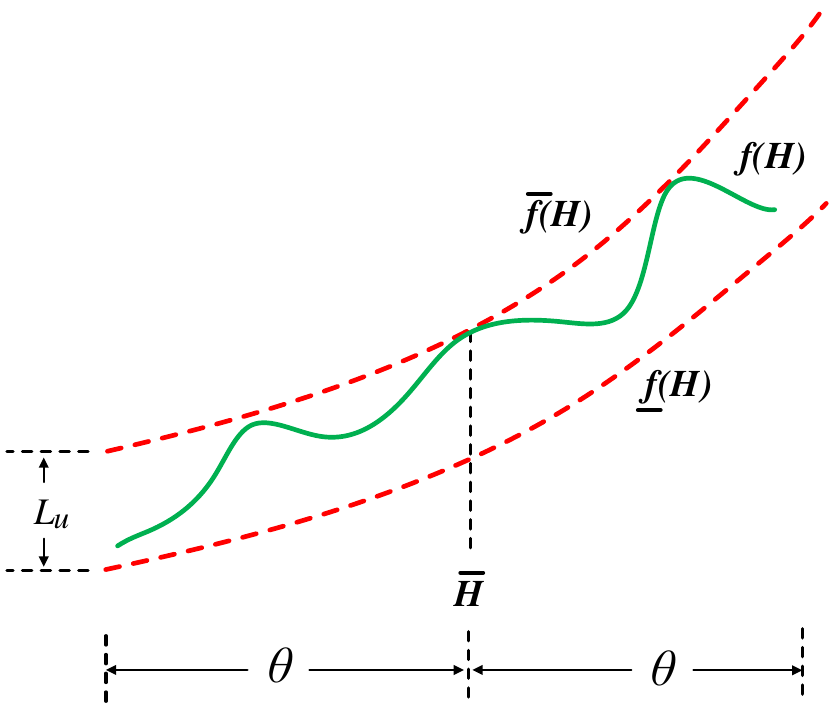}
    \caption{An one-dimensional illustration of the landscape of the objective function $f(\mat H)$ on $\cal H$. The upper bound and the lower bound of $f$ are $\overline{f}$ and $\underline{f}$, respectively; $\underline{f} \defeq \overline{f} - L_u$, where $L_u$ is the uniform gap between $\overline{f}$ and $\underline{f}$. Therefore, the upper bound $\overline{f}$ and the lower bound $\underline{f}$ are both positive-definite quadratic and thus convex. Note that $L_u \le 2L_f\cdot \theta$. The center and the scale of the feasible region $\cal H$ are $\bar{\mat H}$ and $\theta$, respectively. At the center $\bar{\mat H}$, $f(\bar{\mat H})$ reaches its upper bound $\overline{f}(\bar{\mat H})$. Since the three functions $f$, $\overline{f}$, and $\underline{f}$ share the same Lipschitz continuity constant, the objective function $f$ cannot be overly rugged (i.e., it is relatively flat) in between the two bounds $\overline{f}$ and $\underline{f}$. Note that $f$ is almost everywhere differentiable on $\cal H$.}
    \label{fig:f-landscape}
\end{figure}

In consideration of the landscape and other properties (i.e., non-convexity, non-concavity, Lipschitz continuity, etc.) of the objective function $f(\mat H)$ in Problem \eqref{eq:max-form}, the zero-order optimization methods such as heuristic optimization \cite{desale2015heuristic,mart2018handbook,kaveh2019metaheuristics,taillard2023design}, Bayesian optimization \cite{vazquez2010convergence,bull2011convergence,shahriari2015taking}, etc., are the last possible choices. However, a zero-order method is globally optimal for continuous objective functions if and only if the evaluation points governed by the searching algorithm are dense in the feasible region \cite{jones1998efficient}. Nevertheless, the searching space of Problem \eqref{eq:max-form} is extremely large; when the number $K$ of communication users and the number $N$ of transmit antennas are large, the dimension of $\mat H$ will be large because $\mat H \in \C^{K\times N}$. Recall that for large-scale multiple-input multiple-output (MIMO) communication systems, the antenna number $N$ is expected to be large \cite{bjornson2019massive}. To be specific, there are in total $2KN$ real numbers involved. Even if we only query $I$ points in each dimension, we must evaluate the function $f(\mat H)$ at least $I^{2KN}$ times to ensure global optimality for Problem \eqref{eq:max-form}. This is a well-identified dilemma known as \quotemark{the curse of dimensionality} in global optimization; for rigorous complexity results in numerical accuracy and numerical computation, see, e.g., \cite[p.~10]{sergeyev2013introduction}, \cite{eriksson2019scalable}, \cite[Table~1]{malherbe2017global}.

In summary, Problem \eqref{eq:max-form} has challenging properties as a global optimization problem: large dimensionality, non-convexity, and non-concavity. Therefore, an ad-hoc and efficient algorithm for approximately solving Problem \eqref{eq:max-form} is to be designed. 

\subsubsection{Approximate Solution Method to \captext{\eqref{eq:robust-waveform-design}}}\label{sec:solution-max-form}

This subsection studies an approximate solution method to the robust counterpart \eqref{eq:robust-waveform-design} by leveraging the maximization problem \eqref{eq:max-form}. The approximation arises from sacrificing the strict equality in the condition \eqref{eq:sensing-centric-thm-condition}.

By introducing auxiliary variables $\mat A$, $\mat U$, $\mat \Sigma$, and $\mat V$, we consider an optimization equivalent of Problem \eqref{eq:max-form}:
\begin{equation}\label{eq:max-form-trans}
    \begin{array}{cl}
        \displaystyle \max_{\mat H, \mat A, \mat U, \mat \Sigma, \mat V} & \|\sqrt{L} \mat H \mat F \mat A - \mat S\|^2_F \\
        \st & \|\mat H - \bar{\mat H}\| \le \theta,\\
        & \mat U \mat \Sigma \mat V^\H = \mat F^\H \mat H^\H\mat S, \\
        & \mat A = \mat U \mat I_{N \times L} \mat V^\H,~\mat U \mat U^\H = \mat I_N,~\mat V \mat V^\H = \mat I_L, \\
        & \mat \Sigma \text{~is diagonal, non-negative, and real};
    \end{array}
\end{equation}
note from Theorem \ref{thm:max-form} that for every given $\mat H$, the corresponding optimal waveform is  
\begin{equation}\label{eq:optimal-X-given-H}
    \mat X^*_{\mat H} = \sqrt{L} \mat F \mat A.
\end{equation}
In terms of the variables $\mat U$, $\mat \Sigma$, and $\mat V$, the transformed problem \eqref{eq:max-form-trans} is just a feasibility problem because the three variables are not involved in the objective function. Since the rank of the matrix $\mat F^\H \mat H^\H\mat S$ is no larger than $K$ where $K \le N \le L$, the SVD of $\mat F^\H \mat H^\H\mat S$ is not unique for every feasible $\mat H$; to be specific, the feasible values of $\mat U$ and $\mat V$ are not unique. As a result, the feasible value of $\mat A$ is also not unique given $\mat H$. 

In terms of the variable $\mat H$, the vectorized optimization equivalent of \eqref{eq:max-form-trans} is
\begin{equation}\label{eq:max-form-trans-vec}
    \begin{array}{cl}
        \displaystyle \max_{\mat H} \max_{\mat A, \mat U, \mat \Sigma, \mat V} & \|\sqrt{L}(( \mat F \mat A)^\T \otimes \mat I_K) \opvec(\mat H) - \opvec(\mat S)\|^2_2\\
        \st & (\mat F^\T \otimes \mat S^\H) \opvec(\mat H) = \opvec(\mat V \mat \Sigma^\H \mat U^\H), \\
        & \|\opvec(\mat H) - \opvec(\bar{\mat H})\| \le \theta, \\
        & \mat A = \mat U \mat I_{N \times L} \mat V^\H, \\ 
        &\mat U \mat U^\H = \mat I_N,~\mat V \mat V^\H = \mat I_L, \\
        & \mat \Sigma \text{~is diagonal, non-negative, and real}.
    \end{array}
\end{equation}
The key properties of the problem \eqref{eq:max-form-trans-vec} are given below.

\begin{proposition}\label{prop:max-form-solution}
Consider the reformulated problem \eqref{eq:max-form-trans-vec}. The following is true.
\begin{enumerate}
    \item The objective function is positive-definite quadratic in $\opvec(\mat H)$.
    \item In terms of both $\mat H$ and $\mat A$, the objective function is convex, and the constraints are also convex.\footnote{However, note that the optimization problem \eqref{eq:max-form-trans-vec} is not convex in $\mat H$ and $\mat A$ because it is a maximization problem.}
\end{enumerate}
\end{proposition}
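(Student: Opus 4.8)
The plan is to treat the two claims separately, since the second is essentially a block-wise refinement of the first. Throughout I write $\vec h \defeq \opvec(\mat H)$, $\vec s \defeq \opvec(\mat S)$, and $\mat M \defeq \sqrt{L}\,((\mat F \mat A)^\T \otimes \mat I_K)$, so that the objective of \eqref{eq:max-form-trans-vec} is exactly the squared residual $\|\mat M \vec h - \vec s\|_2^2$.

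For claim (1) I would expand $\|\mat M \vec h - \vec s\|_2^2 = \vec h^\H \mat M^\H \mat M \vec h - 2\Re(\vec s^\H \mat M \vec h) + \vec s^\H \vec s$, so that the only thing to verify is that the Hessian $\mat M^\H \mat M$ is positive definite. First I would simplify $\mat M^\H \mat M = L\,(\overline{\mat F \mat A}\,(\mat F \mat A)^\T \otimes \mat I_K)$ using the mixed-product rule $(\mat P\otimes\mat Q)(\mat P'\otimes\mat Q')=(\mat P\mat P'\otimes\mat Q\mat Q')$ together with $(\mat B^\T)^\H=\overline{\mat B}$. The key step is to observe that $\overline{\mat F\mat A}\,(\mat F\mat A)^\T = \overline{\mat F\mat A\,(\mat F\mat A)^\H}$, and that under the feasibility constraints of \eqref{eq:max-form-trans-vec} the orthonormality $\mat U\mat U^\H=\mat I_N$, $\mat V\mat V^\H=\mat I_L$ combined with $\mat A = \mat U\mat I_{N\times L}\mat V^\H$ and $\mat I_{N\times L}\mat I_{N\times L}^\H = \mat I_N$ collapse the Gram matrix to $\mat F\mat A\,(\mat F\mat A)^\H = \mat F\mat F^\H = \mat R$. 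Hence $\mat M^\H\mat M = L\,(\overline{\mat R}\otimes\mat I_K)$. Since $\mat R$ is positive definite by assumption, so is its entrywise conjugate $\overline{\mat R}$ (conjugation preserves the real spectrum of a Hermitian matrix), and the Kronecker product of two positive-definite matrices is positive definite; therefore $\mat M^\H \mat M \succ 0$ and the objective is positive-definite quadratic in $\vec h$.

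For claim (2) I would argue block-wise convexity rather than joint convexity, since the objective is bilinear in the pair $(\mat H, \mat A)$ through the product $\mat H\mat F\mat A$, and a squared norm of a bilinear map is not jointly convex in general; this is precisely why the maximization \eqref{eq:max-form-trans-vec} is non-convex (cf. the footnote). For the $\mat H$-block I would reuse claim (1): with $\mat A$ fixed, the objective is already a convex (indeed positive-definite) quadratic in $\vec h$. For the $\mat A$-block I would note that, with $\mat H$ fixed, the map $\mat A \mapsto \sqrt{L}\,\mat H\mat F\mat A - \mat S$ is affine, so its squared Frobenius norm is convex in $\mat A$. Finally, the constraints that involve $\mat H$ and $\mat A$ are convex: $\|\mat H - \bar{\mat H}\| \le \theta$ is a norm ball in $\vec h$, while $\mat A = \mat U\mat I_{N\times L}\mat V^\H$ is, for fixed $\mat U$ and $\mat V$, a linear (hence convex) equality in $\mat A$.

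I expect the main obstacle to lie entirely in claim (1): the conjugate-transpose bookkeeping through the Kronecker products is error-prone, and one must be careful that the simplification $\mat F\mat A\,(\mat F\mat A)^\H = \mat R$ uses the full set of feasibility constraints. It is this collapse, and not any property of $\mat H$ itself, that makes the Hessian simultaneously constant in $\mat A$ and positive definite; claim (2) is then immediate once one adopts the block-wise reading and invokes claim (1) for the $\mat H$-block.
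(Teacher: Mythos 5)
Your proof is correct and follows essentially the same route as the paper's: the same Kronecker mixed-product computation collapsing the Hessian in $\opvec(\mat H)$ to $L\,\mat R^\T\otimes\mat I_K$ under the feasibility constraints (your $\overline{\mat R}$ equals $\mat R^\T$ since $\mat R$ is Hermitian), followed by block-wise convexity in $\mat H$ and $\mat A$, with the affine-composition argument for the $\mat A$-block playing the role of the paper's positive-semidefinite Hessian $L(\mat I_L\otimes\mat H\mat F)^\H(\mat I_L\otimes\mat H\mat F)$. The only cosmetic differences are that you make explicit the collapse $\mat A\mat A^\H=\mat I_N$ that the paper uses implicitly, and that you omit remarking that the remaining SVD equality constraint $(\mat F^\T\otimes\mat S^\H)\opvec(\mat H)=\opvec(\mat V\mat\Sigma^\H\mat U^\H)$ is linear, hence convex, in $\opvec(\mat H)$ for fixed $(\mat U,\mat\Sigma,\mat V)$.
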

\begin{proof}
    See Appendix H in supplementary materials.
    \stp
\end{proof}

To limit the technical complexity of, and obtain an elegant solution to, Problem \eqref{eq:max-form-trans-vec}, we impose an additional constraint 
\begin{equation}\label{eq:strategy}
\mat U \mat I_{N\times L} \mat V^\H = \bar{\mat U} \mat I_{N\times L} \bar{\mat V}^\H    
\end{equation}
to limit the size and shape of the feasible region of $\mat H$, and thus, simplify the problem. Intuitively speaking, this proposal is to let the nominally optimal solution $\bar{\mat X}$ \textbf{simultaneously} solve the robust problem, that is,
\begin{equation}\label{eq:trick}
\max_{\mat H} \min_{\mat X} \|\mat H \mat X - \mat S\|^2_F = \max_{\mat H} \|\mat H \bar{\mat X} - \mat S\|^2_F,
\end{equation} 
which adheres to the condition \eqref{eq:sensing-centric-thm-condition}; cf. Lemma \ref{lem:gap}.
Suppose that $\mat H^*$ solves \eqref{eq:max-form} (or equivalently \eqref{eq:max-form-trans} and \eqref{eq:max-form-trans-vec}) and $\mat X^*$ is associated with $\mat H^*$ through \eqref{eq:opt-X-given-H}; that is
\[
(\mat H^*, \mat X^*) \in \argmax_{\mat H \in \cal H} \argmin_{\mat X \in \cal X} \|\mat H \mat X - \mat S\|^2_F.
\]
By employing the strategies \eqref{eq:strategy} and \eqref{eq:trick}, the relation of 
\begin{equation}\label{eq:implication-of-strategy}
\mat X^* = \bar{\mat X}
\end{equation}
can be guaranteed where $\bar{\mat X} = \sqrt{L} \mat F \bar{\mat A} = \sqrt{L} \mat F \bar{\mat U} \mat I_{N\times L} \bar{\mat V}^\H$. We do not directly consider the constraint $\sqrt{L} \mat F \mat U \mat I_{N\times L} \mat V^\H = \bar{\mat X}$ (i.e., $\mat X = \matb X$) because $\mat F$ is invertible, and therefore, the two ways are equivalent. 

The method below formalizes the above reasoning, which approximately solves \eqref{eq:max-form} and compromises the exact equality constraint in \eqref{eq:sensing-centric-thm-condition}.

\begin{method}\label{method:max-form-solution}
Suppose a SVD of $\mat F^\H \bar{\mat H}^\H\mat S$ is $\bar{\mat U} \bar{\mat \Sigma} \bar{\mat V}^\H \overset{\text{SVD}}{=} \mat F^\H \bar{\mat H}^\H\mat S$. Let $\bar{\mat A} \defeq \bar{\mat U} \mat I_{N \times L} \bar{\mat V}^\H$. If we consider $\mat U$ and $\mat V$ such that $\mat U \mat \Sigma \mat V^\H \overset{\text{SVD}}{=} \mat F^\H \mat H^\H\mat S$ and $\mat U \mat I_{N\times L} \mat V^\H = \bar{\mat U} \mat I_{N\times L} \bar{\mat V}^\H$, then Problem \eqref{eq:max-form-trans-vec} is approximately solved by $(\mat H^*, \bar{\mat A}, {\mat U}^*, \mat \Sigma^*, {\mat V}^*)$ where $\mat H^*$ is a global maximum of the upper bound function $\overline{f}(\mat H)$, i.e.,
    \begin{equation}\label{eq:max-form-trans-vec-sol-H}
        \begin{array}{ccl}
            \mat H^* &\in \displaystyle \argmax_{\mat H}  & \|\sqrt{L}(( \mat F \bar{\mat A})^\T \otimes \mat I_K) \opvec(\mat H) - \opvec(\mat S)\|^2_2\\
            & \st & \|\opvec(\mat H) - \opvec(\bar{\mat H})\| \le \theta, 
        \end{array}
    \end{equation}
and $({\mat U}^*, \mat \Sigma^*, {\mat V}^*)$ is a global minimum of
    \begin{equation}\label{eq:max-form-trans-vec-sol-UV}
        \begin{array}{cl}
            \displaystyle \min_{\mat U, \mat \Sigma, \mat V}  & \alpha \|\mat U \mat \Sigma \mat V^\H - \mat F^\H \mat H^{*\H} \mat S\|^2_F + \|\mat U \mat I_{N\times L} \mat V^\H - \bar{\mat A} \|^2_F \\
            \st & \mat U \mat U^\H = \mat I_N,~\mat V \mat V^\H = \mat I_L, \\
            & \mat \Sigma \text{~is diagonal, non-negative, and real},
        \end{array}
    \end{equation}
where $\alpha \ge 0$ is a large real number to numerically ensure that $\mat U \mat \Sigma \mat V^\H$ is a SVD of $\mat F^\H \mat H^{*\H} \mat S$. \stp
\end{method}

In Method \ref{method:max-form-solution}, the exact equality constraint in \eqref{eq:strategy}, and therefore in \eqref{eq:trick}, is sacrificed; cf. Condition \eqref{eq:sensing-centric-thm-condition} and Insight \ref{insight:approx-solution}. Specifically, we just require $\mat U \mat I_{N\times L} \mat V^\H$ to be as close as possible to $\bar{\mat A}$. This relaxation brings two benefits: 
\begin{enumerate}
    \item First, the relaxed problem \eqref{eq:max-form-trans-vec-sol-UV} can be technically readily solved over $\mat U$, $\mat \Sigma$, and $\mat V$ (see Proposition \ref{thm:solution-remedy-problem});
    
    \item Second, the values of $\mat U$ and $\mat V$ that strictly satisfy $\mat U \mat I_{N\times L} \mat V^\H = \bar{\mat A}$ might not exist for every specified $\mat H$; some technical regularization conditions are needed to ensure the existence. However, if we use the soft-constraint counterpart as in \eqref{eq:max-form-trans-vec-sol-UV}, we do not need to explicitly derive the regularization conditions.
\end{enumerate}
As a result of Method \ref{method:max-form-solution}, compared to the strict equality in \eqref{eq:implication-of-strategy}, an approximation is achieved:
    \begin{equation}\label{eq:real-implementation-of-strategy}
        \mat X^* \approx \bar{\mat X}.
    \end{equation}
For an intuitive interpretation of Method \ref{method:max-form-solution}, see Appendix I in supplementary materials.

As indicated by Method \ref{method:max-form-solution}, the key to approximately solving the robust counterpart \eqref{eq:robust-waveform-design} is to solve the two sub-problems \eqref{eq:max-form-trans-vec-sol-H} and \eqref{eq:max-form-trans-vec-sol-UV}. 

\textbf{Solution to Problem \eqref{eq:max-form-trans-vec-sol-H}}: Since Problem \eqref{eq:max-form-trans-vec-sol-H} is not convex, it is not straightforward to solve. Nevertheless, Problem \eqref{eq:max-form-trans-vec-sol-H} has nice properties to benefit the design of a globally optimal algorithm: it is positive-definite quadratic in $\opvec(\mat H)$ and has convex constraints for $\opvec(\mat H)$. We rewrite \eqref{eq:max-form-trans-vec-sol-H} in real spaces in a compact form:
\begin{equation}\label{eq:H-compact}
    \begin{array}{cl}
        \displaystyle \max_{\vec h \in \R^{2KN}} & \|\mat C \vec h - \vec s\|^2_2, \\
        \st & \|\vec h - \bar{\vec h}\| \le \theta,
    \end{array}
\end{equation}
where 
\[
\vec h \defeq \left[
\begin{array}{c}
\operatorname{real}(\opvec(\mat H)) \\
\operatorname{imag}(\opvec(\mat H))
\end{array}
\right] \in \R^{2KN}
\]
is a real-valued vector constructed by stacking the real and imaginary components of $\opvec(\mat H)$; the quantities 
\begin{quote}
$\mat C \in \R^{2KL \times 2KN}$,~~~~${\vec s} \in \R^{2KL}$,~~~~and $\bar{\vec h} \in \R^{2KN}$
\end{quote}
are constructed from \eqref{eq:max-form-trans-vec-sol-H} in a similar way, during which the lemma below is useful.
\begin{lemma}\label{lem:stacking}
    Suppose $\mat \Xi \defeq \mat \Gamma + \mat \Theta j$ and $\vec \xi \defeq \vec a + \vec b j$ where $j$ denotes the imaginary unit; $\mat \Gamma$ and $\mat \Theta$ are real matrices and $\vec a$ and $\vec b$ are real vectors. We have
    \[
        \left[
        \begin{array}{cc}
            \operatorname{real}(\mat \Xi \vec \xi) \\
            \operatorname{imag}(\mat \Xi \vec \xi)
        \end{array}
        \right] = 
        \left[
        \begin{array}{cc}
            \mat \Gamma  &  - \mat \Theta \\
            \mat \Theta & \mat \Gamma
        \end{array}
        \right]
        \left[
        \begin{array}{c}
            \vec a \\
            \vec b
        \end{array}
        \right].
    \]
\end{lemma}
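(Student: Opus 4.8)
The plan is to prove the identity by direct algebraic expansion of the complex product $\mat \Xi \vec \xi$ and then match its real and imaginary parts against the two block rows of the right-hand side. Both sides are explicit finite expressions, so no analytic machinery is required; the result follows purely from the distributive law over $\C$ together with the relation $j^2 = -1$.

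First I would substitute the decompositions $\mat \Xi = \mat \Gamma + \mat \Theta j$ and $\vec \xi = \vec a + \vec b j$ and expand the product:
\[
\mat \Xi \vec \xi = (\mat \Gamma + \mat \Theta j)(\vec a + \vec b j) = \mat \Gamma \vec a + (\mat \Gamma \vec b + \mat \Theta \vec a)\, j + \mat \Theta \vec b\, j^2.
\]
Using $j^2 = -1$, this regroups cleanly into a real part and an imaginary part:
\[
\operatorname{real}(\mat \Xi \vec \xi) = \mat \Gamma \vec a - \mat \Theta \vec b, \qquad \operatorname{imag}(\mat \Xi \vec \xi) = \mat \Theta \vec a + \mat \Gamma \vec b.
\]

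Next I would compute the right-hand side block matrix--vector product directly,
\[
\left[
\begin{array}{cc}
\mat \Gamma & -\mat \Theta \\
\mat \Theta & \mat \Gamma
\end{array}
\right]
\left[
\begin{array}{c}
\vec a \\
\vec b
\end{array}
\right]
=
\left[
\begin{array}{c}
\mat \Gamma \vec a - \mat \Theta \vec b \\
\mat \Theta \vec a + \mat \Gamma \vec b
\end{array}
\right],
\]
and observe that the two stacked blocks coincide term-for-term with the real and imaginary parts computed above. This establishes the claimed equality.

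The only point demanding any care — and it is hardly an obstacle — is the sign bookkeeping induced by $j^2 = -1$, which is precisely what generates the off-diagonal sign pattern of the real representation matrix (the $-\mat \Theta$ in the top-right block against $+\mat \Theta$ in the bottom-left). Since $\mat \Gamma$, $\mat \Theta$ are real matrices and $\vec a$, $\vec b$ are real vectors, each of the four terms above is genuinely real, so the decomposition into real and imaginary parts is unambiguous and the identity holds for any conformable dimensions. I expect the author's proof to be essentially this one-line expansion.
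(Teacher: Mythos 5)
Your proof is correct and follows exactly the paper's own argument (Appendix J): expand $(\mat \Gamma + \mat \Theta j)(\vec a + \vec b j)$ using $j^2 = -1$, read off $\operatorname{real}(\mat \Xi \vec \xi) = \mat \Gamma \vec a - \mat \Theta \vec b$ and $\operatorname{imag}(\mat \Xi \vec \xi) = \mat \Theta \vec a + \mat \Gamma \vec b$, and match them against the block matrix--vector product. The only difference is that you spell out the block multiplication explicitly, whereas the paper leaves that final matching implicit.
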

\begin{proof}
    See Appendix J in supplementary materials. \stp
\end{proof}

Let the objective function of \eqref{eq:H-compact} be $p(\vec h)$. We can further show that $p(\vec h)$ is positive-definite and strongly convex.
\begin{lemma}\label{lem:positive-definite}
    The objective function $p(\vec h)$ of \eqref{eq:H-compact} is positive definite and strongly convex in $\vec h$.
\end{lemma}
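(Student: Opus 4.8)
The plan is to reduce the claim to a single rank condition. First I would expand the quadratic $p(\vec h) = \|\mat C \vec h - \vec s\|_2^2 = \vec h^\T (\mat C^\T \mat C)\vec h - 2(\mat C^\T \vec s)^\T \vec h + \|\vec s\|_2^2$, so that its Hessian equals the constant matrix $2\mat C^\T \mat C$ at every $\vec h$. Both desired conclusions then follow at once from $\mat C^\T \mat C \succ 0$: positive-definiteness of the quadratic is immediate, and strong convexity holds with modulus $\lambda_{\min}(2\mat C^\T \mat C) > 0$ precisely because the Hessian is constant. Since $\mat C^\T \mat C \succ 0$ is equivalent to $\mat C$ having trivial kernel (full column rank $2KN$), the whole lemma comes down to proving that $\mat C$ is injective.

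Next I would trace $\mat C$ back to its complex origin. By construction of \eqref{eq:H-compact} from the objective in \eqref{eq:max-form-trans-vec-sol-H}, the underlying complex matrix is $\mat M \defeq \sqrt{L}\,((\mat F\bar{\mat A})^\T \otimes \mat I_K) \in \C^{KL \times KN}$, and Lemma \ref{lem:stacking} identifies $\mat C$ as the real representation of $\mat M$ acting on the stacked vector $\vec h$ of the real and imaginary parts of $\opvec(\mat H)$. Because that stacking is a real-linear bijection and $\mat C\vec h$ merely collects the real and imaginary parts of $\mat M\,\opvec(\mat H)$, I have $\mat C\vec h = \vec 0 \iff \mat M\,\opvec(\mat H) = \vec 0$; hence $\mat C$ is injective iff $\mat M$ has full (complex) column rank $KN$. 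To establish the latter, I would use invertibility of $\mat F$ (from $\mat R = \mat F\mat F^\H \succ 0$) and unitarity of $\bar{\mat U},\bar{\mat V}$ to get $\operatorname{rank}(\mat F\bar{\mat A}) = \operatorname{rank}(\bar{\mat A}) = \operatorname{rank}(\mat I_{N\times L}) = N$, and then the Kronecker rank identity $\operatorname{rank}(\mat P\otimes\mat Q)=\operatorname{rank}(\mat P)\operatorname{rank}(\mat Q)$ to conclude $\operatorname{rank}(\mat M) = N\cdot K = KN$. Since $K \le N \le L$ makes $\mat M$ tall ($KL \ge KN$), this rank equals its column count, so $\mat M$ has full column rank and $\mat C^\T\mat C \succ 0$ follows.

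The routine parts are the Hessian computation and the Kronecker rank formula. I expect the only delicate step to be the complex-to-real transfer of injectivity: one must argue carefully, via the bijective stacking and Lemma \ref{lem:stacking}, that full column rank of the complex $\mat M$ is inherited by the real block matrix $\mat C$, so that the strong-convexity modulus $\lambda_{\min}(\mat C^\T\mat C)$ is genuinely strictly positive rather than merely nonnegative. This is exactly what upgrades the positive-semidefiniteness that any least-squares Hessian automatically enjoys into the strict positive-definiteness that the lemma asserts.
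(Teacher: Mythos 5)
Your proof is correct, and it lands on the same two pillars as the paper's: positive definiteness of $\mat C^\T \mat C$, and the observation that a quadratic with constant positive-definite Hessian is strongly convex (the paper phrases the latter as: $p(\vec h) - \tfrac{\mu}{2}\vec h^\T \vec h$ is convex for small enough $\mu > 0$, which is the same thing). The difference is in how $\mat C^\T \mat C \succ \mat 0$ is obtained. The paper simply cites Proposition \ref{prop:max-form-solution}, whose proof computes the complex Gram matrix \emph{exactly}: since $\bar{\mat A}\bar{\mat A}^\H = \mat I_N$, one has $\bigl((\mat F \bar{\mat A})^\T \otimes \mat I_K\bigr)^\H \bigl((\mat F \bar{\mat A})^\T \otimes \mat I_K\bigr) = \mat R^\T \otimes \mat I_K \succ \mat 0$ because $\mat R = \mat F \mat F^\H$ is positive definite; positive definiteness of the real matrix $\mat C^\T \mat C$ is then asserted to follow \quotemark{immediately}. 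You instead run a rank-counting argument: $\operatorname{rank}(\mat F\bar{\mat A}) = \operatorname{rank}(\mat I_{N\times L}) = N$ by invertibility of $\mat F$ and unitarity of $\bar{\mat U}, \bar{\mat V}$, the Kronecker rank identity then gives full column rank $KN$ for the complex matrix $\mat M$, and injectivity transfers to $\mat C$ via the kernel equivalence $\mat C \vec h = \vec 0 \iff \mat M \opvec(\mat H) = \vec 0$ through the stacking bijection of Lemma \ref{lem:stacking}. The two arguments carry essentially the same content, but each buys something: your kernel-transfer step is more careful precisely where the paper is glossy — the passage from complex positive definiteness to the real representation $\mat C^\T\mat C$ — whereas the paper's explicit Gram identity is more quantitative, since it pins the Hessian down as (the real representation of) $2L\,\mat R^\T \otimes \mat I_K$ and hence yields an explicit strong-convexity modulus in terms of $\lambda_{\min}(\mat R)$, rather than only the qualitative statement $\lambda_{\min}(\mat C^\T\mat C) > 0$.
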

\begin{proof}
    See Appendix K in supplementary materials.
    \stp
\end{proof}

Equipped with Lemma \ref{lem:positive-definite}, Proposition \ref{prop:convex-quadratic-max} below provides the globally optimal solution(s) to Problem \eqref{eq:H-compact}, and hence, to Problem \eqref{eq:max-form-trans-vec-sol-H}.
\begin{proposition}[Solution to Upper-Bound Problem \eqref{eq:H-compact}]\label{prop:convex-quadratic-max}
The globally optimal solution to the upper-bound problem \eqref{eq:H-compact}, and therefore to \eqref{eq:max-form-trans-vec-sol-H}, is given by the following algorithmic iteration process.
\begin{enumerate}[Step a)]
    \item Initialization: Let the iteration count $k = 0$. Choose $\vec h_0$ from the domain $\{\vec h:~\|\vec h - \bar{\vec h}\| \le \theta\}$ but $\vec h_0 \neq (\mat C^\T \mat C)^{-1} \mat C^\T \vec s$.
    \item Solve
    \begin{equation}\label{eq:sub-prob-1}
    \begin{array}{ccl}
        \vec y_k =  & \displaystyle \argmax_{\vec y \in \R^{2KN}} &(\vec h^\T_{k-1} \mat C^\T \mat C - \vec s^\T \mat C) \cdot \vec y \\
        & \st &\|\mat C \vec y - \vec s\|^2_2 = \|\mat C \vec h_{k-1} - \vec s\|^2_2.
    \end{array}
    \end{equation}
    The constraint can be changed to $\|\mat C \vec y - \vec s\|^2_2 \le \|\mat C \vec h_{k-1} - \vec s\|^2_2$ without losing the optimality.
    \item Solve
    \begin{equation}\label{eq:sub-prob-2}
    \begin{array}{ccl}
        \vec h_{k} &= \displaystyle \argmax_{\vec h \in \R^{2KN}} & (\vec y^\T_k \mat C^\T \mat C - \vec s^\T \mat C) \cdot \vec h \\
        & \st & \|\vec h - \bar{\vec h}\| \le \theta.
    \end{array}
    \end{equation}
    \item Repeat Step b) and Step c) until 
    \begin{equation}\label{eq:converge-condition}
        \ip{\mat C^\T \mat C \vec y_k - \mat C^\T\vec s}{~\vec h_{k} - \vec y_k} \le 0
    \end{equation}
    where $\ip{\cdot}{\cdot}$ denotes the inner product in real spaces.
\end{enumerate}
When the iteration process terminates, $\vec h_{k}$ is a globally optimal solution to Problem \eqref{eq:H-compact}. In addition, for every $k$, it holds that $p(\vec h_{k}) > p(\vec h_{k-1})$; i.e., in every iteration round, the solution $\vec h_k$ is improved.
\end{proposition}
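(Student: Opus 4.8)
The plan is to exploit the quadratic structure of $p(\vec h)=\|\mat C\vec h-\vec s\|_2^2=\vec h^\T\mat C^\T\mat C\vec h-2\vec s^\T\mat C\vec h+\vec s^\T\vec s$, whose gradient is $\nabla p(\vec h)=2(\mat C^\T\mat C\vec h-\mat C^\T\vec s)$. By Lemma~\ref{lem:positive-definite} we have $\mat C^\T\mat C\succ\mat 0$, so $p$ is strongly convex, the feasible ball $\cal B\defeq\{\vec h:\|\vec h-\bar{\vec h}\|\le\theta\}$ is convex and compact, and the maximum is therefore attained on $\partial\cal B$. The first thing I would record is that the coefficient vectors in \eqref{eq:sub-prob-1} and \eqref{eq:sub-prob-2} are exactly $\tfrac12\nabla p$ evaluated at $\vec h_{k-1}$ and $\vec y_k$, respectively; thus Step~b) maximizes a linearization of $p$ over the level set $\{p(\vec y)=p(\vec h_{k-1})\}$ (so $p(\vec y_k)=p(\vec h_{k-1})$ and $\vec y_k$ remains feasible), while Step~c) is a conditional-gradient step that maximizes $\nabla p(\vec y_k)^\T\vec h$ over $\cal B$.

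For the strict-ascent claim $p(\vec h_k)>p(\vec h_{k-1})$ I would use the exact identity, valid for any quadratic, $p(\vec z)=p(\vec w)+\nabla p(\vec w)^\T(\vec z-\vec w)+(\vec z-\vec w)^\T\mat C^\T\mat C(\vec z-\vec w)$. Taking $\vec w=\vec y_k$, $\vec z=\vec h_k$ and using $p(\vec y_k)=p(\vec h_{k-1})$ gives $p(\vec h_k)-p(\vec h_{k-1})=\nabla p(\vec y_k)^\T(\vec h_k-\vec y_k)+(\vec h_k-\vec y_k)^\T\mat C^\T\mat C(\vec h_k-\vec y_k)$. The second term is nonnegative since $\mat C^\T\mat C\succ\mat 0$; the first equals $2\ip{\mat C^\T\mat C\vec y_k-\mat C^\T\vec s}{\vec h_k-\vec y_k}$, which is strictly positive precisely when the stopping test \eqref{eq:converge-condition} fails. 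Hence every non-terminating round strictly increases $p$, and, because $p$ is bounded above on the compact set $\cal B$, the sequence is well defined and the stopping test is eventually met.

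For global optimality at termination I would pass to the first-order conditions. Writing $\vec h^*\defeq\vec y_k$ at termination, combining the Step~c) optimality of $\vec h_k$ (namely $\nabla p(\vec h^*)^\T(\vec z-\vec h^*)\le\nabla p(\vec h^*)^\T(\vec h_k-\vec h^*)$ for all $\vec z\in\cal B$) with the stopping test $\nabla p(\vec h^*)^\T(\vec h_k-\vec h^*)\le0$ yields stationarity $\nabla p(\vec h^*)^\T(\vec z-\vec h^*)\le0$ for all $\vec z\in\cal B$; for the Euclidean ball this is the KKT system $\mat C^\T\mat C\vec h^*-\mat C^\T\vec s=\mu(\vec h^*-\bar{\vec h})$ with $\mu\ge0$ and the constraint active, $\|\vec h^*-\bar{\vec h}\|=\theta$. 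The remaining step is to upgrade this first-order certificate to a global one. Here I would invoke the hidden convexity of the trust-region subproblem: maximizing the convex quadratic $p$ over a Euclidean ball has no duality gap, and its global optimality condition requires, beyond KKT, the curvature condition $\mu\ge\lambda_{\max}(\mat C^\T\mat C)$. I would verify this using the power-iteration structure of Steps~b)--c): in shifted coordinates the update reads $\vec h_k=\bar{\vec h}+\theta\,\nabla p(\vec h_{k-1})/\|\nabla p(\vec h_{k-1})\|$, a normalized shifted power iteration whose strictly ascending trajectory aligns the terminal normal $\vec h^*-\bar{\vec h}$ with the dominant-curvature direction of $\mat C^\T\mat C$, so that $\mu$ attains the top of the spectrum; the S-lemma then certifies $p(\vec z)\le p(\vec h^*)$ for all $\vec z\in\cal B$.

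The main obstacle is exactly this final upgrade. Since $p$ is convex, leaving a boundary stationary point along a high-curvature direction can still raise $p$, so first-order stationarity alone is insufficient and one must exclude convergence to the non-dominant (non-global) boundary stationary points of the trust-region problem. I expect to control this through the strict-ascent property together with the spectral structure of $\mat C^\T\mat C$, which by $\bar{\mat A}\bar{\mat A}^\H=\mat I_N$ is a Kronecker-structured Gram matrix whose spectrum mirrors that of $L\mat R$; I would then state the mild genericity/initialization hypothesis on $\vec h_0$ under which the power iteration is driven to the dominant mode, and reconcile the general-norm ball of \eqref{eq:uncertainty-set} with the Euclidean-ball trust-region theory on which the hidden-convexity certificate rests.
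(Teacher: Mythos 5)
Your strict-ascent argument is correct and in fact more self-contained than what the paper provides: the exact second-order expansion of the quadratic $p$ at $\vec y_k$, combined with $p(\vec y_k)=p(\vec h_{k-1})$, does show that failure of the stopping test \eqref{eq:converge-condition} forces $p(\vec h_k)>p(\vec h_{k-1})$. (Minor slip: $\vec y_k$ need not be feasible for \eqref{eq:H-compact} --- the level set through $\vec h_{k-1}$ is an ellipsoid that typically leaves the ball --- but your argument never uses its feasibility.) The divergence from the paper, and the genuine gap, lies in the global-optimality half, which is the real content of the proposition. The paper's proof does not use trust-region duality at all: it invokes Enhbat's level-set characterization of global maxima of a convex function over a compact set ([49, Thm.~1]), namely that $\vec u$ is globally optimal iff $\ip{\nabla p(\vec y)}{~\vec h-\vec y}\le 0$ for every feasible $\vec h$ \emph{and every $\vec y$ on the entire level set} $\{p=p(\vec u)\}$ --- Step b) is precisely the device that searches this level set --- and then cites [49, Thm.~4] for convergence and global optimality of the resulting iteration. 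Your route instead extracts a first-order KKT certificate at termination and tries to upgrade it to a global one via the hidden-convexity condition $\mu\ge\lambda_{\max}(\mat C^\T \mat C)$ on the multiplier, to be verified by a claimed power-iteration alignment of the terminal normal with the dominant eigendirection. That verification is the entire theorem, and it does not hold without an additional (unstated) genericity hypothesis on $\vec h_0$.

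Concretely, your own (correct) reduction $\vec h_k=\bar{\vec h}+\theta\,\nabla p(\vec h_{k-1})/\|\nabla p(\vec h_{k-1})\|_2$ defeats the proposed mechanism. Take $\vec s\defeq\mat C\bar{\vec h}$, so that $(\mat C^\T\mat C)^{-1}\mat C^\T\vec s=\bar{\vec h}$ and the ball is centered at the unconstrained minimizer of $p$, and initialize $\vec h_0\defeq\bar{\vec h}+\theta\vec v$, where $\vec v$ is a unit eigenvector of $\mat C^\T\mat C$ for a non-maximal eigenvalue $\lambda_i$; this is a legal initialization since $\vec h_0\neq\bar{\vec h}$. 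Then $\mat C^\T\mat C\vec h_0-\mat C^\T\vec s=\theta\lambda_i\vec v$, a direct computation with Proposition \ref{prop:sub-prob-1} gives $\vec y_1=\vec h_0$, Proposition \ref{prop:sub-prob-2} gives $\vec h_1=\vec h_0$, and the stopping test \eqref{eq:converge-condition} fires with value $0$ at the first iteration. The output $\vec h_1=\bar{\vec h}+\theta\vec v$ is a KKT point whose multiplier equals $2\lambda_i<2\lambda_{\max}(\mat C^\T\mat C)$, and $p(\vec h_1)=\theta^2\lambda_i$ is strictly below the global maximum $\theta^2\lambda_{\max}(\mat C^\T\mat C)$. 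So no alignment with the dominant mode occurs, the curvature condition fails, and the S-lemma certificate cannot be issued; moreover, even under generic initialization the induced power iteration aligns only asymptotically, so exact finite termination \emph{at} a global maximizer is not delivered by your argument either. To close the proof you would need either to import [49, Thms.~1 and 4] as the paper does --- it is the hypotheses of that reference that absorb such degenerate cases --- or to formulate and prove the genericity/spectral claim that your sketch currently only conjectures.
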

\begin{proof}
    See Appendix L in supplementary materials.
    \stp
\end{proof}

In addition to Proposition \ref{prop:convex-quadratic-max}, \cite[Algo.~2]{ben2022algorithm} is an alternative method that has been empirically shown to be computationally efficient and almost globally optimal for convex-quadratic maximization. Problem \eqref{eq:sub-prob-1} can be analytically solved.
\begin{proposition}[Solution to \eqref{eq:sub-prob-1}]\label{prop:sub-prob-1}
    Suppose that $\mat M^\T \mat M$ denotes a square-root decomposition (e.g., Cholesky) of $\mat C^\T \mat C$ and $\mat M$ is invertible. Let 
    \[
    \gamma \defeq \sqrt{\|\mat C \vec h_{k-1} - \vec s\|^2_2 + \vec s^\T \mat C (\mat C^\T \mat C)^{-1} \mat C^\T \vec s - \vec s^\T \vec s}. 
    \]
    Then, a maximum of \eqref{eq:sub-prob-1} is
    \[
        \mat M^{-1} \cdot \left[ \gamma \frac{\mat M^{-\T} (\mat C^\T \mat C \vec h_{k-1} -  \mat C^\T \vec s)}{\| \mat M^{-\T} (\mat C^\T \mat C \vec h_{k-1} -  \mat C^\T \vec s) \|_2} + \mat M^{-\T} \mat C^\T \vec s \right].
    \]
\end{proposition}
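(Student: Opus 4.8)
The plan is to recognize Problem \eqref{eq:sub-prob-1} as the maximization of a \emph{linear} functional over an ellipsoidal shell, and to reduce it to the elementary problem of maximizing a linear functional over a sphere by a whitening change of variables. Writing the objective (row) vector as $\vec c^\T \defeq \vec h^\T_{k-1}\mat C^\T\mat C - \vec s^\T\mat C$, i.e. $\vec c = \mat C^\T\mat C\vec h_{k-1} - \mat C^\T\vec s$, the objective is the linear map $\vec y \mapsto \vec c^\T\vec y$. By Lemma \ref{lem:positive-definite} the Gram matrix $\mat C^\T\mat C$ is positive definite, so the square-root factor $\mat M$ with $\mat M^\T\mat M = \mat C^\T\mat C$ is invertible and the substitution $\vec z \defeq \mat M\vec y$ (equivalently $\vec y = \mat M^{-1}\vec z$) is a bijection of $\R^{2KN}$.

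First I would rewrite both the objective and the constraint in the $\vec z$-variable. The objective becomes $\vec c^\T\mat M^{-1}\vec z = (\mat M^{-\T}\vec c)^\T\vec z$, a linear functional with direction $\vec g \defeq \mat M^{-\T}\vec c = \mat M^{-\T}(\mat C^\T\mat C\vec h_{k-1} - \mat C^\T\vec s)$. For the constraint I would expand $\|\mat C\vec y - \vec s\|^2_2 = \vec y^\T\mat C^\T\mat C\vec y - 2\vec s^\T\mat C\vec y + \vec s^\T\vec s$, use $\vec y^\T\mat C^\T\mat C\vec y = \|\mat M\vec y\|^2_2 = \|\vec z\|^2_2$ together with $\vec s^\T\mat C\vec y = (\mat M^{-\T}\mat C^\T\vec s)^\T\vec z$, and then complete the square. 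With $\vec z_0 \defeq \mat M^{-\T}\mat C^\T\vec s$ this turns the equality constraint into $\|\vec z - \vec z_0\|^2_2 = \|\mat C\vec h_{k-1} - \vec s\|^2_2 + \|\vec z_0\|^2_2 - \vec s^\T\vec s$, a sphere of radius $\gamma$ centred at $\vec z_0$. The crucial bookkeeping step is verifying the identity $\|\vec z_0\|^2_2 = \vec s^\T\mat C(\mat C^\T\mat C)^{-1}\mat C^\T\vec s$ (immediate from $\mat M^{-1}\mat M^{-\T} = (\mat C^\T\mat C)^{-1}$), so that the right-hand side is exactly the $\gamma^2$ appearing under the radical in the statement.

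Once the feasible set is the sphere $\{\vec z : \|\vec z - \vec z_0\|_2 = \gamma\}$, maximizing $\vec g^\T\vec z$ is textbook: writing $\vec z = \vec z_0 + \vec w$ with $\|\vec w\|_2 = \gamma$, Cauchy--Schwarz gives the maximizer $\vec z^* = \vec z_0 + \gamma\,\vec g/\|\vec g\|_2$ (provided $\vec g \neq \vec 0$). Substituting back through $\vec y^* = \mat M^{-1}\vec z^*$ and identifying $\vec z_0 = \mat M^{-\T}\mat C^\T\vec s$ and $\vec g = \mat M^{-\T}(\mat C^\T\mat C\vec h_{k-1} - \mat C^\T\vec s)$ then recovers precisely the claimed formula.

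Two points deserve attention rather than being genuine obstacles. The denominator $\|\vec g\|_2$ must be non-zero; this is guaranteed because Step a) of Proposition \ref{prop:convex-quadratic-max} starts from $\vec h_0 \neq (\mat C^\T\mat C)^{-1}\mat C^\T\vec s$ and the iterates strictly increase $p$, so $\vec c = \mat C^\T\mat C\vec h_{k-1} - \mat C^\T\vec s \neq \vec 0$ and hence $\vec g \neq \vec 0$. For well-definedness I would also record that $\gamma$ is real: writing $\mat P \defeq \mat C(\mat C^\T\mat C)^{-1}\mat C^\T$ for the orthogonal projector onto the range of $\mat C$, the radicand equals $\|\mat C\vec h_{k-1} - \vec s\|^2_2 - \|(\mat I - \mat P)\vec s\|^2_2 \geq 0$, since $\|(\mat I - \mat P)\vec s\|_2$ is the minimum distance from $\vec s$ to the range of $\mat C$ and $\mat C\vec h_{k-1}$ lies in that range. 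The only place requiring real care is the algebraic completion of the square; everything else is the standard ``maximize a linear objective over a sphere'' argument after whitening.
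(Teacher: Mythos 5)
Your proof is correct and follows essentially the same route as the paper's: whiten with $\mat M$, complete the square so the equality constraint becomes the sphere $\|\mat M \vec y - \mat M^{-\T}\mat C^\T \vec s\|_2 = \gamma$, and apply Cauchy--Schwarz to the resulting linear objective before mapping back through $\mat M^{-1}$. The only difference is that you additionally verify well-definedness (the radicand of $\gamma$ is non-negative via the projector argument, and the normalizing denominator is non-zero via the initialization and monotonicity of the iterates), checks which the paper's proof leaves implicit.
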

\begin{proof}
    See Appendix M in supplementary materials.
    \stp
\end{proof}

As for \eqref{eq:sub-prob-2}, when the norm constraint is defined using the infinity norm, it is particularized into a linear program, which can be efficiently solved using the simplex method. In addition, we prove that a closed-form solution exists for \eqref{eq:sub-prob-2} if the norm constraint is defined using the $2$-norm.
\begin{proposition}[Solution to \eqref{eq:sub-prob-2}]\label{prop:sub-prob-2}
    If the norm constraint is defined using the $2$-norm, then a maximum of \eqref{eq:sub-prob-2} is 
    \[
        \theta \frac{\mat C^\T \mat C \vec y_k - \mat C^\T \vec s}{\|\mat C^\T \mat C \vec y_k - \mat C^\T \vec s\|_2} + \bar{\vec h}.
    \]
\end{proposition}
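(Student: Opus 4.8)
The plan is to recognize \eqref{eq:sub-prob-2} as the maximization of a linear functional over a Euclidean ball and then to invoke the Cauchy--Schwarz inequality to read off the maximizer in closed form. First I would introduce the shorthand $\vec c \defeq \mat C^\T \mat C \vec y_k - \mat C^\T \vec s$, so that the row vector $\vec y^\T_k \mat C^\T \mat C - \vec s^\T \mat C$ appearing in the objective is exactly $\vec c^\T$ and the objective of \eqref{eq:sub-prob-2} reads $\vec c^\T \vec h$, a linear function of $\vec h$.

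Next I would recenter the feasible set via the change of variables $\vec d \defeq \vec h - \bar{\vec h}$. Under this substitution the constraint $\|\vec h - \bar{\vec h}\|_2 \le \theta$ becomes $\|\vec d\|_2 \le \theta$, and the objective splits as $\vec c^\T \vec h = \vec c^\T \bar{\vec h} + \vec c^\T \vec d$. Because $\vec c^\T \bar{\vec h}$ does not depend on the decision variable, the problem is equivalent to $\max_{\|\vec d\|_2 \le \theta}\, \vec c^\T \vec d$. I would then apply Cauchy--Schwarz to obtain $\vec c^\T \vec d \le \|\vec c\|_2 \|\vec d\|_2 \le \theta \|\vec c\|_2$ for every feasible $\vec d$, and verify that the candidate $\vec d^* = \theta\, \vec c / \|\vec c\|_2$ attains this bound: it satisfies $\|\vec d^*\|_2 = \theta$ and yields $\vec c^\T \vec d^* = \theta \|\vec c\|_2$. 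Undoing the substitution gives $\vec h^* = \bar{\vec h} + \theta\, \vec c / \|\vec c\|_2$, which, upon expanding $\vec c$, is precisely the claimed maximizer $\theta\, (\mat C^\T \mat C \vec y_k - \mat C^\T \vec s)/\|\mat C^\T \mat C \vec y_k - \mat C^\T \vec s\|_2 + \bar{\vec h}$.

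The only point requiring care, and thus the main thing to check rather than a genuine obstacle, is the degenerate case $\vec c = \vec 0$, i.e.\ $\mat C^\T \mat C \vec y_k = \mat C^\T \vec s$: there the objective is identically zero and the stated normalized expression is undefined. I would rule this out by appealing to the iteration of Proposition \ref{prop:convex-quadratic-max}: the vanishing of $\vec c$ is equivalent to $\vec y_k = (\mat C^\T \mat C)^{-1}\mat C^\T \vec s$, which is the unique stationary point (global minimizer) of the quadratic $p$. Since the initialization in \eqref{eq:sub-prob-1}--\eqref{eq:sub-prob-2} avoids this point and each round strictly increases $p$, the iterate $\vec y_k$ never equals it, so $\vec c \neq \vec 0$ whenever \eqref{eq:sub-prob-2} is invoked; the normalization is then well defined and, since Cauchy--Schwarz holds with equality only along the direction of $\vec c$, the maximizer is unique. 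Everything else is a routine application of Cauchy--Schwarz, so I anticipate no substantive difficulty.
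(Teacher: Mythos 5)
Your proof is correct and follows essentially the same route as the paper's: recenter the ball at the origin (the paper parametrizes $\vec h = \theta \vec z + \bar{\vec h}$ with $\|\vec z\|_2 = 1$, you substitute $\vec d = \vec h - \bar{\vec h}$) and apply the Cauchy--Schwarz inequality to the resulting linear objective. Your extra verification that $\mat C^\T \mat C \vec y_k - \mat C^\T \vec s \neq \vec 0$ along the iteration—via the initialization rule and the level-set constraint in \eqref{eq:sub-prob-1}—is a degenerate case the paper's proof silently omits, and you handle it correctly.
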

\begin{proof}
    See Appendix N in supplementary materials.
    \stp
\end{proof}

After addressing Problem \eqref{eq:max-form-trans-vec-sol-H}, we solve Problem \eqref{eq:max-form-trans-vec-sol-UV}, which is termed a \textit{remedy} problem.

\textbf{Solution to Problem \eqref{eq:max-form-trans-vec-sol-UV}}: The solution to Problem \eqref{eq:max-form-trans-vec-sol-UV} is summarized in the proposition below.
\begin{proposition}[Solution to Remedy Problem \eqref{eq:max-form-trans-vec-sol-UV}]\label{thm:solution-remedy-problem}
The solution to Problem \eqref{eq:max-form-trans-vec-sol-UV} is given by the following iteration.
\begin{enumerate}[Step a)]
    \item Fix $\mat \Sigma$ and $\mat V$, and minimize over $\mat U$. The optimal solution is 
    $
    \mat U^* = \mat U_1 \mat V^\H_1
    $ 
    where $\mat U_1 \mat \Sigma_1 \mat V^\H_1 \overset{\text{SVD}}{=} \mat B_1 \mat A^\H_1$,
    $
        \mat A_1 \defeq [\mat I_{N \times L} \mat V^\H,~\sqrt{\alpha} \mat \Sigma \mat V^\H]
    $, 
    and  
    $
        \mat B_1 \defeq [\bar{\mat A},~\sqrt{\alpha} \mat F^\H \mat H^{*\H} \mat S]
    $.
    \item Fix $\mat \Sigma$ and $\mat U$, and minimize over $\mat V$.  The optimal solution is 
    $
    \mat V^* = \mat U_2 \mat V^\H_2
    $ 
    where $\mat U_2 \mat \Sigma_2 \mat V^\H_2 \overset{\text{SVD}}{=} \mat B_2 \mat A^\H_2$,
    $
        \mat A_2 \defeq [\mat I^\H_{N \times L} \mat U^\H,~\sqrt{\alpha} \mat \Sigma^\H \mat U^\H]
    $, 
    and 
    $
        \mat B_2 \defeq [\bar{\mat A}^\H,~\sqrt{\alpha} \mat S^\H \mat H^{*} \mat F]
    $.
    \item Fix $\mat U$ and $\mat V$, and minimize over $\mat \Sigma$. The optimal solution is 
    $
    \mat \Sigma^* = \operatorname{Proj}_{\Omega_{N \times L}}[\mat U^\H \mat F^\H \mat H^{*\H} \mat S \mat V]
    $
    where $\operatorname{Proj}$ denotes the projection operator and $\Omega_{N \times L}$ denotes the diagonal, non-negative, and real space with the dimension of ${N \times L}$.
    \item Repeat Steps a)-c) until convergence.
\end{enumerate}
Regarding the objective value, the iteration process is guaranteed to converge for any initial values of $(\mat U, \mat \Sigma, \mat V)$.
\end{proposition}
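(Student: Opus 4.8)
The plan is to treat \eqref{eq:max-form-trans-vec-sol-UV} as a block-coordinate (Gauss--Seidel) minimization over the three blocks $\mat U$, $\mat V$, and $\mat \Sigma$, solve each block subproblem to \emph{global} optimality in closed form by reducing it to an orthogonal Procrustes problem (for $\mat U$ and $\mat V$) or a Euclidean projection (for $\mat \Sigma$), and then invoke the standard monotone-plus-bounded argument to certify convergence of the objective value. Throughout I would exploit that both $\mat U \in \C^{N \times N}$ and $\mat V \in \C^{L \times L}$ are \emph{square} unitary matrices (since $\mat U \mat U^\H = \mat I_N$ forces $\mat U^\H \mat U = \mat I_N$, and likewise for $\mat V$) together with the invariance of the Frobenius norm under conjugate transposition and under left/right multiplication by unitary matrices.

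For Step a) I would first merge the two Frobenius terms by horizontal concatenation. With $\mat A_1 \defeq [\mat I_{N \times L} \mat V^\H,~\sqrt{\alpha}\,\mat \Sigma \mat V^\H]$ and $\mat B_1 \defeq [\bar{\mat A},~\sqrt{\alpha}\,\mat F^\H \mat H^{*\H} \mat S]$, the objective of \eqref{eq:max-form-trans-vec-sol-UV} equals $\|\mat U \mat A_1 - \mat B_1\|^2_F$. Expanding the trace and using $\mat U^\H \mat U = \mat I_N$, the quadratic term $\Tr[\mat U \mat A_1 \mat A_1^\H \mat U^\H] = \Tr[\mat A_1 \mat A_1^\H]$ is constant, so the minimization over unitary $\mat U$ is equivalent to maximizing $\operatorname{Re}\Tr[\mat U \mat A_1 \mat B_1^\H]$. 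Substituting the SVD $\mat B_1 \mat A_1^\H = \mat U_1 \mat \Sigma_1 \mat V_1^\H$ and setting $\mat Z \defeq \mat U_1^\H \mat U \mat V_1$ (again unitary) reduces this to $\max \operatorname{Re}\Tr[\mat Z \mat \Sigma_1]$ over unitary $\mat Z$, whose maximizer is $\mat Z = \mat I$ because $\mat \Sigma_1$ is diagonal and non-negative; unwinding gives $\mat U^* = \mat U_1 \mat V_1^\H$. Step b) is the mirror image: taking conjugate transposes inside both norms exposes $\mat V$ as the \emph{left} factor, so the objective becomes $\|\mat V \mat A_2 - \mat B_2\|^2_F$ with $\mat A_2 \defeq [\mat I_{N \times L}^\H \mat U^\H,~\sqrt{\alpha}\,\mat \Sigma^\H \mat U^\H]$ and $\mat B_2 \defeq [\bar{\mat A}^\H,~\sqrt{\alpha}\,\mat S^\H \mat H^{*} \mat F]$, and the identical Procrustes argument yields $\mat V^* = \mat U_2 \mat V_2^\H$.

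For Step c) I would note that $\mat \Sigma$ enters only the first term, so the subproblem is $\min_{\mat \Sigma \in \Omega_{N \times L}} \|\mat U \mat \Sigma \mat V^\H - \mat F^\H \mat H^{*\H} \mat S\|^2_F$; by unitary invariance this equals $\|\mat \Sigma - \mat U^\H \mat F^\H \mat H^{*\H} \mat S \mat V\|^2_F$, whose minimizer over the closed convex set $\Omega_{N \times L}$ is precisely the Euclidean projection $\mat \Sigma^* = \operatorname{Proj}_{\Omega_{N \times L}}[\mat U^\H \mat F^\H \mat H^{*\H} \mat S \mat V]$. This projection decouples entrywise: off-diagonal entries are zeroed and each diagonal entry $d$ is mapped to $\max\{\operatorname{Re}(d), 0\}$, since $\min_{\sigma \ge 0} |\sigma - d|^2 = (\operatorname{Im} d)^2 + \min_{\sigma \ge 0}(\sigma - \operatorname{Re} d)^2$.

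Finally, for convergence I would use that each of Steps a)--c) returns the exact global minimizer of its block subproblem, so the objective value is non-increasing along the iterations; being a sum of squared Frobenius norms it is bounded below by $0$; a bounded monotone sequence converges, which establishes the claimed convergence of the objective value for arbitrary initialization. The main obstacle I anticipate is Steps a) and b): one must verify that the two separate Frobenius penalties genuinely combine into a single Procrustes cost after the concatenation (and, for $\mat V$, after the conjugate-transpose manoeuvre), and that the constant quadratic term really drops out---this hinges on the squareness and unitarity of $\mat U$ and $\mat V$ rather than on mere semi-orthogonality. The projection in Step c) and the monotone-convergence argument are routine by comparison.
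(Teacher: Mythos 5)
Your proposal is correct and follows essentially the same route as the paper's proof: block-coordinate (Gauss--Seidel) minimization in which the $\mat U$- and $\mat V$-subproblems are recast, via concatenation and conjugate transposition, as orthogonal Procrustes problems with the stated closed-form SVD solutions, the $\mat \Sigma$-subproblem is solved by Euclidean projection onto the convex set $\Omega_{N \times L}$, and convergence of the objective value follows from monotone decrease plus boundedness below. The only differences are presentational: you derive the Procrustes solution inline (trace expansion and maximization of $\operatorname{Re}\Tr[\mat Z \mat \Sigma_1]$ over unitary $\mat Z$) and use unitary invariance for the $\mat \Sigma$-step, whereas the paper cites a standard Procrustes reference and argues the $\mat \Sigma$-step via its gradient; both yield identical results.
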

\begin{proof}
    See Appendix O in supplementary materials.
    \stp
\end{proof}

\begin{remark}[Initialization in Proposition \ref{thm:solution-remedy-problem}]
The iteration in Proposition \ref{thm:solution-remedy-problem} can be initialized using the SVD of $\mat F^\H \mat H^{*\H}\mat S$, that is, $\mat U \mat \Sigma \mat V^\H \overset{\text{SVD}}{:=} \mat F^\H \mat H^{*\H}\mat S$ . \stp
\end{remark}

\begin{remark}[Projection in Proposition \ref{thm:solution-remedy-problem}]
    For $\operatorname{Proj}_{\Omega_{N \times L}}[\cdot]$, one strategy is to use the singular value matrix of the argument. Another strategy is to only keep the real non-negative components in the diagonal entries and zero the rest. \stp
\end{remark}

\subsubsection{Another Approximate Solution Method to \captext{\eqref{eq:robust-waveform-design}}}\label{sec:another-solution-max-form}
This subsection proposes another approximate solution method to the robust counterpart \eqref{eq:robust-waveform-design}. Based on the design philosophy of Method \ref{method:max-form-solution}, the following alternative method can be motivated.
\begin{method}\label{method:max-form-solution-another}
In Method \ref{method:max-form-solution}, Problem \eqref{eq:max-form-trans-vec-sol-UV} can be replaced by
\begin{equation}\label{eq:max-form-trans-vec-sol-X}
    \begin{array}{cl}
        \displaystyle \min_{\mat X \in \cal X}  & \alpha \|\mat H^* \mat X - \mat S\|^2_F + \| \mat X - \matb X \|^2_F.
    \end{array}
\end{equation} 
Therefore, Problem \eqref{eq:max-form-trans-vec} can also be approximately solved by \eqref{eq:max-form-trans-vec-sol-H} and \eqref{eq:max-form-trans-vec-sol-X}.
\stp
\end{method}

Note that in the objective of Problem \eqref{eq:max-form-trans-vec-sol-UV}, the first term is to guarantee that $\mat X$ solves $\min_{\mat X \in \cal X} \|\mat H^* \mat X - \mat S\|^2_F$, while the second term is to reduce $\|\mat X - \matb X\|^2_F$. This is why Problem \eqref{eq:max-form-trans-vec-sol-X} can directly be an alternative to Problem \eqref{eq:max-form-trans-vec-sol-UV}. However, the two problems are not equivalent, although both perform effectively and efficiently in experiments; see Section \ref{sec:experiment}. The closed-form solution to \eqref{eq:max-form-trans-vec-sol-X} is given in the proposition below.

\begin{proposition}\label{prop:sol-remedy-another}
Problem \eqref{eq:max-form-trans-vec-sol-X} is analytically solved by
\[
\mat X^* = \sqrt{L} \mat F \matt U \mat I_{N \times L} \matt V^\H
\]
where
\[
\matt U \matt \Sigma \matt V^\H \overset{\text{SVD}}{=} \mat F^\H \matt H^\H \matt S,
\]
\[
\matt H \defeq \left[
\begin{array}{c}
    \sqrt{\alpha} \mat H^* \\
    \mat I_{N}
\end{array}
\right],
\text{~~~and~~~}
\matt S \defeq \left[
\begin{array}{c}
    \sqrt{\alpha} \mat S \\
    \matb X
\end{array}
\right].
\]
\end{proposition}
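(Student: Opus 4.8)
The plan is to collapse the two-term objective of Problem \eqref{eq:max-form-trans-vec-sol-X} into a single Frobenius-norm misfit and then to recognize the resulting program as a relabeled instance of the sensing-centric design \eqref{eq:waveform-design}, whose global minimizer over $\cal X$ is already available in closed form through \eqref{eq:opt-X-given-H}. The starting point is the elementary fact that stacking two blocks vertically adds their squared Frobenius norms. Applying this with $\sqrt{\alpha}(\mat H^* \mat X - \mat S)$ stacked on top of $\mat X - \matb X$, and factoring $\mat X$ out of the stacked expression, yields
\begin{equation*}
\alpha \|\mat H^* \mat X - \mat S\|^2_F + \| \mat X - \matb X \|^2_F = \| \matt H \mat X - \matt S \|^2_F ,
\end{equation*}
where $\matt H$ and $\matt S$ are precisely the stacked matrices in the statement; here the $\sqrt{\alpha}$ multiplies the communication block while $\mat I_N$ reproduces $\mat X$ in the sensing block.

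Next I would observe that the feasible set of \eqref{eq:max-form-trans-vec-sol-X} is $\cal X = \{\mat X : \mat X \mat X^\H = L \mat R\}$, which involves neither the channel nor the target and is therefore unchanged by the reformulation. Consequently the problem is identical in form to the sensing-centric problem $\min_{\mat X \in \cal X} \|\mat H \mat X - \mat S\|^2_F$ of \eqref{eq:waveform-design}, but with the pair $(\mat H, \mat S)$ replaced by $(\matt H, \matt S)$. Because the closed-form minimizer \eqref{eq:opt-X-given-H} (equivalently \cite[Eq.~(15)]{liu2018toward}) is valid for an arbitrary channel-and-target pair and enters only through the factor $\mat F$ of $\mat R = \mat F \mat F^\H$, I can apply it verbatim after this substitution. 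This produces $\mat X^* = \sqrt{L}\, \mat F\, \matt U\, \mat I_{N \times L}\, \matt V^\H$ with $\matt U \matt \Sigma \matt V^\H \overset{\text{SVD}}{=} \mat F^\H \matt H^\H \matt S$, which is exactly the claimed formula. A quick dimension check ($\matt H \in \C^{(K+N) \times N}$, $\matt S \in \C^{(K+N) \times L}$, hence $\mat F^\H \matt H^\H \matt S \in \C^{N \times L}$) confirms that every product is conformable.

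There is no genuine analytic obstacle in this argument, since it is purely a reduction to an already-solved problem; the work lies entirely in the bookkeeping. The two points I would verify explicitly are (i) that the vertical-stacking step reproduces exactly the $\matt H$ and $\matt S$ of the statement, which amounts to matching the $\sqrt{\alpha}$ scaling and the block ordering, and (ii) that the hypotheses supporting \eqref{eq:opt-X-given-H} survive the substitution, namely that $\mat R$ remains positive definite with invertible factor $\mat F$ satisfying $\mat R = \mat F \mat F^\H$. Since neither the constraint $\mat X \mat X^\H = L \mat R$ nor the factor $\mat F$ depends on $\matt H$ or $\matt S$, the existing solution theory transfers intact and the closed form follows at once.
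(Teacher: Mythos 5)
Your proposal is correct and matches the paper's own proof: both collapse the two terms into the single misfit $\|\tilde{\bm H}\bm X - \tilde{\bm S}\|_F^2$ via vertical stacking and then invoke the closed-form sensing-centric solution \eqref{eq:opt-X-given-H} (i.e., \cite[Eq.~(15)]{liu2018toward}) with $(\tilde{\bm H},\tilde{\bm S})$ in place of $(\bm H,\bm S)$. Your additional checks, that the feasible set $\cal X$ and the factor $\bm F$ are untouched by the substitution and that the closed form holds for a channel with $K+N$ rows, are exactly the bookkeeping the paper leaves implicit.
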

\begin{proof}
By defining $\matt H$ and $\matt S$, Problem \eqref{eq:max-form-trans-vec-sol-X} becomes $\min_{\mat X \in \cal X} \|\matt H \mat X - \matt S\|^2_F$, which is solved by $\mat X^*$ defined in the statement; cf. \cite[Eq.~(15)]{liu2018toward} and \eqref{eq:opt-X-given-H}.  \stp
\end{proof}

\subsection{Solution Method to Robust Counterpart  \captext{\eqref{eq:robust-waveform-design-tradeoff}}}\label{subsec:sol-robust-joint}

This subsection designs the solution method to the robust counterpart \eqref{eq:robust-waveform-design-tradeoff} of the joint waveform design problem \eqref{eq:waveform-design-tradeoff}.

\subsubsection{Model Reformulation}\label{sec:max-form-tradeoff} Depending on the design preference, let the feasible region $\cal X$ of waveforms $\mat X$ be defined as one of the follows:
\begin{itemize}
    \item TPC: $\cal X \defeq \{\mat X: \|\mat X\|^2_F = L P_\T$\}, for the total power constraint in \eqref{eq:total-power};
    \item PAPC: $\cal X \defeq \{\mat X: \diag(\mat X \mat X^\H) = \frac{L \cdot P_\T}{N} \mat I_N\}$, for the per-antenna power constraint in \eqref{eq:per-antenna-power}.
\end{itemize}
In analogy with Theorem \ref{thm:max-form} for Problem \eqref{eq:robust-waveform-design}, we give a reformulation of Problem \eqref{eq:robust-waveform-design-tradeoff}.
\begin{theorem}\label{thm:max-form-tradeoff} 
    Let $\phi(\mat H, \mat X) \defeq \rho \|\mat H \mat X - \mat S\|^2_F + (1 - \rho) \| \mat X - \mat X_s \|^2_F$.
    If there exists ${\mat X}^* \in \cal X$ such that
    \begin{equation}\label{eq:joint-thm-condition}
        \max_{\mat H \in \cal H} \min_{\mat X \in \cal X} \phi(\mat H, \mat X) = \max_{\mat H \in \cal H} \phi(\mat H, \mat X^*),
    \end{equation}
    then Problem \eqref{eq:robust-waveform-design-tradeoff} is equivalent to
    \begin{equation}\label{eq:max-form-tradeoff}
        \begin{array}{cl}
           \displaystyle \max_{\mat H}  &  \rho \|\mat H \mat X^*_{\mat H} - \mat S\|^2_F  + (1 - \rho) \|\mat X^*_{\mat H}  - \mat X_s\|^2_F \\
           \st & \|\mat H - \bar{\mat H}\| \le \theta,
        \end{array}
    \end{equation}
    where $\mat X^*_{\mat H}$, which depends on $\mat H$, is given by
    \begin{itemize}
        \item \cite[Algo.~1]{liu2018toward}, for the total power constraint \eqref{eq:total-power};
        \item \cite[Algo.~2]{liu2018toward}, for the per-antenna power constraint \eqref{eq:per-antenna-power}.
    \end{itemize}
\end{theorem}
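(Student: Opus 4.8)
The plan is to mirror the proof of Theorem~\ref{thm:max-form}, but with the sensing-centric inner problem replaced by the joint design problem \eqref{eq:waveform-design-tradeoff}. First I would observe that, by definition, Problem \eqref{eq:robust-waveform-design-tradeoff} is the two-layer min-max problem $\min_{\mat X \in \cal X} \max_{\mat H \in \cal H} \phi(\mat H, \mat X)$, where $\cal H$ is the uncertainty set \eqref{eq:uncertainty-set} and $\cal X$ is whichever of the TPC or PAPC feasible regions has been fixed. This is exactly the abstract min-max setting of Lemma~\ref{lem:max-form} specialized to the present $\phi$.

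The second step is to invoke Lemma~\ref{lem:max-form} directly. Condition \eqref{eq:joint-thm-condition} is precisely the tightness hypothesis \eqref{eq:upper-bounds-tightness} of that lemma for this choice of $\phi$, $\cal H$, and $\cal X$. Hence the strong min-max property \eqref{eq:min-max-equality} holds, giving
\[
\min_{\mat X \in \cal X} \max_{\mat H \in \cal H} \phi(\mat H, \mat X) = \max_{\mat H \in \cal H} \min_{\mat X \in \cal X} \phi(\mat H, \mat X),
\]
which collapses the robust problem into its max-min counterpart.

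The third step is to resolve the inner minimization at each fixed $\mat H$. I would note that $\min_{\mat X \in \cal X} \phi(\mat H, \mat X)$ is exactly the non-robust joint waveform design problem \eqref{eq:waveform-design-tradeoff} with the channel fixed at $\mat H$ and the generic constraint instantiated as TPC \eqref{eq:total-power} or PAPC \eqref{eq:per-antenna-power}. Its global minimizer $\mat X^*_{\mat H}$ is furnished by \cite[Algo.~1]{liu2018toward} in the TPC case and \cite[Algo.~2]{liu2018toward} in the PAPC case. Substituting this minimizer into the objective yields
\[
\max_{\mat H \in \cal H} \min_{\mat X \in \cal X} \phi(\mat H, \mat X) = \max_{\mat H \in \cal H} \phi(\mat H, \mat X^*_{\mat H}),
\]
whose right-hand side is precisely Problem \eqref{eq:max-form-tradeoff}. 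Chaining this identity with the equality from the second step delivers the claimed equivalence between \eqref{eq:robust-waveform-design-tradeoff} and \eqref{eq:max-form-tradeoff}.

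The main obstacle, and the point that separates this argument from the proof of Theorem~\ref{thm:max-form}, is the inner-problem solve. Unlike the sensing-centric case, which admitted the clean SVD closed form \eqref{eq:opt-X-given-H}, here $\mat X^*_{\mat H}$ is only accessible through the iterative procedures of \cite{liu2018toward}. The delicate part is to confirm that these algorithms return \emph{genuine global minimizers} of $\min_{\mat X \in \cal X} \phi(\mat H, \mat X)$ for every admissible $\mat H \in \cal H$, so that the substitution producing $\phi(\mat H, \mat X^*_{\mat H})$ is exact rather than merely approximate, and to check that the feasible region each algorithm is designed for coincides exactly with the TPC or PAPC set assumed here. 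I would not re-derive those algorithms; I would only verify that their optimality and feasibility guarantees transfer verbatim when the nominal $\bar{\mat H}$ is replaced by an arbitrary $\mat H \in \cal H$, since nothing in their construction privileges the center of the uncertainty set.
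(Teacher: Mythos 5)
Your proposal is correct and follows essentially the same route as the paper, whose entire proof reads ``This is obvious from Lemma \ref{lem:max-form} and \cite{liu2018toward}'': you invoke Lemma \ref{lem:max-form} with condition \eqref{eq:joint-thm-condition} as the tightness hypothesis \eqref{eq:upper-bounds-tightness}, then substitute the inner minimizer $\mat X^*_{\mat H}$ furnished by \cite[Algos.~1,~2]{liu2018toward}, exactly as in the proof of Theorem \ref{thm:max-form}. Your closing caveat---that the cited algorithms must return genuine global minimizers for every $\mat H \in \cal H$, not just the nominal $\bar{\mat H}$---is a legitimate point the paper leaves implicit, and spelling it out strengthens rather than departs from the paper's argument.
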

\begin{proof}
    This is obvious from Lemma \ref{lem:max-form} and \cite{liu2018toward}. \stp
\end{proof}

Denote the objective function in Problem \eqref{eq:max-form-tradeoff} by 
\begin{equation}\label{eq:obj-g}
    g(\mat H) \defeq \rho \|\mat H \mat X^*_{\mat H} - \mat S\|^2_F  + (1 - \rho) \|\mat X^*_{\mat H}  - \mat X_s\|^2_F,
\end{equation}
which is a particularization of \eqref{eq:max-form-abstract}. Suppose that $\matb X$ solves the nominal joint waveform design problem \eqref{eq:waveform-design-tradeoff} under the nominal communication channel $\matb H$; for technical details, see \cite[Algos.~1, 2]{liu2018toward}.

Motivated by Lemma \ref{lem:gap}, the proposition below establishes the Lipschitz continuity of the function $g$ on $\cal H$.
\begin{proposition}\label{prop:g-continuous}
    The function $g(\mat H)$ defined in \eqref{eq:obj-g} is upper bounded by the function $\bar g(\mat H)$, for every $\mat H \in \cal H$, where
    \begin{equation}\label{eq:g-upper-bound}
            \bar g(\mat H) \defeq \rho \|\mat H \matb X - \mat S\|^2_F  + (1 - \rho) \|\matb X  - \mat X_s\|^2_F.
    \end{equation}
    In addition, $g$ and $\bar g$ are $L_g$-Lipschitz continuous in $\mat H$ on $\cal H$ with respect to the norm $\|\cdot\|$ used in defining $\cal H$, where 
    \begin{equation}\label{eq:L_g}
        \begin{array}{cl}
        L_g &= \rho \cdot L_f \\
            &= 2 \rho B L P_\T \cdot (\|\bar{\mat H}\|_F + B\theta) + 2\rho B \sqrt{LP_\T} \cdot \|\mat S\|_F
        \end{array}
    \end{equation}
    and $L_f$ is defined in \eqref{eq:L_f}; $B$ is a finite positive real number such that $\|\mat H\|_F \le B\|\mat H\|$ for every $\mat H \in \bb C^{K \times N}$.
\end{proposition}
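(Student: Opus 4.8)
The plan is to treat the two assertions separately and to reduce both of them to the Lipschitz estimate already established for the sensing-centric objective in Proposition \ref{prop:f-continuous}.

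The upper bound is immediate. Since $g(\mat H) = \min_{\mat X \in \cal X} \phi(\mat H, \mat X)$ and $\matb X \in \cal X$ (it solves the nominal problem and therefore obeys the same power constraint), the minimum cannot exceed the value of $\phi$ at $\matb X$; that is, $g(\mat H) \le \phi(\mat H, \matb X) = \bar g(\mat H)$, the last equality being the definition \eqref{eq:g-upper-bound}. No calculation is required here.

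The Lipschitz claim rests on a single observation: for any fixed waveform $\mat X$ with $\|\mat X\|_F^2 = L P_\T$, the map $\mat H \mapsto \rho \|\mat H \mat X - \mat S\|_F^2$ is $\rho L_f$-Lipschitz on $\cal H$. This is precisely the bound derived for $\overline f(\mat H) = \|\mat H \bar{\mat X} - \mat S\|_F^2$ in Proposition \ref{prop:f-continuous}: whether one bounds the gradient or expands $\|\mat H_1 \mat X - \mat S\|_F^2 - \|\mat H_2 \mat X - \mat S\|_F^2$ as a difference of squares and applies Cauchy--Schwarz, submultiplicativity, and the estimates $\|\mat H\|_F \le \|\matb H\|_F + B\theta$ on $\cal H$ and $\|\mat H_1 - \mat H_2\|_F \le B\|\mat H_1 - \mat H_2\|$, the resulting constant depends on $\mat X$ only through $\|\mat X\|_F = \sqrt{L P_\T}$. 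This value is common to both feasible regions: under TPC one has $\|\mat X\|_F^2 = L P_\T$ by definition, and under PAPC, summing the $N$ diagonal constraints gives $\|\mat X\|_F^2 = \Tr[\mat X \mat X^\H] = N \cdot (L P_\T / N) = L P_\T$. Hence the constant $\rho L_f = L_g$ in \eqref{eq:L_g} applies uniformly to $\phi(\cdot, \mat X)$ for every $\mat X \in \cal X$, the summand $(1 - \rho)\|\mat X - \mat X_s\|_F^2$ being constant in $\mat H$ and dropping out of any difference.

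Both functions then inherit $L_g$. For $\bar g$ this is immediate, since $\bar g$ equals $\rho\|\mat H \matb X - \mat S\|_F^2$ plus an $\mat H$-free constant. For $g$ I would invoke the standard fact that the pointwise infimum of a uniformly $L_g$-Lipschitz family is $L_g$-Lipschitz: fixing $\mat H_1, \mat H_2 \in \cal H$ and letting $\mat X^*_{\mat H_2}$ attain $g(\mat H_2)$, one has $g(\mat H_1) - g(\mat H_2) \le \phi(\mat H_1, \mat X^*_{\mat H_2}) - \phi(\mat H_2, \mat X^*_{\mat H_2}) \le L_g \|\mat H_1 - \mat H_2\|$, and interchanging $\mat H_1$ and $\mat H_2$ yields the two-sided estimate. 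The main obstacle is conceptual rather than computational: one must recognize that the per-waveform Lipschitz constant is identical for all $\mat X \in \cal X$—because both power constraints pin $\|\mat X\|_F$ to $\sqrt{L P_\T}$—so that the infimum defining $g$ can inherit it. Once this is identified, the proposition reduces to re-citing Proposition \ref{prop:f-continuous}.
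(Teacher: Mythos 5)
Your proof is correct and takes essentially the same route as the paper's: the upper bound follows from feasibility of $\matb X$ in $\cal X$, and the Lipschitz constant $\rho L_f$ is obtained by cancelling the $\mat H$-independent term $(1-\rho)\|\mat X - \mat X_s\|^2_F$ and reusing the estimate from Proposition \ref{prop:f-continuous}, your ``pointwise infimum of a uniformly Lipschitz family'' step being the same device as the paper's inequality $|\min_{\mat X} \phi(\mat H_1,\mat X) - \min_{\mat X} \phi(\mat H_2,\mat X)| \le \max_{\mat X} |\phi(\mat H_1,\mat X) - \phi(\mat H_2,\mat X)|$. Your explicit verification that the PAPC constraint also pins $\|\mat X\|^2_F = L P_\T$ (so the constant $L_f$ carries over to both feasible regions) is a detail the paper leaves implicit, and it is indeed the fact that makes the citation of Proposition \ref{prop:f-continuous} legitimate.
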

\begin{proof}
The proof is routine given the techniques in the
proof of Proposition \ref{prop:f-continuous}. For details, see Appendix P in supplementary materials. \stp
\end{proof}

Note that $\bar g(\mat H)$ is a particularization of \eqref{eq:upper-bound-abstract}, due to which the Lipschitz continuity is crucial. Proposition \ref{prop:g-continuous} implies that the function $g$ in \eqref{eq:obj-g} is more \quotemark{flat} than $f$ in \eqref{eq:obj-f} since $\rho \le 1$, which can further reduce the gap studied in Lemma \ref{lem:gap}. In addition, from the perspective of global optimization, such flatness would benefit the design of solution methods, specifically, improving the efficiency of the search process of iterative algorithms; see \cite{hansen1995lipschitz,malherbe2017global}.

\subsubsection{Approximate Solution Method to \captext{\eqref{eq:robust-waveform-design-tradeoff}}}

Motivated by Method \ref{method:max-form-solution-another}, an approximate solution method to \eqref{eq:robust-waveform-design-tradeoff} can be summarized in the method below.

\begin{method}\label{method:robust-joint-design}
Problem \eqref{eq:robust-waveform-design-tradeoff} is approximately solved by $(\mat H^*, \mat X^*)$ where $\mat H^*$ is a global maximum of the upper bound function $\overline{g}(\mat H)$, that is,
    \begin{equation}\label{eq:max-form-trans-vec-sol-H-joint}
        \begin{array}{ccl}
            \mat H^* &\in \displaystyle \argmax_{\mat H}  & \|(\matb X^\T \otimes \mat I_K) \opvec(\mat H) - \opvec(\mat S)\|^2_2\\
            & \st & \|\opvec(\mat H) - \opvec(\bar{\mat H})\| \le \theta, 
        \end{array}
    \end{equation}
and $\mat X^*$ is a global minimum of
    \begin{equation}\label{eq:max-form-trans-vec-sol-X-joint}
        \begin{array}{cl}
            \displaystyle \min_{\mat X \in \cal X}  & \alpha \cdot \big[\rho \|\mat H^* \mat X - \mat S\|^2_F  + (1 - \rho) \|\mat X  - \mat X_s\|^2_F \big] + \\
            & \quad \quad \|\mat X - \matb X\|^2_F
        \end{array}
    \end{equation}
where $\alpha \ge 0$ is a large real number to numerically ensure that $\mat X$ solves 
\[
\min_{\mat X \in \cal X} \rho \|\mat H^* \mat X - \mat S\|^2_F  + (1 - \rho) \|\mat X  - \mat X_s\|^2_F. \tag*{$\square$}
\]
\end{method}
Problem \eqref{eq:max-form-trans-vec-sol-H-joint} can be transformed into \eqref{eq:H-compact} and solved by Proposition \ref{prop:convex-quadratic-max}. Problem \eqref{eq:max-form-trans-vec-sol-X-joint} can be transformed into
\begin{equation}\label{eq:max-form-trans-vec-sol-X-joint-trans}
\min_{\mat X \in \cal X} \frac{\alpha}{\alpha+1} \|\matt H \mat X - \matt S\|^2_F  + \frac{1}{\alpha+1} \|\mat X  - \matb X\|^2_F,    
\end{equation}
where
\[
\matt H \defeq \left[
\begin{array}{c}
    \sqrt{\rho} \mat H^* \\
    \sqrt{1-\rho} \mat I_{N}
\end{array}
\right],
\text{~~and~~}
\matt S \defeq \left[
\begin{array}{c}
    \sqrt{\rho} \mat S \\
    \sqrt{1-\rho} \mat X_s
\end{array}
\right].
\]
The transformed problem \eqref{eq:max-form-trans-vec-sol-X-joint-trans} can be solved by 
\begin{itemize}
    \item \cite[Algo.~1]{liu2018toward}, for the total power constraint \eqref{eq:total-power};
    \item \cite[Algo.~2]{liu2018toward}, for the per-antenna power constraint \eqref{eq:per-antenna-power}.
\end{itemize}

\subsection{Remarks on Solution Methods}\label{subsubsec:remarks}
This section studies three solution methods to robust counterparts: Methods \ref{method:max-form-solution} and \ref{method:max-form-solution-another} are designed for the robust counterpart \eqref{eq:robust-waveform-design}, whereas Method \ref{method:robust-joint-design} is for the robust counterpart \eqref{eq:robust-waveform-design-tradeoff}. 
As base-band digital signal processing methods, they are realized in digital signal processors of transmitters, for example, in distributed units (DUs) of wireless communication systems. 
The three methods are summarized in Table \ref{tab:solution-method}.
\begin{table}[!htbp]
\centering
\caption{Summary of Solution Methods}
\begin{tabular}{lllllllllll}
\hline
                \textbf{Problem}     & \textbf{Method}    & \textbf{Component}   & \textbf{Reformulation}  &  \textbf{Solution}           \\
\hline
\multirow{4}{*}{\eqref{eq:robust-waveform-design}} 
    &   \multirow{2}{*}{\ref{method:max-form-solution}}           &  \eqref{eq:max-form-trans-vec-sol-H}     &      \eqref{eq:H-compact}                                &    Prop. \ref{prop:convex-quadratic-max}      \\
    &                                                             &  \eqref{eq:max-form-trans-vec-sol-UV}    &      N/A    &    Prop. \ref{thm:solution-remedy-problem}      \\
    \cline{2-5}
    &   \multirow{2}{*}{\ref{method:max-form-solution-another}}   &  \eqref{eq:max-form-trans-vec-sol-H}     &      \eqref{eq:H-compact}                                &    Prop. \ref{prop:convex-quadratic-max}     \\
    &                                                             &  \eqref{eq:max-form-trans-vec-sol-X}     &      N/A    &    Prop. \ref{prop:sol-remedy-another}      \\
\hline
\multirow{2}{*}{\eqref{eq:robust-waveform-design-tradeoff}}  
    &  \multirow{2}{*}{\ref{method:robust-joint-design}}  &   \eqref{eq:max-form-trans-vec-sol-H-joint}      &      \eqref{eq:H-compact}                                &    Prop. \ref{prop:convex-quadratic-max}      \\
    &                                                     &   \eqref{eq:max-form-trans-vec-sol-X-joint}      &      \eqref{eq:max-form-trans-vec-sol-X-joint-trans}     &    \hspace{-0.5pt}\cite[Algo.~1,~2]{liu2018toward}      \\
\hline
\end{tabular}
\label{tab:solution-method}
\end{table}

All three methods involve a convex quadratic maximization problem \eqref{eq:H-compact}, specifically, \eqref{eq:max-form-trans-vec-sol-H} in Methods \ref{method:max-form-solution} and \ref{method:max-form-solution-another}, and \eqref{eq:max-form-trans-vec-sol-H-joint} in Method \ref{method:robust-joint-design}, which is solved using Proposition \ref{prop:convex-quadratic-max}. At each iteration in Proposition \ref{prop:convex-quadratic-max}, two dominating operations are included: \eqref{eq:sub-prob-1} and \eqref{eq:sub-prob-2}. However, both \eqref{eq:sub-prob-1} and \eqref{eq:sub-prob-2} can be analytically solved; see Propositions \ref{prop:sub-prob-1} and \ref{prop:sub-prob-2} where one Cholesky decomposition and several matrix inversion and multiplication operations are included. Problem \eqref{eq:max-form-trans-vec-sol-UV} is solved by the iteration process in Proposition \ref{thm:solution-remedy-problem}; at each iteration, three SVD and several matrix multiplication operations are contained. Problem \eqref{eq:max-form-trans-vec-sol-X} has the closed-form solution in Proposition \ref{prop:sol-remedy-another}, where one SVD and several matrix multiplication operations are encompassed. 

Table \ref{tab:computational-complexity} summarizes the asymptotically computational complexities of the involved solution methods, in terms of floating-point operation (FLOP). Note that in real-world operation, we have $K \le N \le L$. Therefore, the time complexity of the full SVD of an $N \times L$ matrix, which computes all SVD components, is $\cal O(L^2N + N^2L + N^3)$ using the Golub--Reinsch algorithm \cite[p.~493]{golub2013matrix}. For the Cholesky decomposition and the inversion of an $N \times N$ matrix, the time complexity is $\cal O(N^3)$. Since the variables in \eqref{eq:H-compact} are constructed from matrices, they are inherently large-dimensional. As a result, computing $\mat C^\T \mat C$ requires $\cal O(K^3 N^2 L)$ FLOPs and inverting $\mat C^\T \mat C$ requires $\cal O(K^3 N^3)$ FLOPs because $\mat C \in \R^{2KL \times 2KN}$.

\begin{table*}[!htbp]
\centering
\caption{Asymptotically Computational Complexity Analyses; cf. Table \ref{tab:solution-method}}
\begin{tabular}{l|l|l|l|lllllll}
\hline
                \textbf{Solution}                                    &       \textbf{Dominating Operation}                                                    &      \textbf{Time Complexity}                                   & \textbf{Solution Type} & \textbf{Remark} \\
\hline
\multirow{2}{*}{Prop. \ref{prop:convex-quadratic-max}}      &       \eqref{eq:sub-prob-1}, Prop. \ref{prop:sub-prob-1}, Chol \& MatInv \& MatMul      &      $\cal O(K^3N^3 + K^3N^2L + K^3L^2N)$    &   Closed-Form & \multirow{2}{*}{At Each Iteration}\\
                                                            &       \eqref{eq:sub-prob-2}, Prop. \ref{prop:sub-prob-2}, MatMul              &      $\cal O(K^3N^2L)$                                 &   Closed-Form &   \\
\hline

\multirow{3}{*}{Prop. \ref{thm:solution-remedy-problem}}    &       Step a, SVD \& MatMul                                                   &      $\cal O(N^3 + N^2L + N^2K + L^2N + NKL)$          &      Closed-Form & \multirow{3}{*}{At Each Iteration}\\
                                                            &       Step b, SVD \& MatMul                                                   &      $\cal O(L^3 + N^2L + N^2K + L^2N + NKL)$          &  Closed-Form &\\
                                                            &       Step c, SVD \& MatMul                                                   &      $\cal O(N^3 + N^2L + L^2 K + L^2N + NKL)$                       &  Closed-Form &\\
\hline

Prop. \ref{prop:sol-remedy-another}                         &       SVD \& MatMul                                                           &      $\cal O(N^3 + N^2L + N^2K + L^2N + NKL)$          & Closed-Form & Overall \\
\hline

\hspace{-0.5pt}\cite[Algo.~1]{liu2018toward}                &       See \hspace{-0.5pt}\cite[Sec. IV-E]{liu2018toward}                      &      $\cal O(N^3 + N^2L + N^2K + NKL)$                 & Iterative & Overall \\
\hline
\hspace{-0.5pt}\cite[Algo.~2]{liu2018toward}                &       See \hspace{-0.5pt}\cite[Sec. IV-E]{liu2018toward}                      &      $\cal O(N^2L + NKL)$                              & Iterative & At Each Iteration \\

\hline
\multicolumn{3}{l}{\tabincell{l}{Chol: Cholesky Decomposition;~~~~~~MatInv: Matrix Inversion;~~~~~~MatMul: Matrix Multiplication}} \\
\end{tabular}
\label{tab:computational-complexity}
\end{table*}

Table \ref{tab:computational-complexity} suggests that all the methods in this paper are solvable in polynomial times, and in this sense, they are computationally efficient. However, note that the time complexities of the algorithms in Propositions \ref{prop:convex-quadratic-max}, \ref{thm:solution-remedy-problem}, and \cite[Algo.~2]{liu2018toward} also depend on the number of numerical iterations. On the other hand, as the closed-form solution exists at every dominating operation, the actual running speeds of the proposed methods (i.e., Propositions \ref{prop:convex-quadratic-max}, \ref{prop:sub-prob-1}, \ref{prop:sub-prob-2}, \ref{thm:solution-remedy-problem}, and \ref{prop:sol-remedy-another}) are empirically very fast; see the experimental illustrations in Subsection \ref{subsubsec:convergence-computation}.

\section{Experiments}\label{sec:experiment}
The experiments are conducted using MATLAB 2023a on an $\text{HP}^\text{\textregistered}$ Desktop PC equipped with 12th Gen $\text{Intel}^\text{\textregistered}$ $\text{Core}^\text{\texttrademark}$ i7-12700K processor (3.60 GHz), 64.0 GB RAM, and 64-bit operating system. All the source data and codes are available online at GitHub: \url{https://github.com/Spratm-Asleaf/Robust-Waveform}. One may use them to reproduce the experimental results and verify the empirical claims in this paper. For each simulation case, we conduct $1000$ independent Monte--Carlo episodes, over which the reported results are averaged; for details, see Appendix Q in supplementary materials. 

\subsection{Experimental Setups}
Following the real-world experimental setups in Beijing, China, in \cite[Fig.~3]{zhang2020measuring}, we imagine a MIMO ISAC system placed at an intersection where two perpendicular streets cross each other, serving four downlink communication users while sensing two targets; see Fig. \ref{fig:scenario}. The key system parameters are configured as follows.
\begin{itemize}
    \item The carrier frequency is $5.9$ GHz; see \cite[Tab.~1]{zhang2020measuring};
    \item The total transmitting power is $P_{\T} = 34\text{ dBm} = 2.5$ watts; see \cite[Tab.~1]{zhang2020measuring};
    \item In \cite[p.~136]{zhang2020measuring}, the signal-to-noise ratio (SNR) is $24$ dB, that is, the noise power is $N_0 = P_{\T}/10^{(24/10)} = 0.01$ (watts). In our experiments, we use a lower SNR = 10 dB, i.e., $N_0 = 0.25$ watts, which is a more complex case; 
    \item The number of radio paths for each communication channel is in-between $60$ and $100$; the average is $80$; see \cite[Fig.~7]{zhang2020measuring}.
\end{itemize}
For sufficient generality and practicality, the mature MATLAB Phased Array System Toolbox and Communications Toolbox are employed to simulate the transmitters and the scattering-based communication channels.
\begin{figure}[!htbp]
	\centering
	\subfigure[Satellite View]{
	 	\includegraphics[height=3cm,width=4.1cm]{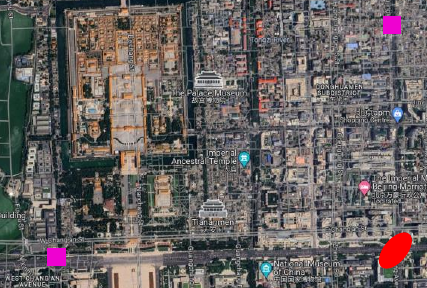}
        \label{fig:scenario_street_view}
	}
	\subfigure[Map View]{
	 	\includegraphics[height=3cm,width=4.1cm]{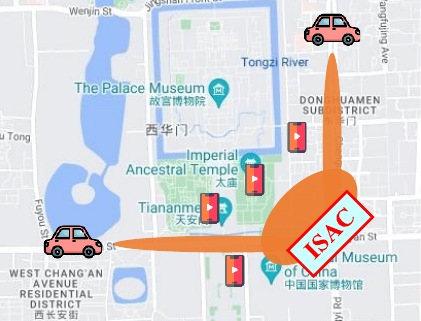}
        \label{fig:scenario_topology}
	}
    \caption{The ISAC system is placed at a crossing of two perpendicular streets in Beijing, China. It senses two targets (cars) and serves four downlink-communication users (phones). For adequate sensing qualities, two high-directional beams point to the two targets, respectively. In Fig. (a), the red ellipse represents the ISAC station, and magenta squares denote cars. In Fig. (b), the orange elongated ellipses illustrate beams. (Icon Credit: FLATICON.com; Photo Credit: Google Map.)}
    \label{fig:scenario}
\end{figure}

For other system parameters, we follow the setups in \cite{liu2018toward} for the convenience of comparison.
\begin{itemize}
    \item The ISAC base station is located at the origin $(0, 0)$;
    \item $K = 4$, $N = 16$, $L = 30$; cf. \eqref{eq:communication-model};
    \item The constellation $\mat S$ is constructed using $4$-point quadrature phase-shift keying (QPSK); 
    \item The two targets are located with azimuths of $(-45, 45)$ degrees relative to the vertical line of sight of the base station, and with randomly generated ranges of $(1.51, 1.39)$ kilometers (Km) from the base station, respectively;
    \item The expected beamwidth at each target direction is $10$ degrees;
    \item The locations of the four downlink communication users are randomly generated according to the uniform distribution on $[0,1] \times [-1, 1]$ Km$^2$.
\end{itemize}

We suppose that the nominal channel $\matb H$ and the true channel $\mat H_0$ are generated from the small perturbations of a reference channel $\mat H_{\text{ref}}$, 
that is,
\begin{equation}\label{eq:simulation-engine}
\begin{array}{cl}
\mat H_0 &= \mat H_{\text{ref}} + \epsilon \cdot \mat \Delta_1, \\
\matb H &= \mat H_{\text{ref}} + \epsilon \cdot \mat \Delta_2,
\end{array}
\end{equation}
where $\mat H_{\text{ref}}$ is simulated using the MATLAB toolboxes and fixed throughout the experiments; $\matb H$ is also fixed after channel estimation; $\epsilon \defeq 0.05$ is a small number and the perturbation matrices $\mat \Delta_1$ and $\mat \Delta_2$ are distributed according to entry-wise standard complex Gaussian. Note that the true channel $\mat H_0$ randomly varies from one Monte--Carlo test to another, however, it is close to the fixed nominal channel $\matb H$. Note also that waveform designs are based on the nominal channel $\matb H$, while performance tests are on the true channels $\mat H_0$ because, in simulations, $\mat H_0$ is known. In practice, the channel's uncertainty quantification parameter $\theta$ such that $\|\mat H_0 - \matb H\| \le \theta$ is unknown for a specified norm $\|\cdot\|$ because we only know $\matb H$ and all other information about the true channel $\mat H_0$ is missing.\footnote{The relation $\|\mat H_0 - \matb H\| \le \theta$ holds in probability due to \eqref{eq:simulation-engine}; cf. Fact \ref{fact:bound-in-prob}.} Hence, in real-world operation, $\theta$ can be a tuning parameter unless it is specified in the channel estimation stage. In the experiments, $\|\cdot\|$ is particularized to the matrix F-norm or the vector $2$-norm; cf. Proposition \ref{prop:sub-prob-2}.

Note that the parameter configurations described above do not essentially impact the experimental results in this paper. Therefore, these setups are merely employed as demonstrative examples; one may use the shared source codes to change the configurations to verify the claims in this section.

\subsection{Experimental Results}

\subsubsection{Perfect Waveforms for Sensing and Communication} 
The beam pattern for perfect sensing is shown in Fig. \ref{fig:perfect-sensing-waveform}, where two high-directional beams point to the two targets located with azimuths of $(-45, 45)$ degrees, respectively. For each target, the beam width is $10$ degrees. The theoretically perfect beam pattern for sensing is not practically achievable, and therefore, the mean-squared-error minimization method \cite{fuhrmann2008transmit} is used to design the practically perfect beam pattern (i.e., the beampattern-inducing matrix $\mat R$). When $\mat R$ is available, the cyclic algorithm \cite{stoica2008waveform} is used to design the perfect sensing waveform $\mat X_s$ with constrained peak-to-average power ratio (PAPR); recall \eqref{eq:perfect-sensing}.

\begin{figure}[!htbp]
    \centering
	\subfigure[Perfect Sensing]{
	 	\includegraphics[height=3cm,width=4.1cm]{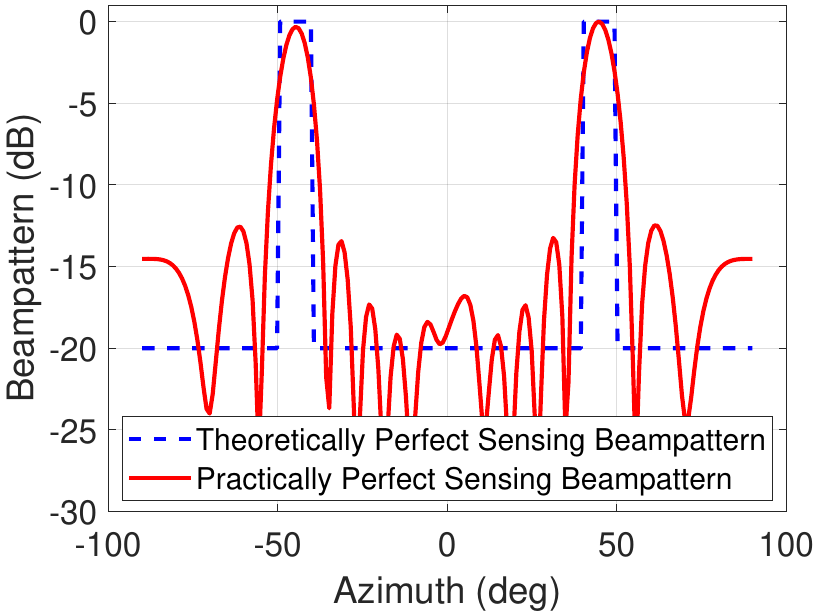}
        \label{fig:perfect-sensing-waveform}
	}
	\subfigure[Perfect Communication (for $\matb H$)]{
	 	\includegraphics[height=3cm,width=4.1cm]{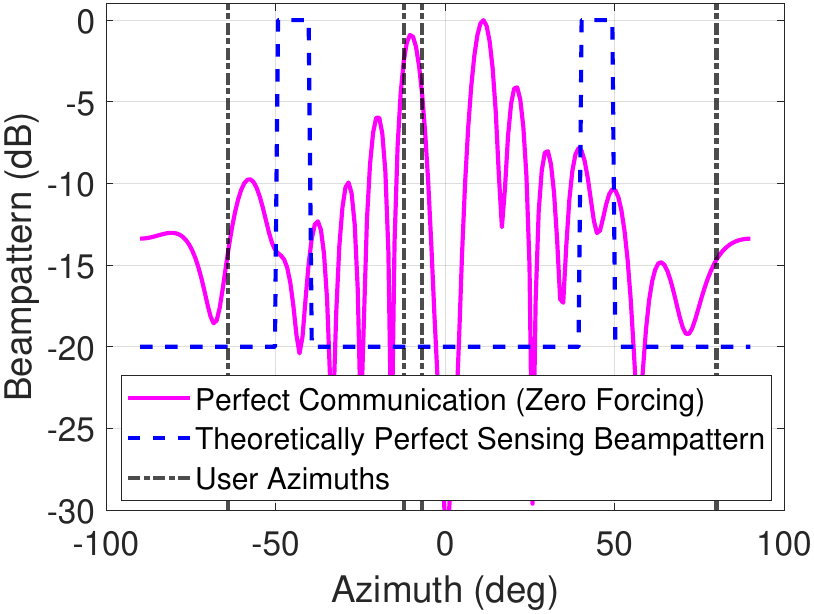}
        \label{fig:perfect-commu-waveform}
	}

    \caption{The beam patterns for perfect sensing and perfect communication, respectively. The perfect communication, obtained using zero-forcing precoding, is specific to the nominal channel $\matb H$.}
    \label{fig:perfect-waveforms}
\end{figure}

As for perfect communication, in the sense of maximum AASR $R_{\mat H_0, \mat X_0}$ defined in \eqref{eq:AASR}, it cannot be specified because $R_{\mat H_0, \mat X_0}$ depends on the true but unknown channel $\mat H_0$ and the associated truly optimal waveform $\mat X_0$. For the purpose of illustration, we show the perfect communication waveform under the nominal channel, that is, the zero-forcing waveform $\mat X_c \defeq \matb H^{\H} (\matb H \matb H^{\H})^{-1} \mat S$; see Fig. \ref{fig:perfect-commu-waveform}. The zero-forcing waveform is perfect for communication in the sense of interference mitigation (i.e., MUI cancellation). We see that to reduce MUI energy, the communication beams do not necessarily point to users. In addition, the optimal waveform for communication is indeed not necessarily the same as the optimal waveform for sensing; see Table \ref{tab:perfect-performances}. The detection probability of the target at the azimuth of $-45$ degrees is calculated using \cite[Eq.~(69)]{khawar2015target}.

\begin{table}[!htbp]
\centering
\caption{System Performances for Separate Waveform Design}
\begin{tabular}{l|l|l}
\hline
\textbf{Performance Metric}                                                                            & \textbf{Waveform} & \textbf{Value} \\ \hline
\multirow{2}{*}{\begin{tabular}[c]{@{}l@{}}Detection Probability\\ (for Sensing)\end{tabular}} & $\mat X_s$         &  1     \\
                                                                                  & $\mat X_c$         &  0.38     \\ \hline
\multirow{2}{*}{\begin{tabular}[c]{@{}l@{}}AASR\\ (for Communication)\end{tabular}}          & $\mat X_s$         &  0.30 (bps/Hz/user)     \\
                                                                                  & $\mat X_c$         &  3.07 (bps/Hz/user)     \\ \hline
\multicolumn{3}{l}{Azimuth of Target to Detect: $-45$ Degrees}
\end{tabular}
\label{tab:perfect-performances}
\end{table}

\subsubsection{Nominal Waveform Design}
In this subsection, we show the practical issues of the nominal waveform design methods \eqref{eq:waveform-design} and \eqref{eq:waveform-design-tradeoff} using $\matb H$. To be specific, we demonstrate the issues of the nominal Pareto frontiers (specified by nominal designs) and the necessity of uncertainty-aware performance characterization (i.e., robust/conservative Pareto frontier). Experimental results are shown in Fig. \ref{fig:nominal-issues}; cf. Figs. \ref{fig:pareto-frontiers} and \ref{fig:true-frontiers}.

\begin{figure}[!htbp]
    \centering
	\subfigure[Sensing-Centric Design]{
	 	\includegraphics[height=3cm]{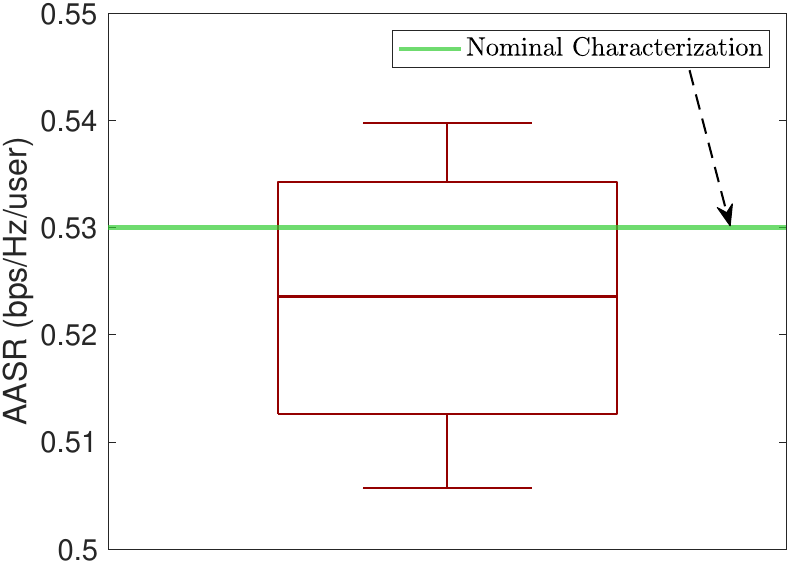}
        \label{fig:nominal-issues-sensingcentric}
	}
 
    \subfigure[Joint Design (TPC; $\rho \le 0.1$)]{
	 	\includegraphics[height=3cm]{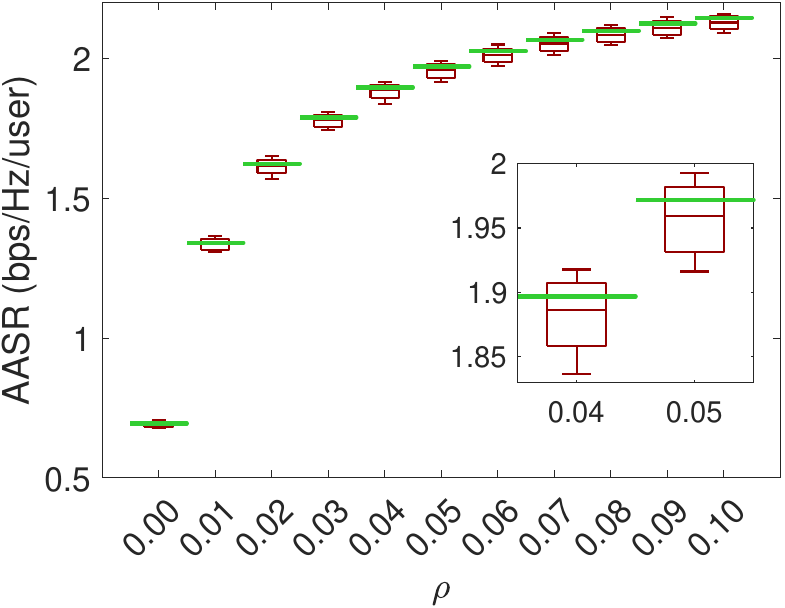}
        \label{fig:joint-TPC-nominal-issues-0.01}
	}
	\subfigure[Joint Design (TPC; $\rho \ge 0.1$)]{
	 	\includegraphics[height=3cm]{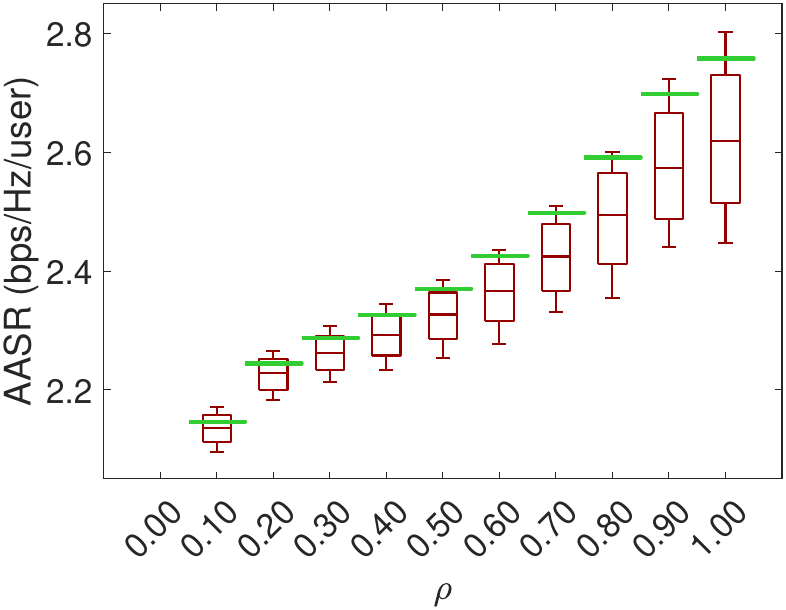}
        \label{fig:joint-TPC-nominal-issues-0.1}
	}
 
    \subfigure[Joint Design (PAPC; $\rho \le 0.1$)]{
	 	\includegraphics[height=3cm]{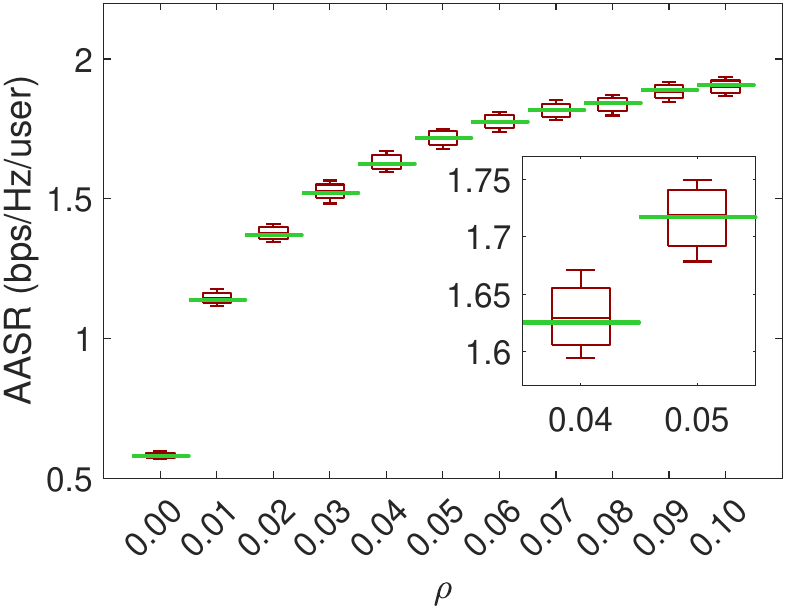}
        \label{fig:joint-PAPC-nominal-issues-0.01}
	}
	\subfigure[Joint Design (PAPC; $\rho \ge 0.1$)]{
	 	\includegraphics[height=3cm]{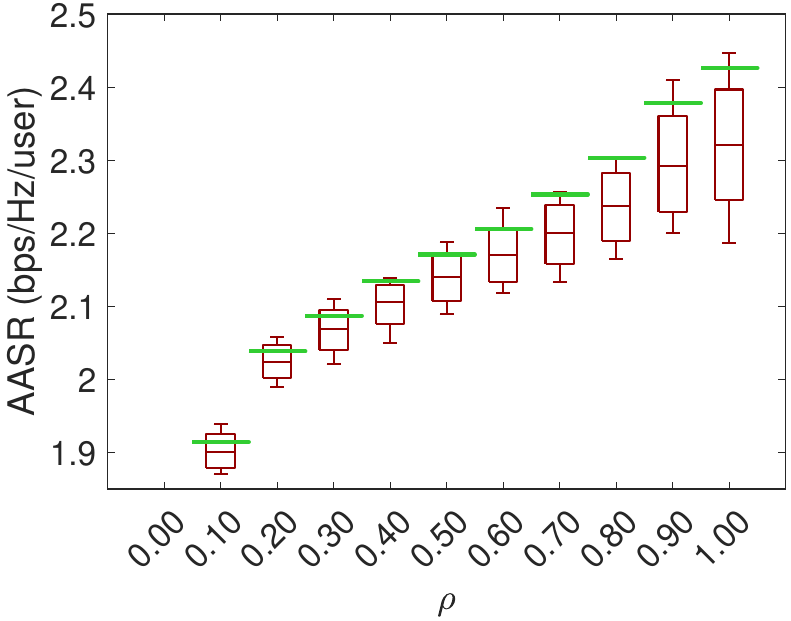}
        \label{fig:joint-PAPC-nominal-issues-0.1}
	}
    \caption{Nominal Pareto frontiers (i.e., nominal AASR performances; green) and true AASR performances (box plots). In (b)-(e), sensing-centric and communication-centric designs are obtained for $\rho = 0$ and $\rho = 1$, respectively; cf. \eqref{eq:waveform-design-tradeoff}. (Cf. Fig. \ref{fig:true-frontiers-a}.)}
    \label{fig:nominal-issues}
\end{figure}

In Fig. \ref{fig:nominal-issues}, the box plots of the true communication performances $R_{\mat H_0, \matb X}$ using the nominally optimal waveform $\matb X$ under $1000$ independent Monte--Carlo simulations (i.e., $1000$ different realizations of $\mat H_0$) are shown. The box plots display the minimum value, $5\%$ percentile, median, $95\%$ percentile, and the maximum value among all AASRs $R_{\mat H_0, \matb X}$. The green lines indicate the nominal characterization of sensing-communication performances, which, however, cannot fully depict the actual performance trade-off; cf. Fig \ref{fig:true-frontiers-a}.

In addition, the box plots in Fig. \ref{fig:nominal-issues} support that the joint design methods can improve communication performances by sacrificing sensing performances via changing the trade-off parameter $\rho$. Moreover, the joint design under the total power constraint (TPC) brings higher AASR than that under the per-antenna power constraint (PAPC).

\subsubsection{Robust Sensing-Centric Waveform Design \captext{\eqref{eq:robust-waveform-design}}}
The experimental results of sensing-centric robust waveform designs are shown in Fig. \ref{fig:sensing-centric}, where the box plots of the true communication performances $R_{\mat H_0, \mat X^*}$ using the robust waveforms $\mat X^*$ with different $\theta$s are shown. As we can see, in robust designs, the radius $\theta$ of the uncertainty set $\cal H$ should be elegantly specified: if $\theta$ is overly large, the associated performance bound would be extremely conservative (i.e., loose), while if $\theta$ is overly small, the performance lower bound cannot be correctly identified; recall Subsections \ref{subsec:price-robustness} and \ref{subsec:budget-uncertainty-set}.

\begin{figure}[!htbp]
    \centering
    \subfigure[Method \ref{method:max-form-solution} against $\theta$]{
	 	\includegraphics[height=3cm]{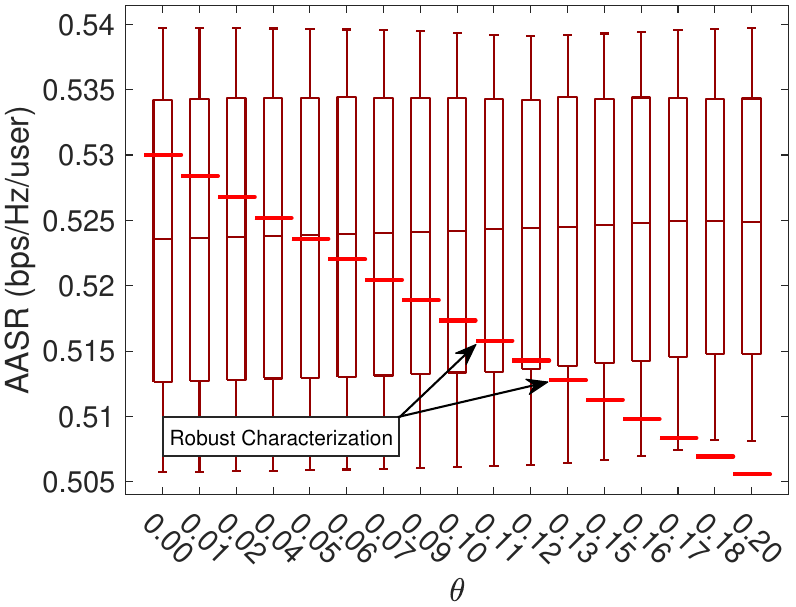}
        \label{fig:sensingcentric-robust-method1}
	}
	\subfigure[Method \ref{method:max-form-solution} ($\theta = 0.127$)]{
	 	\includegraphics[height=3cm]{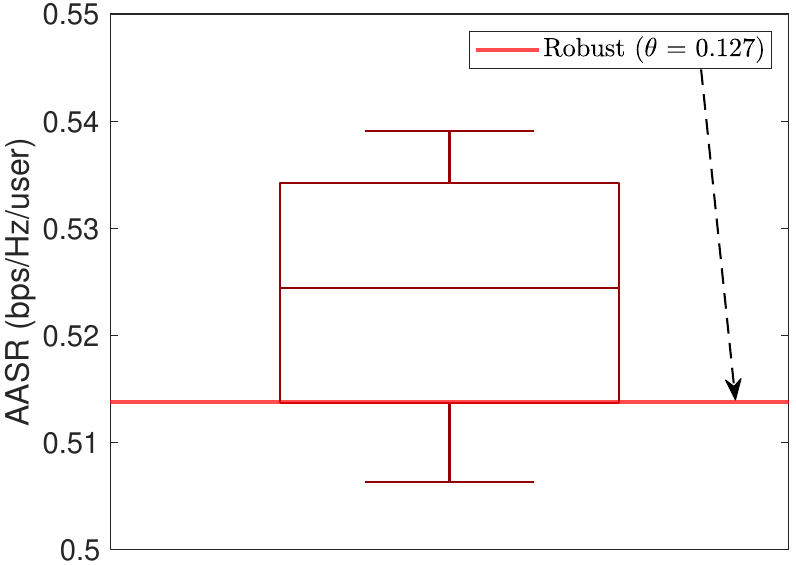}
        \label{fig:sensingcentric-robust-method1-0.127}
	}

     \subfigure[Method \ref{method:max-form-solution-another} against $\theta$]{
	 	\includegraphics[height=3cm]{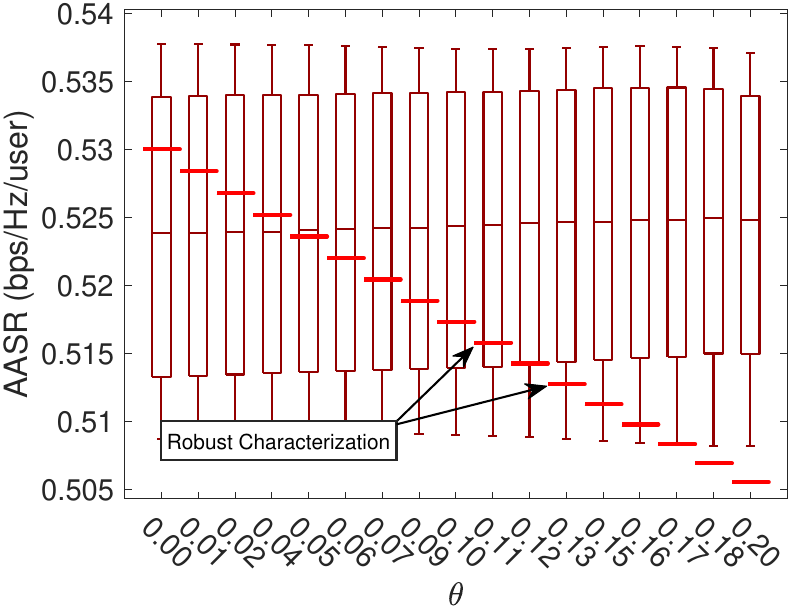}
        \label{fig:sensingcentric-robust-method2}
	}
	\subfigure[Method \ref{method:max-form-solution-another} ($\theta = 0.127$)]{
	 	\includegraphics[height=3cm]{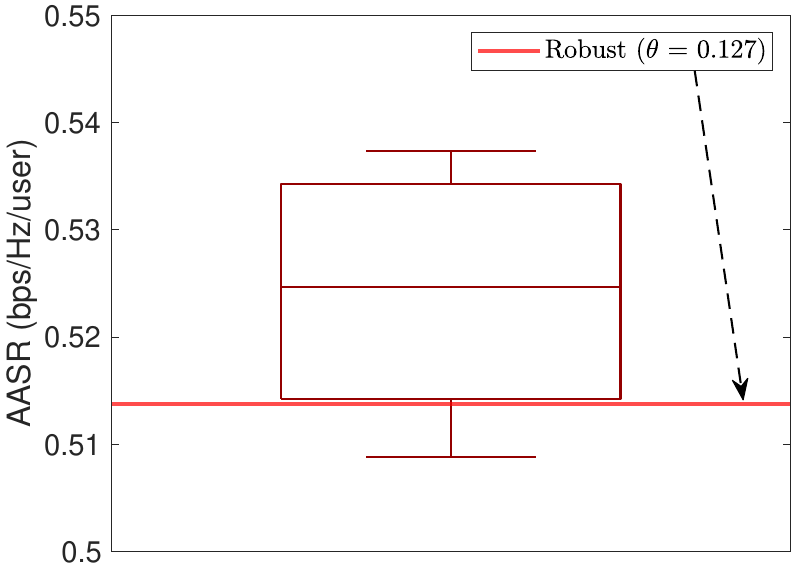}
        \label{fig:sensingcentric-robust-method2-0.127}
	}
    \caption{Conservative characterization under robust sensing-centric waveform design. Red step values show the robust performance characterizations $R_{\mat H^*, \mat X^*}$ under different $\theta$. When $\theta \approx 0.127$, $R_{\mat H^*, \mat X^*}$ correctly identifies the tightest conservative performance bounds of true AASRs, with probability $95\%$. NB: $\theta = 0$ gives nominal characterizations. (Cf. Fig. \ref{fig:true-frontiers-b}.)}
    \label{fig:sensing-centric}
\end{figure}

In Fig. \ref{fig:sensingcentric-beampattern}, we show the beam patterns of the three waveforms: the perfect-sensing waveform $\mat X_s$, the nominal waveform $\bar{\mat X}$, and the robust waveform $\mat X^*$ when $\theta = 0.127$. Since this is a sensing-centric design scheme, all waveforms are guaranteed to have the same beam pattern.
\begin{figure}[!htbp]
    \centering
    \subfigure[Method \ref{method:max-form-solution}]{
	 	\includegraphics[height=3cm,width=4.1cm]{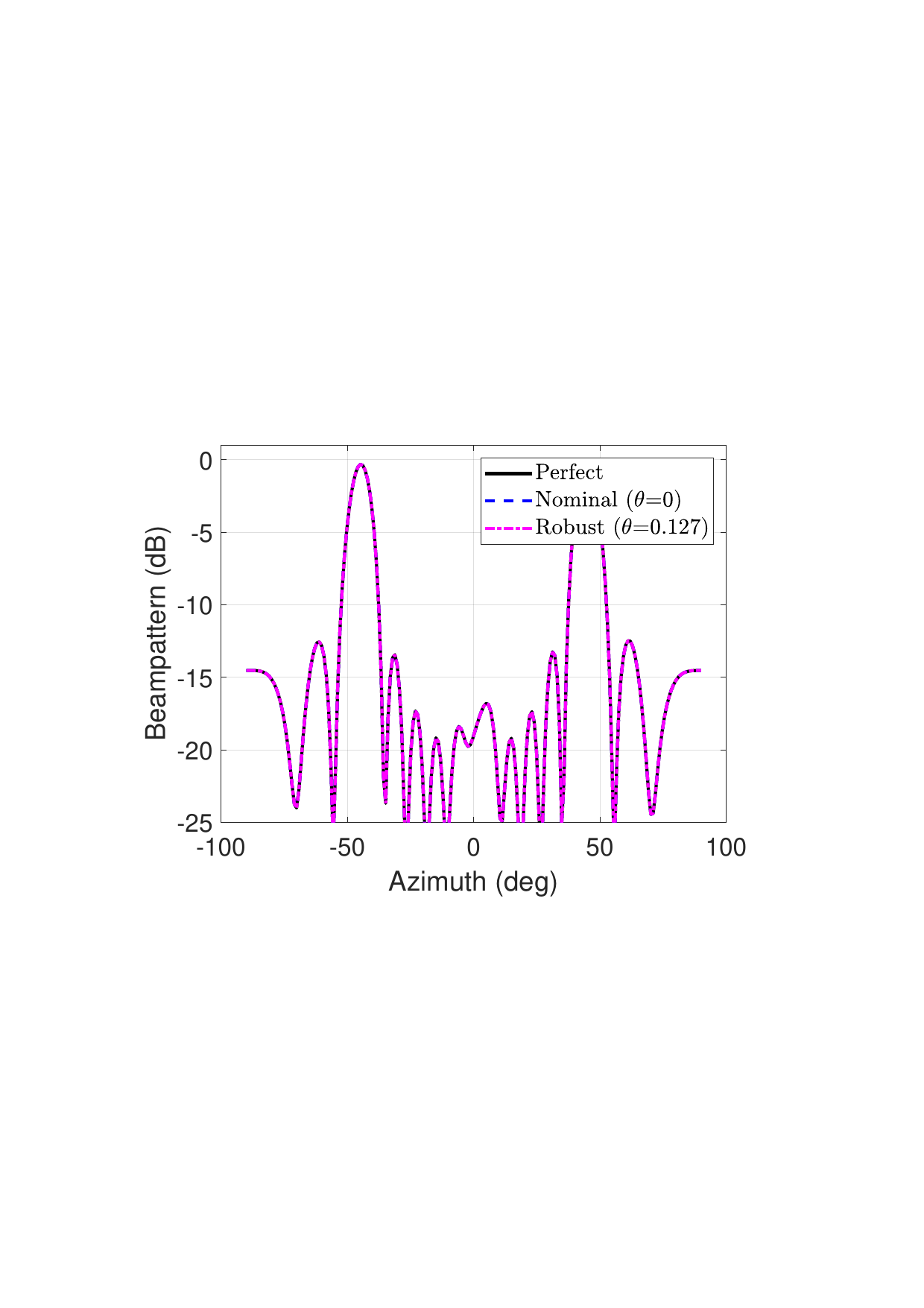}
        \label{fig:sensingcentric-beampattern-method1}
	}
	\subfigure[Method \ref{method:max-form-solution-another}]{
	 	\includegraphics[height=3cm,width=4.1cm]{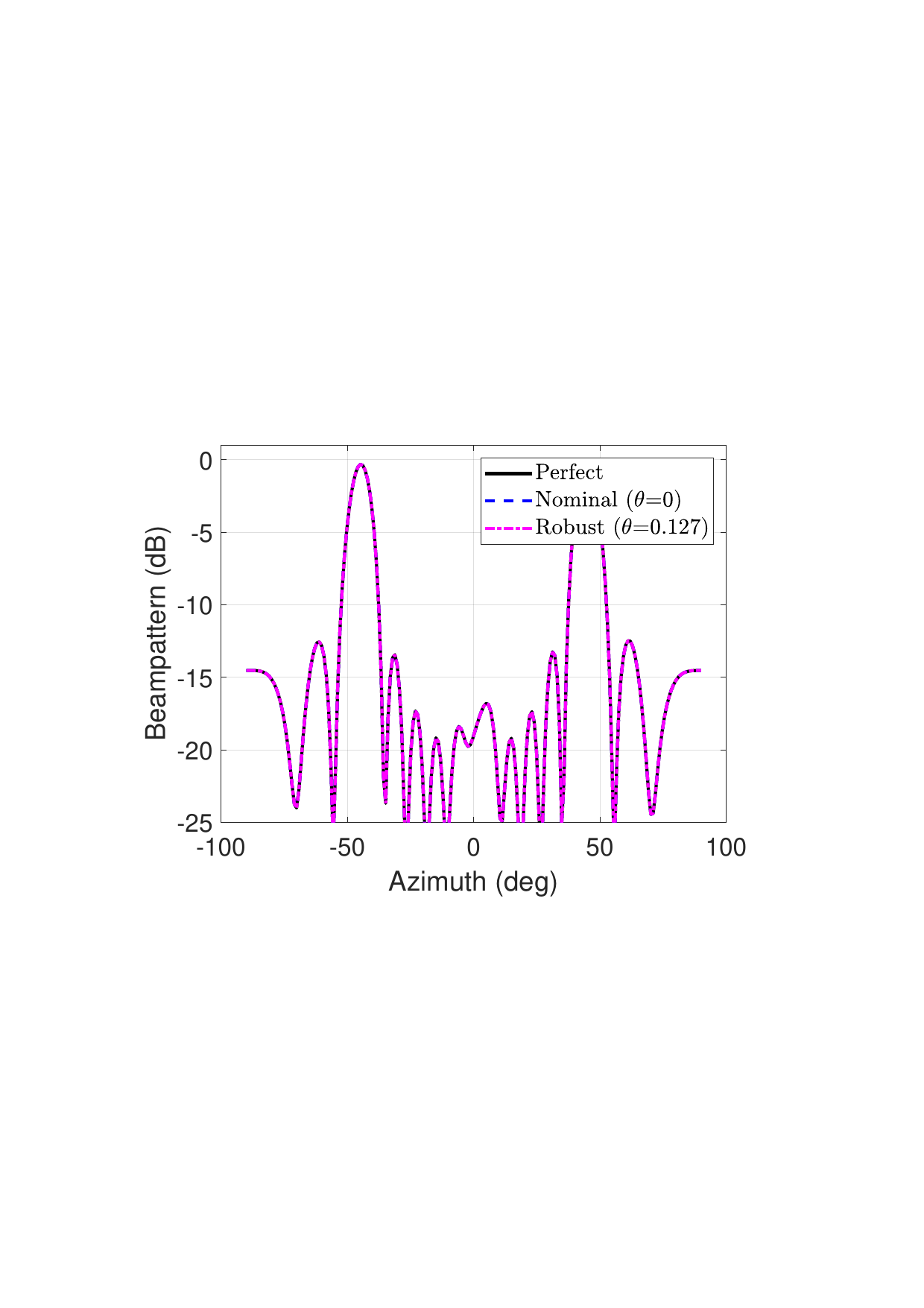}
        \label{fig:sensingcentric-beampattern-method2}
	}
    \caption{Beam patterns of sensing-centric ISAC waveforms (including perfect-sensing, nominal, and robust waveforms); the two sub-figures are identical.}
    \label{fig:sensingcentric-beampattern}
\end{figure}

\begin{figure}[!htbp]
    \centering
    \subfigure[TPC against $\theta$ ($\rho = 0.25$)]{
	 	\includegraphics[height=3cm]{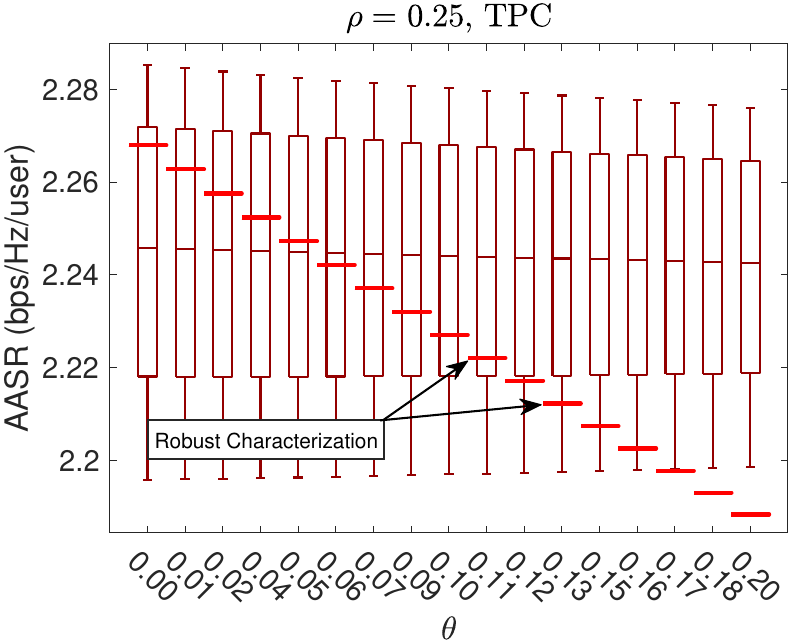}
        \label{fig:joint-TPC-robust-rho-0.25}
	}
	\subfigure[TPC ($\theta=0.16$, $\rho = 0.25$))]{
	 	\includegraphics[height=3cm]{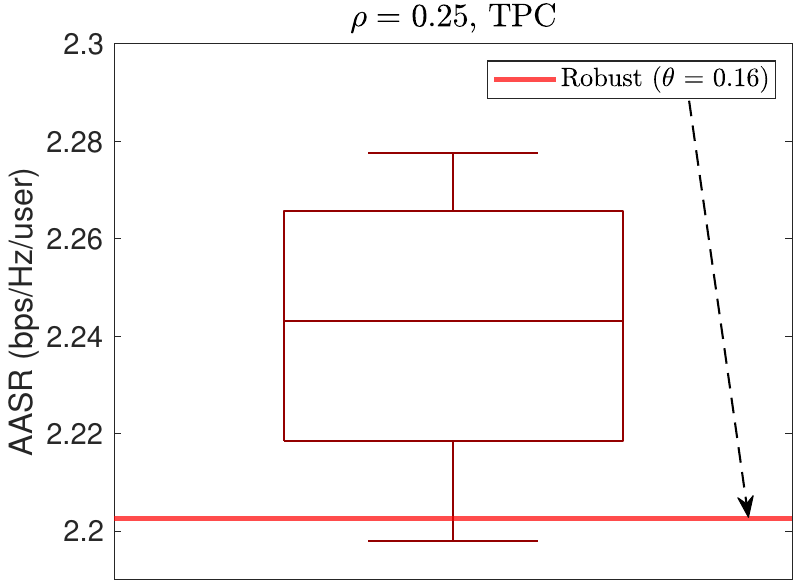}
        \label{fig:joint-TPC-robust-rho-0.25-theta-0.16}
	}
    
    \subfigure[TPC ($\theta = 0.16$, $\rho \le 0.1$)]{
	 	\includegraphics[height=3cm]{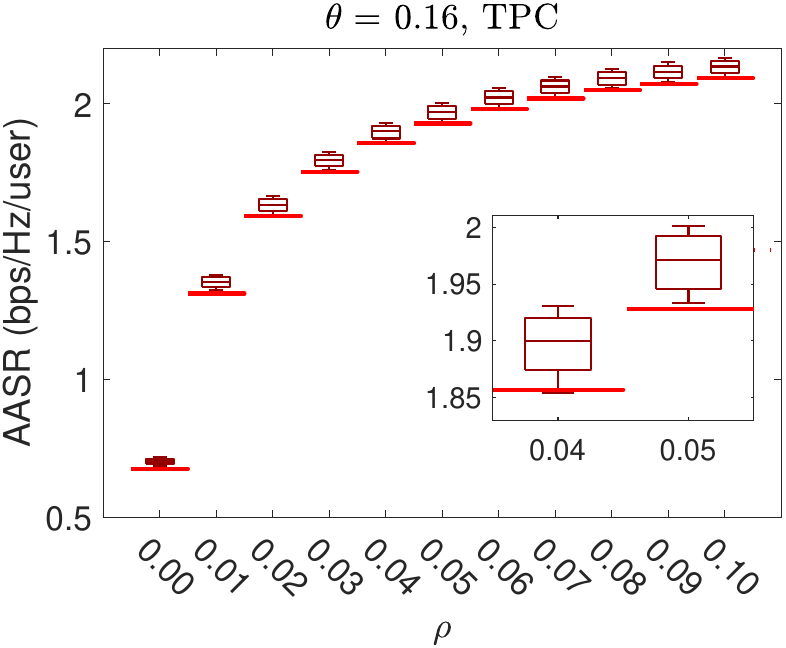}
        \label{fig:joint-TPC-robust-0.00-0.01}
	}~~~~
	\subfigure[TPC ($\theta = 0.16$, $\rho \ge 0.1$)]{
	 	\includegraphics[height=3cm]{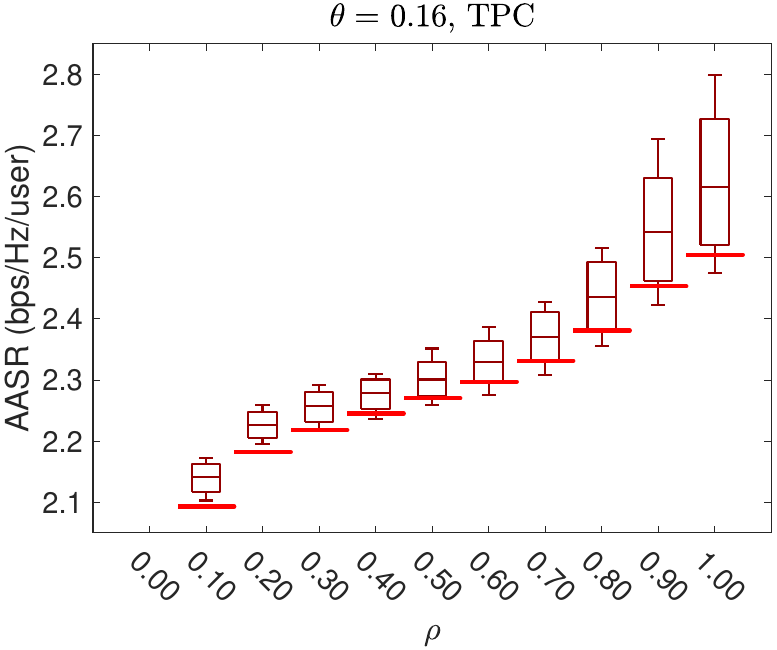}
        \label{fig:joint-TPC-robust-0.1-1.0}
	}
    \caption{Conservative characterization under robust joint waveform design subject to total power constraint (TPC). When $\theta = 0.16$, the tight conservative performance boundary is identified, with probability $95\%$. (Cf. Fig. \ref{fig:true-frontiers-b}.)}
    \label{fig:tradeoff-design-TPC}
\end{figure}

\begin{figure}[!htbp]
    \centering
    \subfigure[PAPC against $\theta$ ($\rho = 0.25$)]{
	 	\includegraphics[height=3cm]{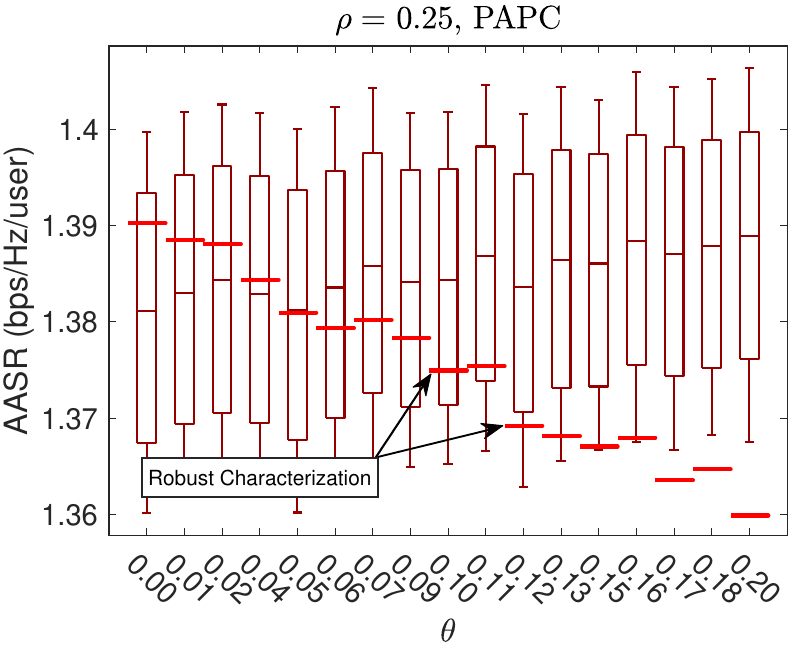}
        \label{fig:joint-PAPC-robust-rho-0.25}
	}
	\subfigure[PAPC ($\theta = 0.16$, $\rho = 0.25$)]{
	 	\includegraphics[height=3cm]{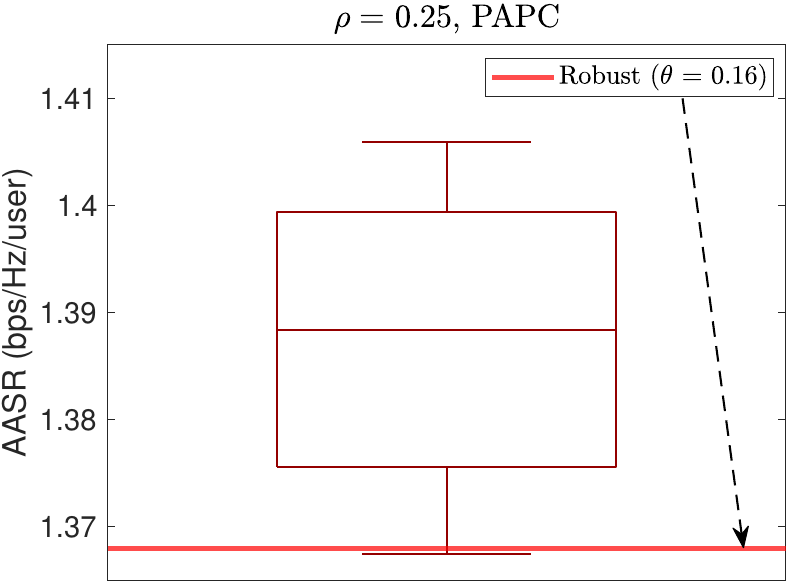}
        \label{fig:joint-PAPC-robust-rho-0.25-theta-0.16}
	}
    
    \subfigure[PAPC ($\theta = 0.16$, $\rho \le 0.1$)]{
	 	\includegraphics[height=3cm]{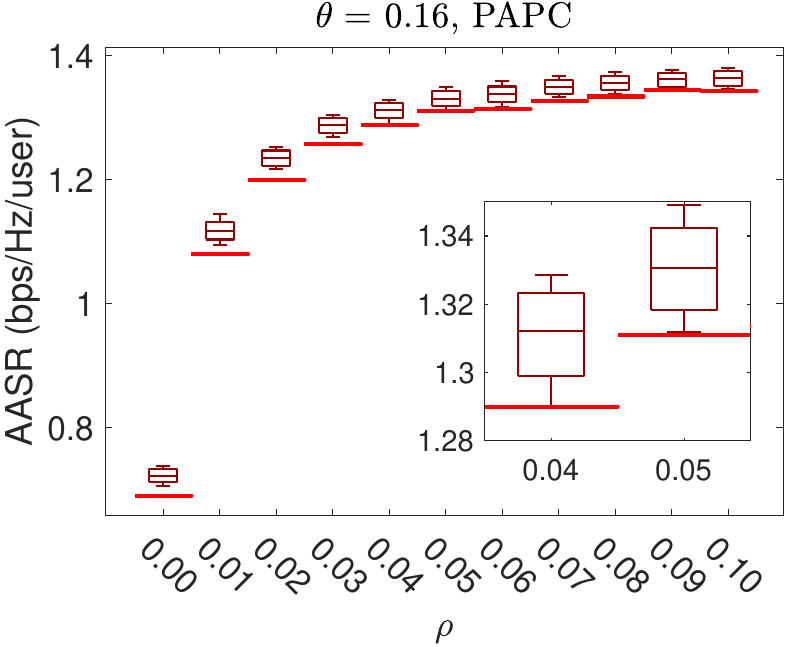}
        \label{fig:joint-PAPC-robust-0.00-0.01}
	}~~~~
	\subfigure[PAPC ($\theta = 0.16$, $\rho \ge 0.1$)]{
	 	\includegraphics[height=3cm]{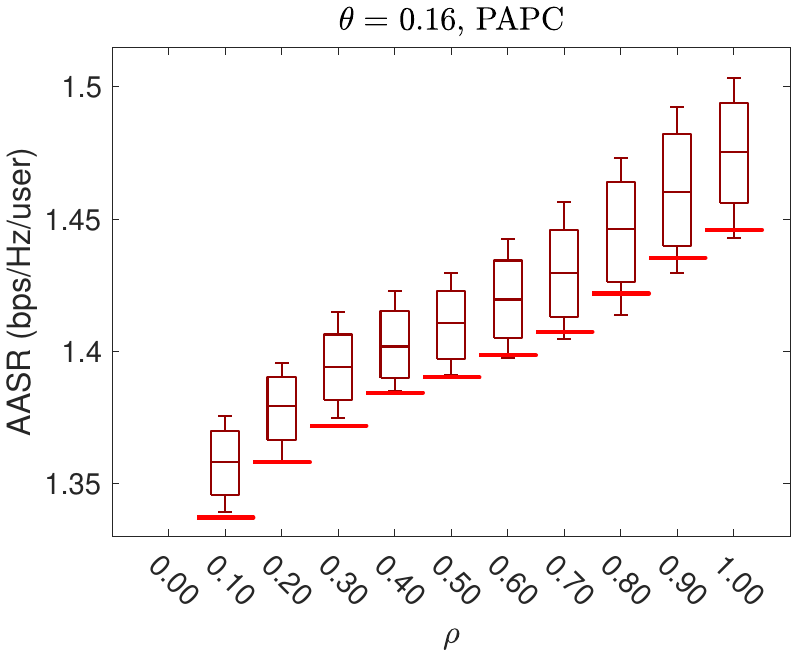}
        \label{fig:joint-PAPC-robust-0.1-1.0}
	}
    \caption{Conservative characterization under robust joint waveform design subject to per-antenna power constraint (PAPC). When $\theta = 0.16$, the tight conservative performance boundary is identified, with probability $95\%$. (Cf. Fig. \ref{fig:true-frontiers-b}.) Since high computational burdens exist at each iteration (cf. Table \ref{tab:running-times}), the solution algorithm is forced to terminate after limited iterations. As a result, the performances are not globally optimal and stable; see, e.g., (a).}
    \label{fig:tradeoff-design-PAPC}
\end{figure}

\subsubsection{Robust Joint Waveform Design \captext{\eqref{eq:robust-waveform-design-tradeoff}}}

The experimental results of robust joint waveform design, subject to TPC and PAPC, are shown in Figs. \ref{fig:tradeoff-design-TPC} and \ref{fig:tradeoff-design-PAPC}, respectively. In both cases, conservative (i.e., robust) performance boundaries are successfully identified. As joint design schemes, nominally optimal waveforms and robust waveforms are not guaranteed to have the same beam pattern as the perfect waveform for sensing; see Fig. \ref{fig:joint-beampattern}.
\begin{figure}[!htbp]
    \centering
    \subfigure[TPC ($\rho = 0.25$)]{
	 	\includegraphics[height=3cm]{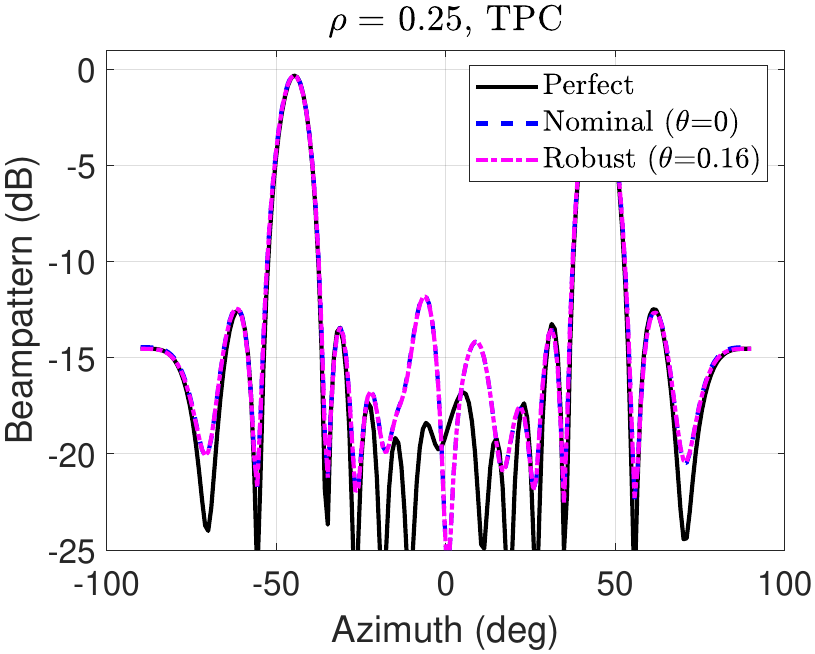}
        \label{fig:joint-beampattern-TPC-0.25}
	}
     \subfigure[TPC ($\rho = 0.75$)]{
	 	\includegraphics[height=3cm]{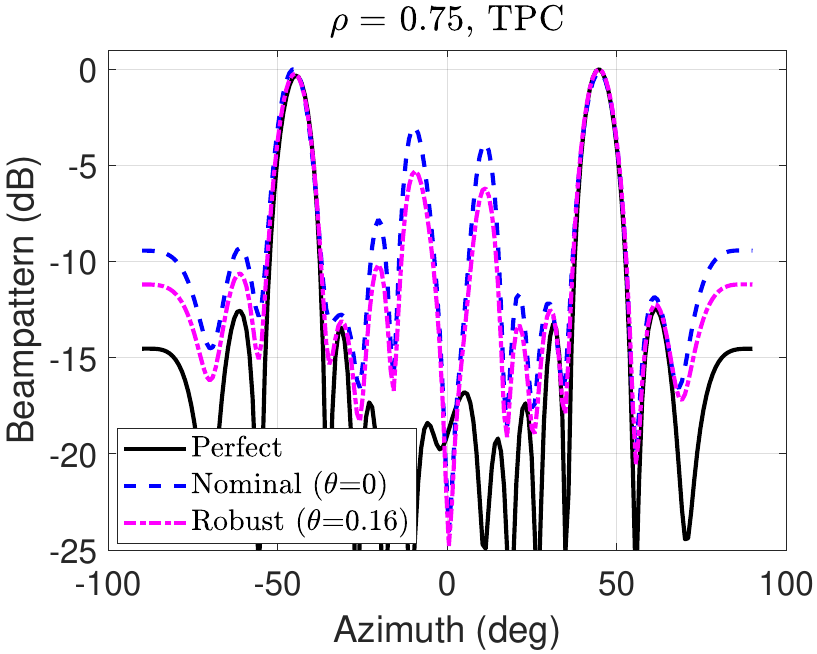}
        \label{fig:joint-beampattern-TPC-0.75}
	}
 
	\subfigure[PAPC ($\rho = 0.25$)]{
	 	\includegraphics[height=3cm]{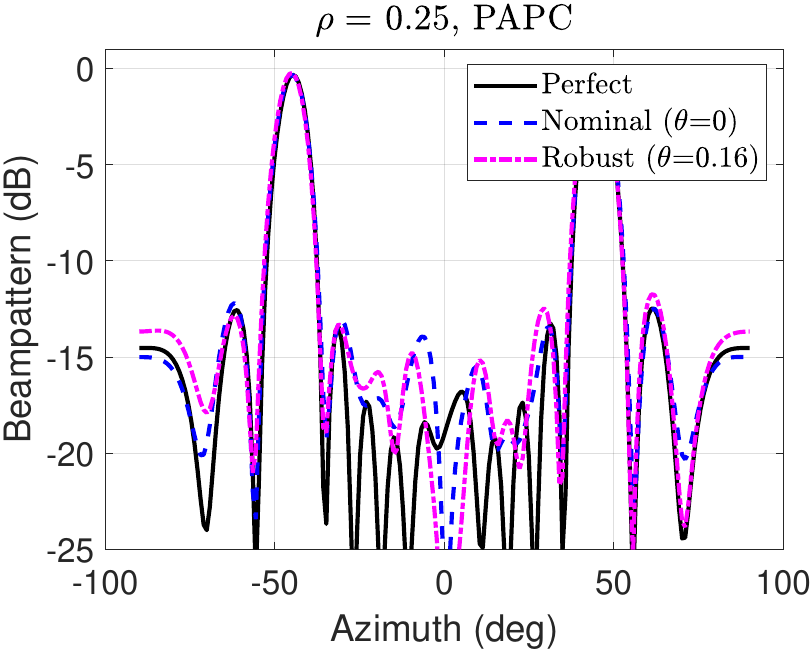}
        \label{fig:joint-beampattern-PAPC-0.25}
	}
	\subfigure[PAPC ($\rho = 0.75$)]{
	 	\includegraphics[height=3cm]{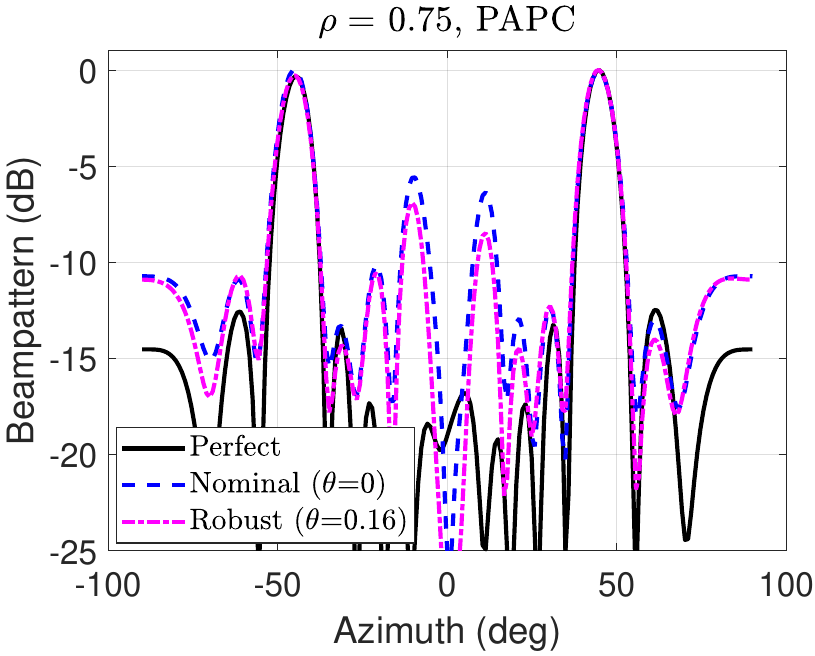}
        \label{fig:joint-beampattern-PAPC-0.75}
	}
    \caption{Beam patterns of joint-design ISAC waveforms (including perfect-sensing, nominal, and robust waveforms).}
    \label{fig:joint-beampattern}
\end{figure}

\subsubsection{Results on Algorithmic Convergence and Computation Burdens}\label{subsubsec:convergence-computation}

In Propositions \ref{prop:convex-quadratic-max} and \ref{thm:solution-remedy-problem}, iterative processes are involved. However, experiments show that the two processes converge extremely fast on average, almost within six or seven iterations. This is attributed to the existence of closed-form solutions at each iteration. One illustration is given in Fig. \ref{fig:convergence-check}.

\begin{figure}[!htbp]
    \centering
	\subfigure[Iteration in Proposition \ref{prop:convex-quadratic-max}]{
	 	\includegraphics[height=3cm]{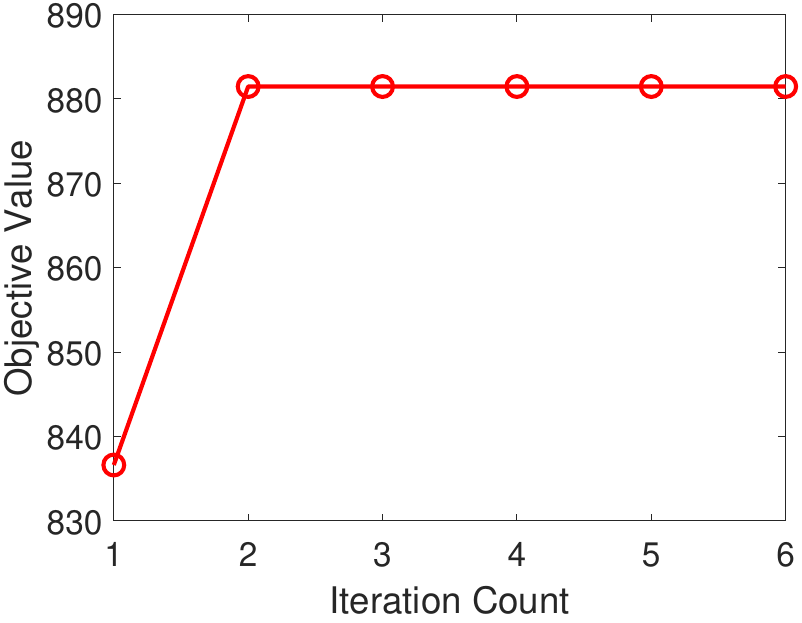}
        \label{fig:convex-max-converge}
	}
	\subfigure[Iteration in Proposition \ref{thm:solution-remedy-problem}]{
	 	\includegraphics[height=3cm]{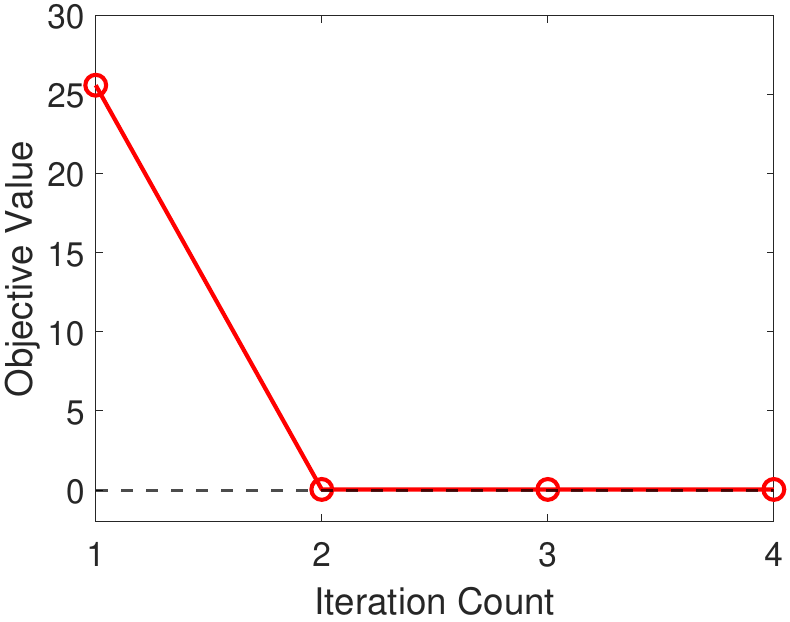}
        \label{fig:remedy-converge}
	}

    \caption{Empirical convergence illustrations of the iteration processes in Propositions \ref{prop:convex-quadratic-max} and \ref{thm:solution-remedy-problem}, terminating within six steps on average. Note that at each iteration step, closed-form solutions exist.}
    \label{fig:convergence-check}
\end{figure}

The average running times of involved methods are shown in Table \ref{tab:running-times}; see Table \ref{tab:solution-method} for the components of each method. Table \ref{tab:running-times} suggests the following: 1) non-robust nominal designs consume significantly fewer running times than their robust counterparts; 2) the joint design method under PAPC has significantly more computational burdens than the sensing-centric design and the joint design under TPC; 3) the sensing-centric design and the joint design under TPC are computationally comparable; 4) Methods \ref{method:max-form-solution}, \ref{method:max-form-solution-another}, and \ref{method:robust-joint-design} under TPC are computationally comparable.  
\begin{table}[!htbp]
\centering
\caption{Average Running Times of Design Methods; cf. Table \ref{tab:solution-method}}
\begin{tabular}{l|c|c|c|c|c|c|c}
\hline
         & \multicolumn{3}{c|}{\textbf{Sensing Centric}}  &  \multicolumn{2}{c|}{\textbf{Joint (TPC)}}  &  \multicolumn{2}{c}{\textbf{Joint (PAPC)}}   \\   
         \cline{2-8}
         &  N.    &   M. 1    &  M. 2        &   N.     &   M. 3      &   N.     &   M. 3   \\
\hline 
\textbf{Times}    &  0.36    &   5.83    &  4.60        &   0.69     &   5.76      &   132.35     &   180.17   \\
\hline
\multicolumn{8}{l}{
\tabincell{l}{
N.: \underline{N}ominal Designs;~~~~~~~~~~~~~M.: \underline{M}ethod (Robust Designs)\\
Time Unit: milliseconds (ms)
}}
\end{tabular}
\label{tab:running-times}
\end{table}

\subsubsection{Results on Different \captext{$\epsilon$}}
In previous experiments, we use $\epsilon = 0.05$; cf. \eqref{eq:simulation-engine}. In this subsection, we offer more empirical results against different values of $\epsilon$ to show the applicability of the proposed robust method for different uncertainty levels. As demonstrations, we adopt the sensing-centric design scheme; see Fig. \ref{fig:different-epsilon}. From Fig. \ref{fig:different-epsilon}, the following two observations can be outlined: First, the larger the uncertainty level $\epsilon$, the more dispersive the AASRs; second, for different uncertainty levels $\epsilon$, the proposed robust design method can provide conservative characterizations under appropriate $\theta$s (i.e., identify the tight performance lower bounds with probability $95\%$). For more discussions, see Appendix R in supplementary materials.

\begin{figure}[!htbp]
    \centering
	\subfigure[Nominal ($\epsilon = 0.001$)]{
	 	\includegraphics[height=3cm]{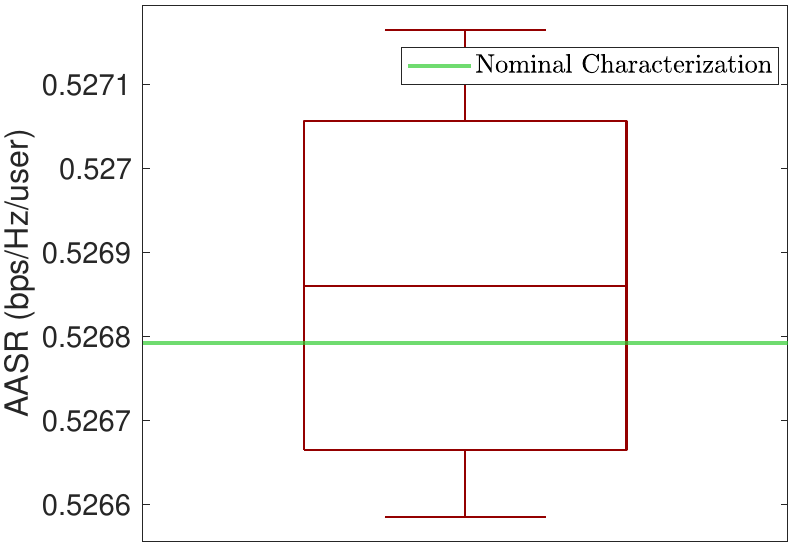}
	}
	\subfigure[Conservative ($\epsilon = 0.001$)]{
	 	\includegraphics[height=3cm]{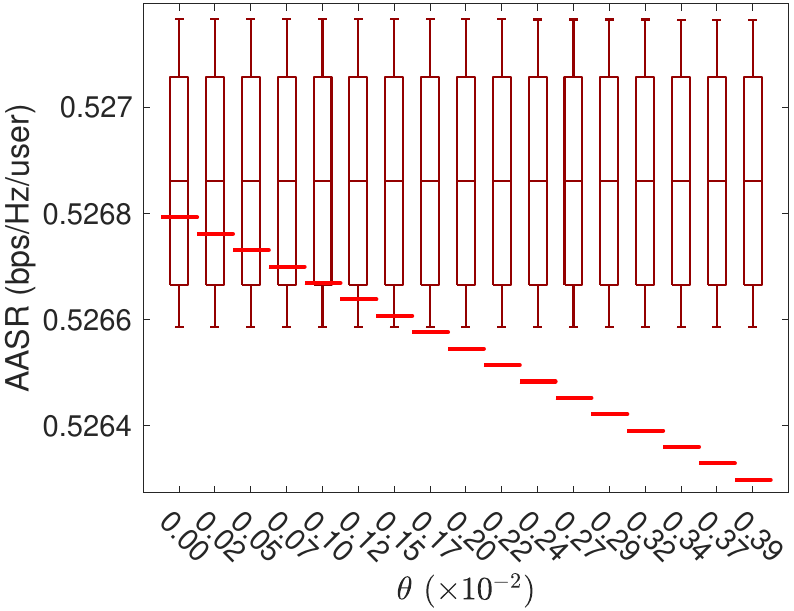}
	}

	\subfigure[Nominal ($\epsilon = 0.01$)]{
	 	\includegraphics[height=3cm]{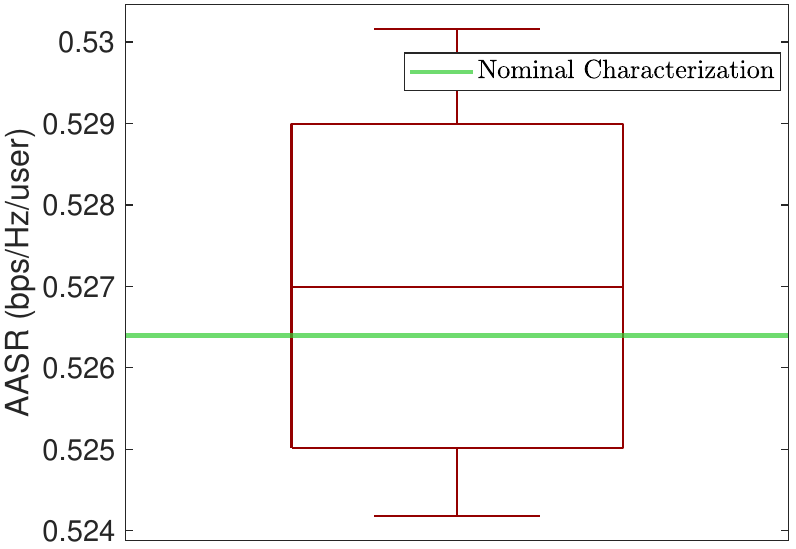}
	}
	\subfigure[Conservative ($\epsilon = 0.01$)]{
	 	\includegraphics[height=3cm]{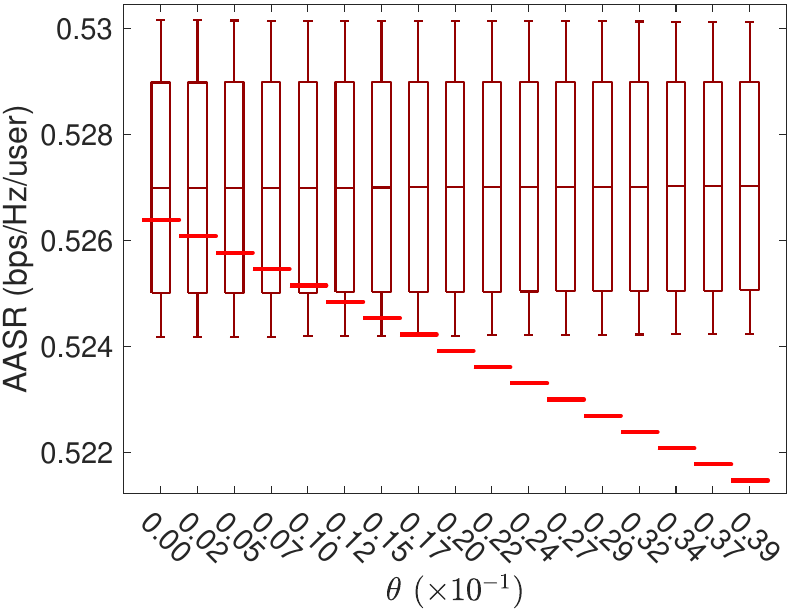}
	}

	\subfigure[Nominal ($\epsilon = 0.1$)]{
	 	\includegraphics[height=3cm]{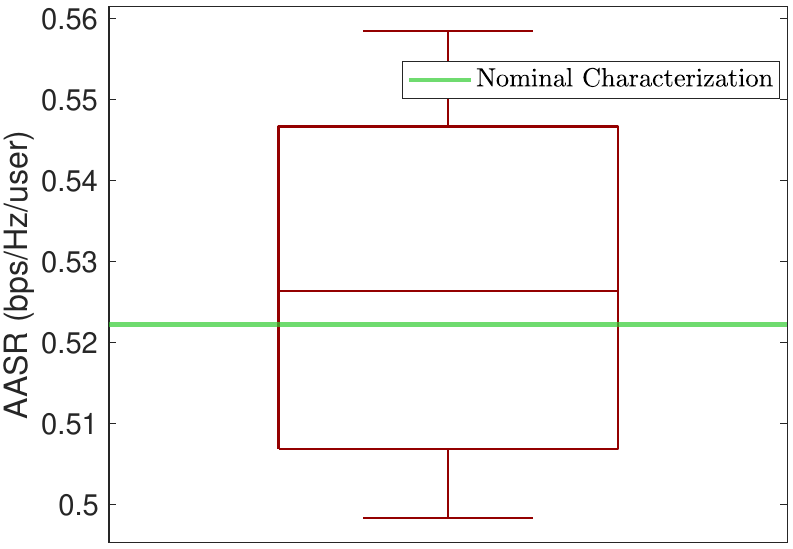}
	}
	\subfigure[Conservative ($\epsilon = 0.1$)]{
	 	\includegraphics[height=3cm]{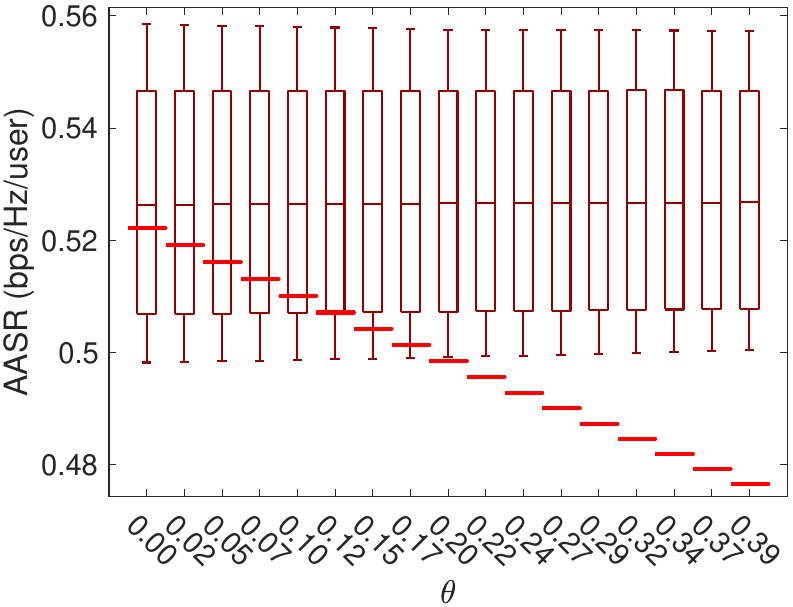}
	}

    \subfigure[Nominal ($\epsilon = 0.5$)]{
	 	\includegraphics[height=2.95cm]{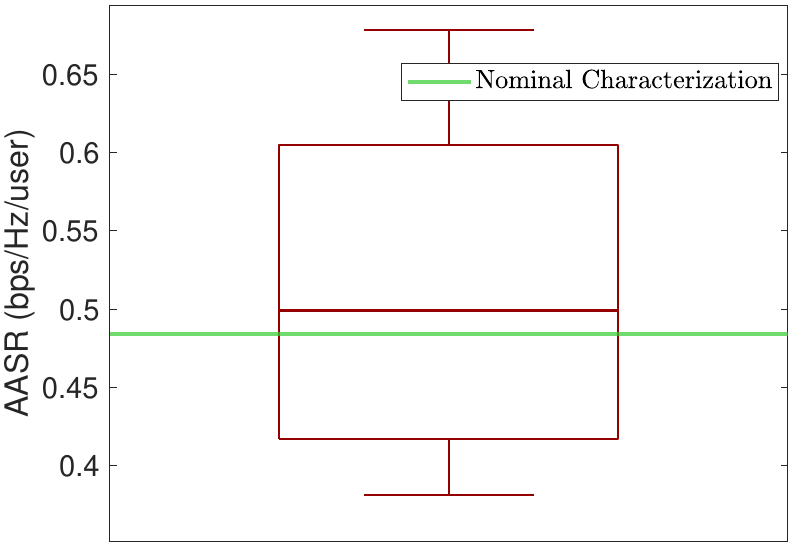}
	}
	\subfigure[Conservative ($\epsilon = 0.5$)]{
	 	\includegraphics[height=2.95cm]{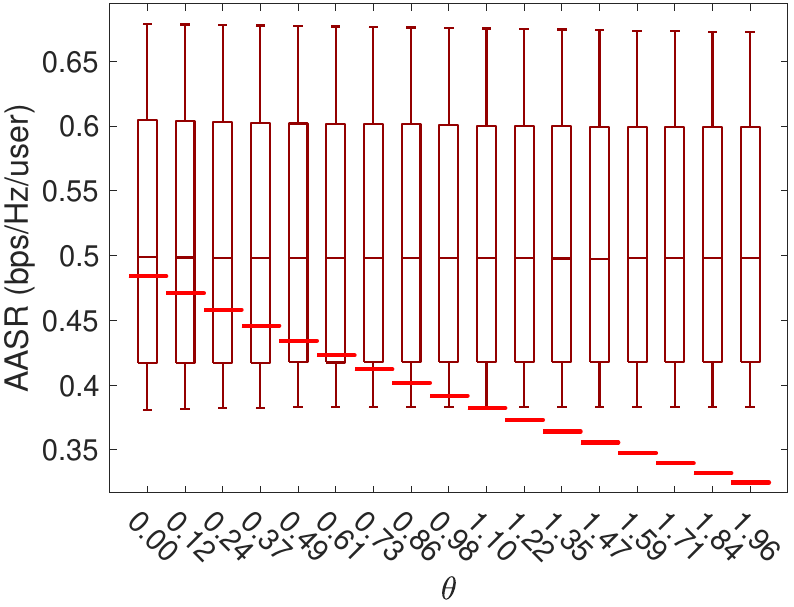}
	}

    \caption{Nominal and conservative characterizations against different values of $\epsilon$ under the sensing-centric design; cf. Figs. \ref{fig:nominal-issues-sensingcentric} and \ref{fig:sensing-centric}. To obtain tight performance lower bounds, when $\epsilon = 0.001$, $\theta = 0.12 \times 10^{-2}$; when $\epsilon = 0.01$, $\theta = 0.12 \times 10^{-1}$; when $\epsilon = 0.1$, $\theta = 0.12$; when $\epsilon = 0.5$, $\theta = 0.60$.}
    \label{fig:different-epsilon}
\end{figure}

\section{Conclusions}\label{sec:conclusion}
This paper investigates uncertainty-aware performance characterization and robust waveform design for ISAC. With uncertainties in nominal communication channels, the phenomenon of unreliable Pareto frontier is noticed and the concept of robust (i.e., conservative) performance characterization is proposed; see Fig. \ref{fig:true-frontiers}. We show that min-max robust waveform design formulations can obtain the conservative performance boundary; see Section \ref{sec:robust-formulation}. To solve these min-max formulations, an approximate solution framework is presented; see Section \ref{sec:solve-robust-problem}. The experiments validate that the proposed methods are effective in robust performance characterization for ISAC and are also computationally efficient; see Section \ref{sec:experiment}.

However, in real-world operations, the true radius $\theta$ of the uncertainty set is unknown, which can therefore be left as an empirically tunable parameter for investigated real-world ISAC systems unless it can be elegantly specified in the channel estimation stage; cf. Figs. \ref{fig:sensing-centric}, \ref{fig:tradeoff-design-TPC}, and \ref{fig:tradeoff-design-PAPC}. 

\bibliographystyle{IEEEtran}
\bibliography{References}

 \begin{IEEEbiography}[{\includegraphics[width=1in,height=1.2in,clip,keepaspectratio]{wsx}}]
{Shixiong Wang} (Member, IEEE) received the B.Eng. degree in detection, guidance, and control technology and the M.Eng. degree in systems and control engineering from the School of Electronics and Information, Northwestern Polytechnical University, China, in 2016 and 2018, respectively, and the Ph.D. degree from the Department of Industrial Systems Engineering and Management, National University of Singapore, Singapore, in 2022.
 He has been a Post-Doctoral Research Associate with the Intelligent Transmission and Processing Laboratory, Imperial College London, London, U.K., since May 2023. He was a Post-Doctoral Research Fellow with the Institute of Data Science, National University of Singapore, from March 2022 to March 2023. His research interests include statistics and optimization theories, with applications in signal processing (especially optimal estimation theory) and machine learning (especially generalization error theory).
 \end{IEEEbiography}

 \begin{IEEEbiography}
 [{\includegraphics[width=1in,height=1.2in,clip,keepaspectratio]{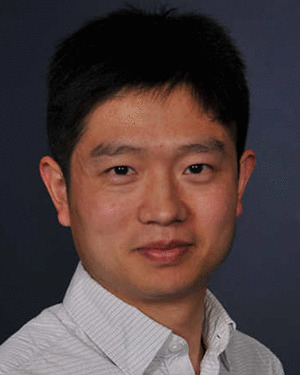}}] 
 {Wei Dai} (Member, IEEE) received the Ph.D. degree from the University of Colorado Boulder, Boulder, Colorado, in 2007. He is currently a Senior Lecturer (Associate Professor) in the Department of Electrical and Electronic Engineering, Imperial College London, London, UK. From 2007 to 2011, he was a Postdoctoral Research Associate with the University of Illinois Urbana-Champaign, Champaign, IL, USA. His research interests include electromagnetic sensing, biomedical imaging, wireless communications, and information theory.
 \end{IEEEbiography}

 \begin{IEEEbiography}
 [{\includegraphics[width=0.95in,height=1.1in,clip]{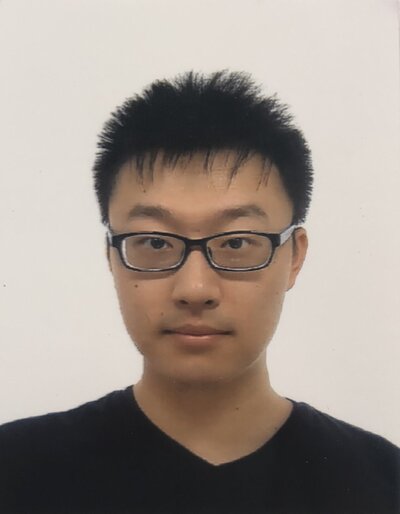}}] 
 {Haowei Wang} is currently a research scientist at Rice-Rick Digitalization PTE. Ltd. Before joining Rice-Rick, he received the B.Eng. degree in industrial engineering from Nanjing University, China, in 2016 and the Ph.D. degree in industrial and systems engineering from National University of Singapore, Singapore, in 2021. His research interest includes simulation optimization and Bayesian optimization under uncertainties.
 \end{IEEEbiography}

 \begin{IEEEbiography}
 [{\includegraphics[width=1in,height=1.2in,clip,keepaspectratio]{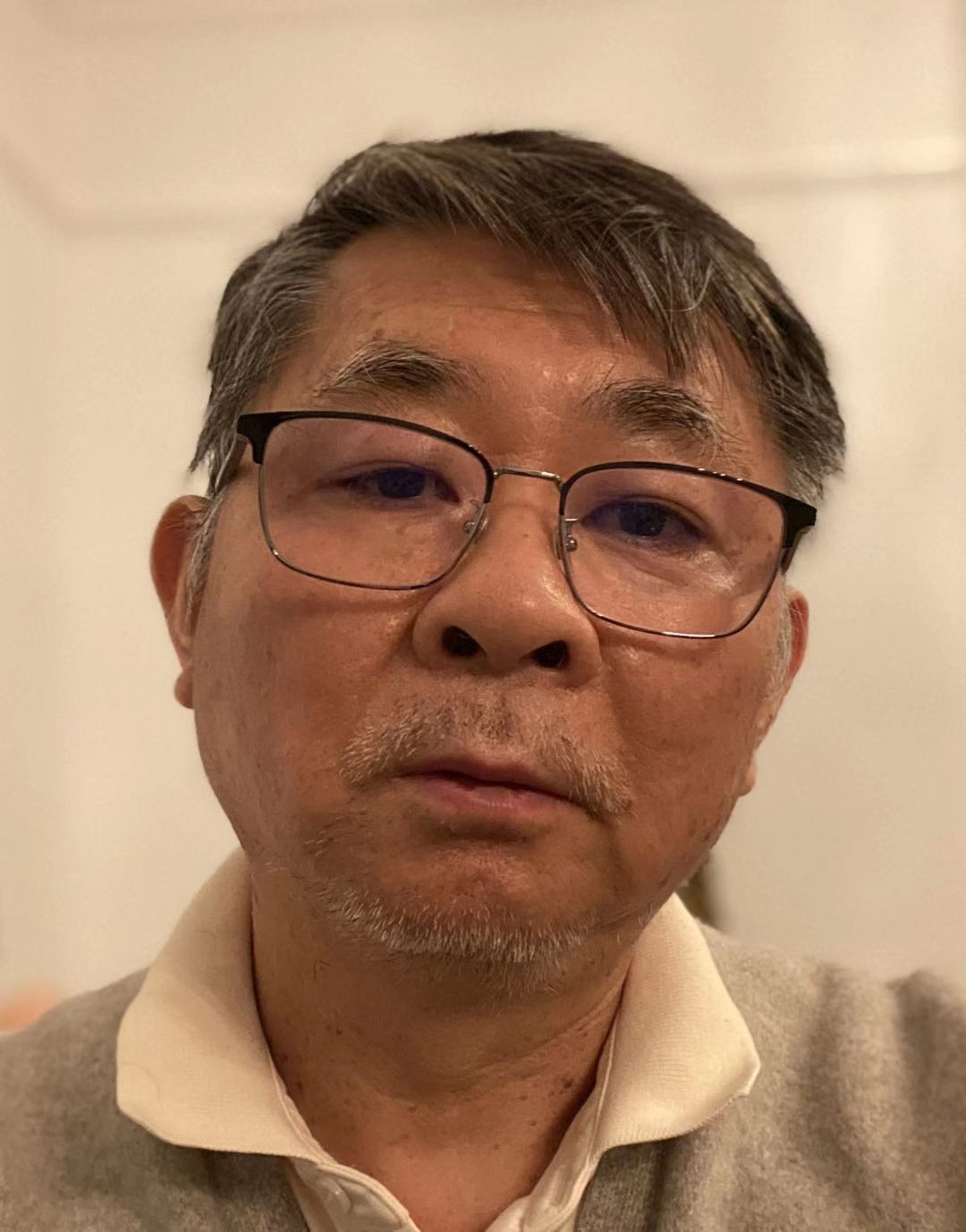}}] 
 {Geoffrey Ye Li} (Fellow, IEEE) is currently a Chair Professor at Imperial College London, UK. Before joining Imperial in 2020, he was a Professor at Georgia Institute of Technology, USA, for 20 years and a Principal Technical Staff Member with AT\&T Labs – Research (previous Bell Labs) in New Jersey, USA, for five years. He made fundamental contributions to orthogonal frequency division multiplexing (OFDM) for wireless communications, established a framework on resource cooperation in wireless networks, and introduced deep learning to communications. In these areas, he has published around 700 journal and conference papers in addition to over 40 granted patents. His publications have been cited over 69,000 times with an H-index of 119. He has been listed as a Highly Cited Researcher by Clarivate/Web of Science almost every year.

 Dr. Geoffrey Ye Li was elected to IEEE Fellow and IET Fellow for his contributions to signal processing for wireless communications. He won 2024 IEEE Eric E. Sumner Award, 2019 IEEE ComSoc Edwin Howard Armstrong Achievement Award, and several other awards from IEEE Signal Processing, Vehicular Technology, and Communications Societies.
 \end{IEEEbiography}


\newpage


\setcounter{page}{1}

\textbf{~~~~~~~~~~~~Online Supplementary Materials}\\

This document contains appendices to the paper, including:
\begin{itemize}
    \item Extensive discussions on Model \eqref{eq:perfect-communication}; see Appendix \ref{append:extended-model};
    \item Proofs of all lemmas, theorems, and propositions;
    \item An intuitive explanation of Method \ref{method:max-form-solution}; see Appendix \ref{subsubsec:motivation};
    \item The specifics of how the simulations in Section \ref{sec:experiment} are conducted; see Appendix \ref{append:simulation-engine}.
\end{itemize}

\appendices
\section{Extensions of Model \captext{\eqref{eq:perfect-communication}}}\label{append:extended-model}
For the signal model $\rmat Y = \mat H \mat X + \rmat W$ in \eqref{eq:communication-model}, it can be written as 
\[
\rmat Y = \mat S + (\mat H \mat X - \mat S) + \rmat W.
\]
To recover $\mat S$ from $\rmat Y$, we aim to minimize the interference signals $\mat H \mat X - \mat S$ over waveforms $\mat X$. This is another reason to minimize MUI energy $\|\mat H \mat X - \mat S\|^2_F$ as in \eqref{eq:perfect-communication}. Below we discuss the case of multi-antenna users and the case of multi-carrier.
\begin{itemize}
    \item \textbf{Multi-Antenna Case}. Suppose that we have $K$ downlink users and each user is equipment with $R$ receive antennas. Then, the channel matrix of each user is $\mat H_k \in \C^{R \times N}$ for every $k \in [K]$, where $N$ is the number of transmit antennas at the base station. As a result, for each user $k$, the base-band signal model is $\rmat Y_k = \mat H_k \mat X + \rmat W_k$, where $\mat X \in \C^{N \times L}$ is the transmitted waveform and $L$ is the frame length. By constructing $\mat Y$, $\mat H$, and $\mat W$ as
    \[
        \rmat Y \defeq 
        \left[
        \begin{array}{cccc}
           \rmat Y_1 \\
           \rmat Y_2 \\
           \vdots \\
           \rmat Y_K
        \end{array}
        \right]
    ,~~
        \mat H \defeq 
        \left[
        \begin{array}{cccc}
           \mat H_1 \\
           \mat H_2 \\
           \vdots \\
           \mat H_K
        \end{array}
        \right]
    ,~~
        \rmat W \defeq 
        \left[
        \begin{array}{cccc}
           \rmat W_1 \\
           \rmat W_2 \\
           \vdots \\
           \rmat W_K
        \end{array}
        \right],
    \]
    the integrated base-band signal model $\rmat Y = \mat H \mat X + \rmat W$ can be obtained. Note that in this case, $\|\mat H \mat X - \mat S\|^2_F$ no longer physically means the multi-user interference (MUI) energy. However, as in \eqref{eq:perfect-communication}, minimizing $\|\mat H \mat X - \mat S\|^2_F$ with respect to $\mat X$ is still the technical focus to improve communication performance (i.e., to reduce the restoration error of the information matrix $\mat S$).

    \item \textbf{Multi-Carrier Case}. Suppose that we have $R$ sub-carriers and for each sub-carrier $r \in [R]$, the base-band signal model is $\rmat Y_r = \mat H_r \mat X_r + \rmat W_r$, where $\rmat Y_r \in \C^{K \times L}$, $\mat H_r \in \C^{K \times N}$, $\mat X_r \in \C^{N \times L}$, and $\rmat W_r \in \C^{K \times L}$; $K$ is the number of downlink single-antenna users, $L$ is the frame length, and $N$ is the number of transmit antennas at the base station. Since every sub-carrier has an independent base-band signal model, all signal processing operations can be applied separately for every base-band model indexed by $r \in [R]$.
\end{itemize}

\section{Proof of Lemma \ref{lem:max-form}}\label{append:proof-max-form-lemma}
\begin{proof}
The first inequality in the lemma is due to the weak min-max property (also known as the min-max inequality), which is unconditionally true for any $\phi$, $\cal H$, and $\cal X$. The second inequality is due to the feasibility of the solution $\mat X^*$ in $\cal X$. This completes the proof.
\stp
\end{proof}

\section{Proof of Lemma \ref{lem:gap}}\label{append:proof-gap}
\begin{proof}
We have
\[
\begin{array}{l}
    \displaystyle \max_{\mat H} \phi(\mat H, \matb X) - \max_{\mat H} \min_{\mat X}  \phi(\mat H, \mat X) \vspace{0.3em} \\
    \quad \quad = \displaystyle \max_{\mat H} \phi(\mat H, \matb X) - \phi(\matb H, \matb X) + \displaystyle \min_{\mat X} \phi(\matb H, \mat X) \\
    \quad \quad \quad \quad  - \displaystyle \max_{\mat H} \min_{\mat X}  \phi(\mat H, \mat X) \\
    \quad \quad \le \big|\displaystyle \max_{\mat H} \phi(\mat H, \matb X) - \phi(\matb H, \matb X)\big| + \\
    \quad \quad \quad \quad \big|\displaystyle \max_{\mat H} \min_{\mat X}  \phi(\mat H, \mat X) - \displaystyle \min_{\mat X} \phi(\matb H, \mat X)\big| \vspace{0.3em}\\
    \quad \quad \le \displaystyle \max_{\mat H} \big|\phi(\mat H, \matb X) - \phi(\matb H, \matb X)\big| + \\
    \quad \quad \quad \quad \displaystyle \max_{\mat H} \big| \min_{\mat X}  \phi(\mat H, \mat X) - \displaystyle \min_{\mat X} \phi(\matb H, \mat X)\big| \\
    \quad \quad \le L_2 \cdot \displaystyle \max_{\mat H} \| \mat H - \matb H \| + L_1 \cdot \displaystyle \max_{\mat H} \| \mat H - \matb H \| \\
    \quad \quad = (L_1 + L_2) \cdot \theta.
\end{array}
\]
This completes the proof.
\stp
\end{proof}

\section{Proof of Theorem \ref{thm:max-form}}\label{append:proof-max-form}
\begin{proof}
First, we consider the max-min counterpart of \eqref{eq:robust-waveform-design}:
\begin{equation}\label{eq:robust-waveform-design-dual}
    \begin{array}{cl}
       \displaystyle \max_{\mat H} \min_{\mat X}   &  \|\mat H \mat X - \mat S\|^2_F \\
       \st  & \displaystyle \frac{1}{L} \mat X \mat X^\H = \mat R,\\
       & \|\mat H - \bar{\mat H}\| \le \theta.
    \end{array}
\end{equation}
For every feasible $\mat H$, the inner sub-problem
$ \min_{\mat X \in \cal X}  \|\mat H \mat X - \mat S\|^2_F$
is solved by
\begin{equation*}\label{eq:sensing-centric-X_H}
    \mat X^*_{\mat H} = \sqrt{L} \cdot \mat F \cdot \mat U \mat I_{N \times L} \mat V^\H,
\end{equation*}
where $\mat U \mat \Sigma \mat V^\H \overset{\text{SVD}}{=} \mat F^\H \mat H^\H\mat S$ and $\mat I_{N \times L} \defeq [\mat I_N, \mat 0_{N \times (L - N)}]$; the $N \times (L - N)$ zero matrix is denoted by $\mat 0_{N \times (L - N)}$; see \cite[Eq. (15)]{liu2018toward}. Note that the optimal solution $\mat X^*_{\mat H}$ depends on $\mat H$, and $\mat X^*_{\mat H}$ may not be unique given $\mat H$. Plugging in $\mat X^*_{\mat H}$ back to \eqref{eq:robust-waveform-design-dual} yields \eqref{eq:max-form}.

Second, according to Lemma \ref{lem:max-form} and Condition \eqref{eq:sensing-centric-thm-condition}, the strong min-max property holds, that is,
\begin{equation}\label{eq:sensing-centric-game}
\begin{array}{l}
    \displaystyle \min_{\mat X \in \cal X}  \max_{\mat H  \in \cal H} \|\mat H \mat X - \mat S\|^2_F = \displaystyle \max_{\mat H  \in \cal H} \min_{\mat X \in \cal X}   \|\mat H \mat X - \mat S\|^2_F \\
    \quad\quad\quad\quad = \displaystyle \max_{\mat H  \in \cal H} \|\sqrt{L} \cdot \mat H \cdot \mat F \cdot \mat U \mat I_{N \times L} \mat V^\H - \mat S\|^2_F.
\end{array}
\end{equation}
This completes the proof. \stp
\end{proof}

\section{Proof of Proposition \captext{\ref{prop:f-continuous}}}\label{append:f-continuous}
One may verify that it is difficult to prove the continuity of $f$ directly using the definition in \eqref{eq:obj-f} because the SVD of a matrix might not be unique; specifically, given $\mat H$, there may exist multiple $\mat U$s and $\mat V$s such that $\mat U \mat \Sigma \mat V^\H = \mat F^\H \mat H^\H\mat S$. The complication arises when $\mat U$ and $\mat V$ correspond to zero singular value(s) in $\mat \Sigma$. We, therefore, investigate the continuity of $f$ from its original definition.
\begin{proof}
    Recall that 
    $
    f(\mat H) \defeq \min_{\mat X \in \cal X} \|\mat H \mat X - \mat S\|^2_F
    $
    where $\cal X \defeq \{\mat X:~\mat X \mat X^\H = L\mat R\}$. 
    First, note that for every $\mat X \in \cal X$ and every $\mat H_1, \mat H_2 \in \cal H$, there exists an upper bound $0 < B_1 < \infty$ such that
    $
        \|\mat H_1 \mat X - \mat S\|_F +\|\mat H_2 \mat X - \mat S\|_F \le B_1
    $. 
    A loose choice can be $B_1 \defeq 2\sqrt{LP_\T} \cdot (\|\bar{\mat H}\|_F + B\theta) + 2\|\mat S\|_F$ where $0 < B < \infty$ is a real-valued constant such that $\|\mat H_1 - \bar{\mat H}\|_F \le B\|\mat H_1 - \bar{\mat H}\|$; the existence of $B$ is guaranteed due to the equivalence of norms on a finite-dimensional linear space. Just note that $|\|\mat H_1\|_F - \|\bar{\mat H}\|_F| \le \|\mat H_1 - \bar{\mat H}\|_F \le B\|\mat H_1 - \bar{\mat H}\| \le B\theta$. The same argument also holds for $\mat H_2$. Hence, for every $\mat H_1, \mat H_2 \in \cal H$, we have
    \[
        \begin{array}{l}
            |f(\mat H_1) - f(\mat H_2)| \\
            \quad\quad = \Big|\displaystyle \min_{\mat X \in \cal X} \|\mat H_1 \mat X - \mat S\|^2_F - \displaystyle \min_{\mat X \in \cal X} \|\mat H_2 \mat X - \mat S\|^2_F\Big| \\
            \quad\quad \le \displaystyle \max_{\mat X \in \cal X} \Big| \|\mat H_1 \mat X - \mat S\|^2_F - \|\mat H_2 \mat X - \mat S\|^2_F\Big| \\
            \quad\quad = \displaystyle \max_{\mat X \in \cal X} \Big| \Big(\|\mat H_1 \mat X - \mat S\|_F +\|\mat H_2 \mat X - \mat S\|_F\Big) \cdot \\
            \quad \quad \quad \quad \quad \quad \Big(\|\mat H_1 \mat X - \mat S\|_F - \|\mat H_2 \mat X - \mat S\|_F\Big) \Big| \\
            \quad\quad \le \displaystyle \max_{\mat X \in \cal X} \Big| \|\mat H_1 \mat X - \mat S\|_F +\|\mat H_2 \mat X - \mat S\|_F \Big| \cdot \\
            \quad \quad \quad \quad \quad \quad \Big|\|\mat H_1 \mat X - \mat S\|_F - \|\mat H_2 \mat X - \mat S\|_F \Big| \\
            \quad\quad \le B_1 \cdot \displaystyle \max_{\mat X \in \cal X} \Big\| \mat H_1 \mat X - \mat H_2 \mat X\Big\|_F \\
            \quad\quad \le B_1 \cdot \Big\| \mat H_1 - \mat H_2\Big\|_F \cdot \displaystyle \max_{\mat X \in \cal X} \Big\|\mat X\Big\|_F \\
           \quad\quad  = B_1 \sqrt{LP_\T} \cdot \| \mat H_1 - \mat H_2 \|_F \\
           \quad\quad \le B_1 \sqrt{LP_\T} \cdot B \| \mat H_1 - \mat H_2 \|.
        \end{array}
    \]
    
    Note that $\|\mat X\|_F = \sqrt{\Tr \mat X^\H \mat X} = \sqrt{ L \cdot \Tr \mat R} = \sqrt{L P_\T}$.
    \stp
\end{proof}

\section{Proof of Proposition \captext{\ref{prop:f-other-properties}}}\label{append:f-other-properties}
\begin{proof}
The non-convexity and non-concavity of the objective function $f(\mat H) = \min_{\mat X:~\mat X \mat X^\H = L \mat R} \|\mat H \mat X - \mat S\|^2_F$ can be verified through the definitions of convexity and concavity by constructing counterexamples, completing the proof. \stp
\end{proof}

The example below specifically justifies the claim above.
\begin{example}
Let the nominal channel be $\bar H \defeq 0.9 + 0.5j$, $\epsilon \defeq 0.05$, $R \defeq 1$, and $S \defeq 1 + 1j$. Suppose that $H_1$ and $H_2$ are generated according to the following formulas:
\[
\begin{array}{cc}
    H_1 &= \bar H + \epsilon \cdot \Delta_1, \\
    H_2 &= \bar H + \epsilon \cdot \Delta_2,
\end{array}
\]
where $\Delta_1$ and $\Delta_2$ are mutually-independent standard complex Gaussian variables. One may verify that $f(0.5 H_1 + 0.5 H_2) \le 0.5 f(H_1) + 0.5 f(H_2)$ for some realizations of $H_1$ and $H_2$, while $f(0.5 H_1 + 0.5 H_2) \ge 0.5 f(H_1) + 0.5 f(H_2)$ for other realizations. Therefore, $f$ is neither concave nor convex.
\stp
\end{example}

\section{Proof of Proposition \captext{\ref{prop:f-upper-bound}}}\label{append:f-upper-bound}
\begin{proof}
    The upper bound is obtained by plugging $\bar{\mat X}$ into the optimization problem $f(\mat H) \defeq \min_{\mat X:~\mat X \mat X^\H = L \mat R} \|\mat H \mat X - \mat S\|^2_F$ because $\bar{\mat X}$ is a feasible solution. The lower bound is obtained by the reverse triangle inequality, i.e., $\|\mat H \mat X - \mat S\|_F \ge |\|\mat H \mat X\|_F - \|\mat S\|_F|$; note that $\|\mat H \mat X\|_F = \sqrt{L\Tr[\mat H^\H \mat H \mat R]}$ because $\mat X \mat X^\H = L \mat R$. The upper bound is positive-definite quadratic (thus convex) in $\opvec(\mat H)$ because $\bar{\mat X} \bar{\mat X}^\H = L \mat R$ and $\mat R$ is positive definite. Since at the center $\bar{\mat H}$ of $\cal H$ we have $f(\bar{\mat H}) = \overline{f}(\bar{\mat H})$, the upper bound is tight. Also, the reverse triangle inequality is tight, and therefore, the lower bound is tight. The third claim is obvious. We show the uniform bound in the fourth claim below.
    We have 
    \[
        \begin{array}{cl}
            \bar f(\mat H) - f(\mat H) &= \bar f(\mat H) - \bar{f}(\bar{\mat H}) + f(\bar{\mat H}) - f(\mat H) \\
            & \le |\bar f(\mat H) - \bar{f}(\bar{\mat H})| + |f(\bar{\mat H}) - f(\mat H)| \\
            & \le L_f \|\mat H - \bar{\mat H}\| + L_f \|\mat H - \bar{\mat H}\| \\
            & \le 2L_f \cdot \theta.
        \end{array}
    \]
    This completes the proof.
    \stp
\end{proof}

\section{Proof of Proposition \captext{\ref{prop:max-form-solution}}}\label{append:max-form-solution}
\begin{proof}
    The Hessian of the objective function in terms of $\opvec(\mat H)$ is
    \[
    \begin{array}{l}
    L((\mat F \mat A)^\T \otimes \mat I_K)^\H ( (\mat F \mat A)^\T \otimes \mat I_K) \\
    \quad \quad \quad = L\{[(\mat F \mat A)^\T]^\H \otimes \mat I^\H_K\} \{ (\mat F \mat A)^\T \otimes \mat I_K\} \\
    \quad \quad \quad = L[(\mat F \mat A)^\T]^\H (\mat F \mat A)^\T \otimes \mat I^\H_K \mat I_K \\
    \quad \quad \quad = L[(\mat F \mat A)(\mat F \mat A)^\H]^\T \otimes \mat I^\H_K \mat I_K \\
    \quad \quad \quad = L \mat R^\T \otimes \mat I_K \succ \mat 0,
    \end{array}
    \]
    because the beampattern-inducing matrix $\mat R$ is positive definite.
    Hence, the objective function of \eqref{eq:max-form-trans-vec} is positive definite in $\opvec(\mat H)$, and therefore, convex in $\opvec(\mat H)$. The convexity of the objective function in terms of $\opvec(\mat A)$ can be shown in a similar way: just note that the objective function can only be shown to be positive semi-definite because the Hessian $L(\mat I_L \otimes \mat H \mat F)^\H(\mat I_L \otimes \mat H \mat F)$ is not necessarily positive definite. 
    
    The feasible region of $\mat H$ is convex because the norm constraint is convex and the equality constraint is linear. The feasible region of $\mat A$ is convex because it is built by a linear equality constraint. This completes the proof.
    \stp
\end{proof}

\section{Intuitive Understanding of Solution Method \captext{\ref{method:max-form-solution}}}\label{subsubsec:motivation}
Motivated by Lemma \ref{lem:max-form} and Condition \eqref{eq:sensing-centric-thm-condition}, we start with employing the upper bound $\overline{f}(\mat H)$ in Proposition \ref{prop:f-upper-bound} because $\overline{f}(\mat H)$ is close to $f(\mat H)$ if the radius $\theta$ of the uncertainty set $\cal H$ is small; recall from Subsection \ref{subsec:budget-uncertainty-set} that the radius of uncertainty set should be controlled to small. The other reason to employ $\bar{\mat X}$ and $\overline{f}(\mat H)$ is that $\bar{\mat X}$ exists for any specified uncertainty set $\cal H$ with any radius $\theta \ge 0$ because $\cal H$ is centered at $\bar{\mat H}$.

\textbf{Step 1}. Maximize the upper bound $\overline{f}(\mat H)$ to obtain $\mat H^*$:
    \[
        \mat H^* = \argmax_{\mat H \in \cal H} \|\mat H \bar{\mat X} - \mat S\|^2_F.
    \]
\textbf{Interpretation}. Step 1 gives a feasible approximation solution $(\bar{\mat X},\mat H^*)$ to Problem \eqref{eq:max-form}, and therefore Problem \eqref{eq:robust-waveform-design}, in the sense of Fact \ref{fact:upperbound}. To be specific, we have
    \begin{enumerate}[a)]
        \item Bound of the Truly Optimal Cost and the True Cost: 
        $$
        \min_{\mat X \in \cal X} \|\mat H_0 \mat X - \mat S\|^2_F \le \|\mat H_0 \bar{\mat X} - \mat S\|^2_F \le \|\mat H^* \bar{\mat X} - \mat S\|^2_F;
        $$
        \item Bound of the Nominally Optimal Cost: 
        $$
        \min_{\mat X \in \cal X} \|\bar{\mat H} \mat X - \mat S\|^2_F = \|\bar{\mat H} \bar{\mat X} - \mat S\|^2_F \le \|\mat H^* \bar{\mat X} - \mat S\|^2_F;
        $$
    \end{enumerate}
    where the nominally optimal solution $\bar{\mat X} \defeq  \sqrt{L} \mat F \bar{\mat U} \mat I_{N \times L} \bar{\mat V}^\H$ solves the nominal waveform design problem $\min_{\mat X \in \cal X} \|\bar{\mat H} \mat X - \mat S\|^2_F$ and $\cal X = \{\mat X:\mat X \mat X^\H = L \mat R\}$.
However, maximizing the upper-bound $\overline{f}(\mat H)$ gives an extremely conservative solution. Specifically, the robust cost $\|\mat H^* \bar{\mat X} - \mat S\|^2_F$ would be overly large than the true cost $\|\mat H_0 \bar{\mat X} - \mat S\|^2_F$ and the truly optimal cost $\min_{\mat X \in \cal X} \|\mat H_0 \mat X - \mat S\|^2_F$. Hence, a refinement is needed.

\textbf{Step 2}.
    Refine the robust cost $\|\mat H^* \bar{\mat X} - \mat S\|^2_F$: i.e.,
    \[
        \mat X^* = \argmin_{\mat X \in \cal X} \|\mat H^* \mat X - \mat S\|^2_F.
    \]
\textbf{Consequence}. However, the resulting cost $\|\mat H^* \mat X^* - \mat S\|^2_F$ cannot be guaranteed to upper bound the truly optimal cost, the true cost, and the nominally optimal cost. To be specific, it is unnecessary to have
    \begin{enumerate}[a)]
        \item Bound of the Truly Optimal Cost and the True Cost: 
        $$
        \min_{\mat X \in \cal X} \|\mat H_0 \mat X - \mat S\|^2_F \le \|\mat H_0 \mat X^* - \mat S\|^2_F \stackrel{\text{\scriptsize ?}}{\le} \|\mat H^* \mat X^* - \mat S\|^2_F.
        $$
        \item Bound of the Nominally Optimal Cost: 
        $$
        \min_{\mat X \in \cal X} \|\bar{\mat H} \mat X - \mat S\|^2_F = \|\bar{\mat H} \bar{\mat X} - \mat S\|^2_F \stackrel{\text{\scriptsize ?}}{\le} \|\mat H^* \mat X^* - \mat S\|^2_F.
        $$
    \end{enumerate}
(Note that $\mat H^*$ is obtained by the maximization at $\matb X$.) Thus, we propose a remedy strategy.

\textbf{Step 3}. Design a mechanism to let $\mat X^* = \bar{\mat X}$.\\
\textbf{Interpretation}. If it technically holds that $\mat X^* = \bar{\mat X}$ (or $\mat X^* \approx \bar{\mat X}$), then the conservativeness of the solution $(\mat H^*, \bar{\mat X})$ will be controlled, and equivalently, the feasibility of the solution $(\mat H^*, {\mat X}^*)$ in the sense of Fact \ref{fact:upperbound} will be guaranteed.

The three algorithmic steps above provide an intuitive understanding of Method \ref{method:max-form-solution}. 

\section{Proof of Lemma \captext{\ref{lem:stacking}}}\label{append:real-domain-representation}
\begin{proof}
We have
$
(\mat \Gamma + \mat \Theta j)(\vec a + \vec b j) = (\mat \Gamma \vec a - \mat \Theta \vec b) + (\mat \Theta \vec a + \mat \Gamma \vec b )j
$. 
Hence, the stacking scheme to construct real quantities from complex quantities is given in the statement of the lemma.
This completes the proof.
\stp
\end{proof}

\section{Proof of Lemma \captext{\ref{lem:positive-definite}}}\label{append:positive-definite}
\begin{proof}
According to Proposition \ref{prop:max-form-solution}, we immediately have $\mat C^\T \mat C \succ \mat 0$, and therefore, $p(\vec h)$ is positive definite in $\vec h$. In addition, we have
$
   p(\vec h) - \frac{\mu}{2} \vec h^\T \vec h = \vec h^\T (\mat C ^\T \mat C - \frac{\mu}{2} \mat I_{2KN}) \vec h - 2 \vec s^\T \mat C \vec h + \vec s^\T \vec s
$, 
where $\mu > 0$ is a positive number. Since $\mat C^\T \mat C \succ \mat 0$, there exist $\mu > 0$ such that 
$
\mat C ^\T \mat C - \frac{\mu}{2} \mat I_{2KN} \succeq \mat 0
$. 
As a result, the function $p(\vec h) - \frac{\mu}{2} \vec h^\T \vec h$ can be a convex function for some $\mu > 0$, which means that the objective function $p(\vec h)$ of \eqref{eq:H-compact} is strongly convex.
\stp
\end{proof}

\section{Proof of Proposition \captext{\ref{prop:convex-quadratic-max}}}\label{append:convex-quadratic-max}
\begin{proof}
    According to 
    [49, Thm.~1], a point $\vec y$ is a globally optimal solution to \eqref{eq:H-compact} if and only if
    $
        \ip{\nabla p(\vec y)}{~\vec h - \vec y} \le 0
    $, for every $\vec h$ such that $\|\vec h - \bar{\vec h}\| \le \theta$, 
    where $\nabla p(\vec y)$ denotes the gradient of $p$ evaluated at $\vec y$. Therefore, if we have
    $
        \max_{\vec y \in \cal Y} \max_{\vec h:\|\vec h - \bar{\vec h}\| \le \theta} \ip{\nabla p(\vec y)}{~\vec h - \vec y}  \le 0
    $, 
    for some dedicated $\cal Y$, then every $\vec y$ that solves the above optimization is a global maximum of \eqref{eq:H-compact}. This proposition, which is adapted from [49, Algo.~1] for Problem \eqref{eq:H-compact}, formalizes the above intuition. The global optimality and convergence are therefore guaranteed by [49, Thm.~4]. For rigorous and complete technical proof, see [49]; just note that in \eqref{eq:sub-prob-1}, the equality constraint can be changed to its convex inequality counterpart because the optima of linear objective functions lie on the boundary of feasible regions.  \stp
\end{proof}

\section{Proof of Proposition \captext{\ref{prop:sub-prob-1}}}\label{append:sub-prob-1}
\begin{proof}
Since $\mat C^\T \mat C$ is positive definite, the invertible $\mat M$ exists; see Appendix \ref{append:positive-definite}. Problem \eqref{eq:sub-prob-1} is equivalent to
    \[
        \begin{array}{ccl}
            \vec y_k =  & \displaystyle \argmax_{\vec y \in \R^{2KN}} &(\vec h^\T_{k-1} \mat C^\T \mat C - \vec s^\T \mat C) \cdot \vec y \\
            & \st &\|\mat M \vec y - \mat M^{-\T} \mat C^\T \vec s\|_2 = \gamma.
        \end{array}
    \]
The above display is further equivalent to
    \[
        \begin{array}{cl}
            \displaystyle \max_{\vec z \in \R^{2KN}} & (\vec h^\T_{k-1} \mat C^\T \mat C - \vec s^\T \mat C) \mat M^{-1} \cdot (\gamma \vec z + \mat M^{-\T} \mat C^\T \vec s), \\
            
            \st & \|\vec z\|_2 = 1.
        \end{array}
    \]
    Due to Cauchy–Schwarz inequality, the maximum is
    \[
        \vec z^* = \frac{\mat M^{-\T} (\mat C^\T \mat C \vec h_{k-1} -  \mat C^\T \vec s)}{\| \mat M^{-\T} (\mat C^\T \mat C \vec h_{k-1} -  \mat C^\T \vec s) \|_2}.
    \]
    Then, a maximum of \eqref{eq:sub-prob-1} is 
    $
        \vec y^* =  \mat M^{-1} \cdot (\gamma \vec z^* + \mat M^{-\T} \mat C^\T \vec s)
    $. 
    This completes the proof.
    \stp
\end{proof}

\section{Proof of Proposition \captext{\ref{prop:sub-prob-2}}}\label{append:sub-prob-2}
\begin{proof}
    Problem \eqref{eq:sub-prob-2} is equivalent to
    \[
        \begin{array}{cl}
            \displaystyle \max_{\vec z \in \R^{2KN}} & (\vec y^\T_k \mat C^\T \mat C - \vec s^\T \mat C) \cdot (\theta \vec z + \bar{\vec h}),~\st~\|\vec z\|_2 = 1.
        \end{array}
    \]
    Due to Cauchy–Schwarz inequality, the maximum is
    \[
        \vec z^* = \frac{\mat C^\T \mat C \vec y_k - \mat C^\T \vec s}{\|\mat C^\T \mat C \vec y_k - \mat C^\T \vec s\|_2}.
    \]
    As a result, a maximum of \eqref{eq:sub-prob-2} is 
    $
        \vec h^* =  \theta \vec z^* + \bar{\vec h}
    $. 
    \stp
\end{proof}

\section{Proof of Proposition \captext{\ref{thm:solution-remedy-problem}}}\label{append:UV-solution-remedy}
\begin{proof}
In terms of $\mat U$, we can rewrite \eqref{eq:max-form-trans-vec-sol-UV} as
\begin{equation}\label{eq:max-form-trans-vec-sol-UV-compact-U}
    \begin{array}{cl}
        \displaystyle \min_{\mat U}  \|\mat U \mat A_1 - \mat B_1\|^2_F,~~\st~~\mat U \mat U^\H = \mat I_N,
    \end{array}
\end{equation}
where $\mat A_1$ and $\mat B_1$ are defined in Proposition \ref{thm:solution-remedy-problem}. Problem \eqref{eq:max-form-trans-vec-sol-UV-compact-U} is a standard orthogonal Procrustes problem whose closed-form solution is given in Proposition \ref{thm:solution-remedy-problem}; see technical details in [50].

In terms of $\mat V$, we can rewrite \eqref{eq:max-form-trans-vec-sol-UV} as
\begin{equation}\label{eq:max-form-trans-vec-sol-UV-compact-V}
    \begin{array}{cl}
        \displaystyle \min_{\mat V}  \|\mat V \mat A_2 - \mat B_2\|^2_F,~~\st~~\mat V \mat V^\H = \mat I_L,
    \end{array}
\end{equation}
where $\mat A_2$ and $\mat B_2$ are defined in Proposition \ref{thm:solution-remedy-problem}. Problem \eqref{eq:max-form-trans-vec-sol-UV-compact-V} is a standard orthogonal Procrustes problem whose closed-form solution is given in Proposition \ref{thm:solution-remedy-problem}; see technical details in [50].

In terms of $\mat \Sigma$, Problem \eqref{eq:max-form-trans-vec-sol-UV} is a positive-definite-quadratic convex program. Note that the space $\Omega_{N \times L}$ of the matrices with diagonal, non-negative, and real entries is convex. The gradient of the objective function of \eqref{eq:max-form-trans-vec-sol-UV} with respect to $\mat \Sigma$ is 
\[
2\mat U^\H(\mat U \mat \Sigma \mat V^\H - \mat F^\H \mat H^{*\H}\mat S) \mat V = 2 \mat \Sigma - 2 \mat U^\H  \mat F^\H \mat H^{*\H} \mat S\mat V.
\] 
Therefore, the optimal solution is given by the projected point of $\mat U^\H \mat F^\H \mat H^{*\H} \mat S \mat V$ onto $\Omega_{N \times L}$. 

The convergence proof of the iteration process is straightforward. We rewrite \eqref{eq:max-form-trans-vec-sol-UV} in shorthand as
\[
    \displaystyle \min_{\mat U, \mat \Sigma, \mat V} \psi(\mat U, \mat \Sigma, \mat V),
\]
where the constraints of the variables $(\mat U, \mat \Sigma, \mat V)$ are implicitly defined by \eqref{eq:max-form-trans-vec-sol-UV}. Let $(\mat U^0, \mat \Sigma^0, \mat V^0)$ denote the initial values of the variables and $(\mat U^r, \mat \Sigma^r, \mat V^r)$ the values at the $r^\th$ iteration. Following the defined iteration process in Proposition \ref{thm:solution-remedy-problem}, we have
\[
\begin{array}{cl}
    \psi(\mat U^0, \mat \Sigma^0, \mat V^0) & \ge \psi(\mat U^1, \mat \Sigma^0, \mat V^0) \\
    &\ge \psi(\mat U^1, \mat \Sigma^0, \mat V^1) \\
    &\ge \psi(\mat U^1, \mat \Sigma^1, \mat V^1) \\
    & \vdots \\
    &\ge \psi(\mat U^r, \mat \Sigma^r, \mat V^r),
\end{array}
\]
for every $r \ge 1$. Therefore, the sequence $\{\psi(\mat U^r, \mat \Sigma^r, \mat V^r)\}$, which is indexed by $r$, is decreasing as $r$ increases. Since $\psi(\mat U, \mat \Sigma, \mat V) \ge 0$ for every feasible $(\mat U, \mat \Sigma, \mat V)$, according to the monotone convergence theorem, $\psi(\mat U^r, \mat \Sigma^r, \mat V^r)$ monotonically converges to a non-negative value as $r$ goes to infinity.

Note that $(\mat U^r, \mat \Sigma^r, \mat V^r)$ is not guaranteed to converge because at the $r^\th$ iteration, the values of $(\mat U^r, \mat \Sigma^r, \mat V^r)$ may not be unique. This completes the proof.
\stp
\end{proof}

\section{Proof of Proposition \captext{\ref{prop:g-continuous}}}\label{append:g-continuous}
\begin{proof}
According to Lemma \ref{lem:max-form}, the upper bound function $\bar g(\mat H)$ is straightforward to obtain because $\matb X$ is a feasible solution in $\cal X$.

The Lipschitz continuity of $g$ is shown as follows. For every $\mat H_1, \mat H_2 \in \cal H$, we have
\[
\begin{array}{l}
   |g(\mat H_1) - g(\mat H_2)| \\
   \quad\quad = \Big|\displaystyle \min_{\mat X \in \cal X} \rho \|\mat H_1 \mat X - \mat S\|^2_F  + (1 - \rho) \|\mat X  - \mat X_s\|^2_F - \\
   \quad \quad \quad \quad \displaystyle \min_{\mat X \in \cal X} \rho \|\mat H_2 \mat X - \mat S\|^2_F  + (1 - \rho) \|\mat X  - \mat X_s\|^2_F\Big| \\
   \quad\quad \le \displaystyle \max_{\mat X \in \cal X} \Big| \big[\rho \|\mat H_1 \mat X - \mat S\|^2_F  + (1 - \rho) \|\mat X  - \mat X_s\|^2_F\big] - \\
   \quad \quad \quad \quad \big[\rho \|\mat H_2 \mat X - \mat S\|^2_F  + (1 - \rho) \|\mat X  - \mat X_s\|^2_F\big] \Big| \\
   \quad\quad = \rho \cdot \displaystyle \max_{\mat X \in \cal X} \Big|  \|\mat H_1 \mat X - \mat S\|^2_F - \|\mat H_2 \mat X - \mat S\|^2_F\Big| \\
   \quad\quad \le \rho \cdot L_f \cdot \| \mat H_1 - \mat H_2 \|.
\end{array}
\]

Using the same manipulations, $\bar g (\mat H)$ can also be shown to be $\rho L_f$-Lipschitz continuous. This completes the proof. \stp
\end{proof}

\section{Details on Experiments}\label{append:simulation-engine}
In this appendix, we detail the logic flow upon which the shared source codes are written.
\begin{algorithm}[htbp]
    \caption{Simulation Engine}
    \label{algo:simulation-engine}
    \begin{flushleft}
        \justifying
        \textbf{Definition}: Let $I$ denote the number of Monte--Carlo episodes.\\
        \textbf{Remark}: The notation $0:0.01:0.2$ means a discrete vector starting at $0$, ending with $0.2$, and uniformly spaced with $0.01$.
    \end{flushleft}
    \begin{algorithmic}[1]
        \Require $I = 1000$
            \State // \bfit{Stage 1: Engine Initialization}
            \State $N \leftarrow 16$, $K \leftarrow 4$, $L \leftarrow 30$, $f \leftarrow 5.9\times 10^{9}$, $P_{\T} \leftarrow 2.5$, $\theta \leftarrow 0:0.01:0.2$
            \State Generate $\mat H_{\text{ref}}$
            \State Design Perfect-Sensing Waveform $\mat X_s$
            \State // \bfit{Stage 2: Offline Design Using Practically Available Nominal Channel}
            \State Generate Constellation $\mat S$
            \State Generate Nominal Channel $\bar{\mat H}$ Using \eqref{eq:simulation-engine}
            \State Design Nominally Optimal Waveform $\bar{\mat X}$ Using $\bar{\mat H}$
            \State Calculate Nominally Estimated AASR $R_{\bar{\mat H}, \bar{\mat X}}$
            \State Design Robust Waveform $\mat X^*$ Using $\bar{\mat H}$ for Every $\theta$ (NB: $\mat X^*$ depends on $\theta$)
            \State Calculate Robustly Estimated AASR $R_{\mat H^*, \mat X^*}$
            \State // \bfit{Stage 3: Online Test Using Random And Practically Unknown True Channel}
            \State $i \leftarrow 0$
            \While {true}
                \State // \bfit{Calculate True AASRs}
                \State Uniformly Generate $\mat H_0$ Using \eqref{eq:simulation-engine}
                \State Calculate True AASR $R_{\mat H_0, \bar{\mat X}}$ at Nominally Optimal Waveform $\bar{\mat X}$
                \State Calculate True AASR $R_{\mat H_0, \mat X^*}$ at Robust Waveform $\mat X^*$ for Every $\theta$ (NB: $\mat X^*$ depends on $\theta$)
                \State // \bfit{Next Episode}
                \State $i \leftarrow i + 1$
                \State // \bfit{End of Simulation}
                \If {$i > I$}
                    \State {\bf break while}
                \EndIf
            \EndWhile
        \Ensure AASRs for all $I$ episodes (used for box plots)
    \end{algorithmic}
\end{algorithm}

\section{Additional Results on Different \captext{$\epsilon$}}\label{append:add-results-epsilon}
The gap defined in Lemma \ref{lem:gap}, i.e., the value of $\|\mat H^* \matb X - \mat S\|^2_F - \|\mat H^* \mat X^* - \mat S\|^2_F$ (recall Methods \ref{method:max-form-solution} and \ref{method:max-form-solution-another}), is shown in Table \ref{tab:gap}. The gap is statistically small even for large $\epsilon$.
\begin{table}[!htbp]
\centering
\caption{Gap in Lemma \ref{lem:gap}}
\begin{tabular}{ccccccc}
\hline
 $\epsilon $  &  0.1  & 0.25  & 0.5 \\
 \hline
 \tabincell{c}{Gap}          &  $0.055 \pm 0.031$  & $0.234 \pm 0.089$ &  $0.706 \pm 0.188$ \\
 \hline
 \multicolumn{4}{l}{\footnotesize Note: Format: mean $\pm$ std; when $\epsilon < 0.1$, the gap is almost zero.}
\end{tabular}
\label{tab:gap}
\end{table}

{\small
\section*{References}
\begin{hangparas}{.25in}{1}
\noindent [49] R. Enhbat, “An algorithm for maximizing a convex function over a simple set,” Journal of Global Optimization, vol. 8, pp. 379–391, 1996.

\noindent [50] R. Everson, “Orthogonal, but not orthonormal, procrustes problems,” Advances in Computational Mathematics, vol. 3, no. 4, 1998.
\end{hangparas}
}
\end{document}